%% file: paper.tex
\documentclass[11pt]{article}

\usepackage{packages}
\usepackage{theorems}
\usepackage{notation}

\usetikzlibrary{arrows.meta, patterns.meta}
\pgfplotsset{compat=1.18}

\input{figures/strong-net-figure-styles}

\usepackage{color-edits}
\addauthor{g}{blue}
\addauthor{u}{magenta}

\newcommand*{\pF}{\mathsf{F}}
\newcommand*{\pR}{\mathsf{R}}
\newcommand*{\pP}{\mathsf{P}}
\newcommand*{\Eof}[1]{\mathring{\mathsf E}\paren{#1}}
\newcommand*{\Wof}[1]{\mathring{\mathsf W}\paren{#1}}
\newcommand*{\Nof}[1]{\mathring{\mathsf N}\paren{#1}}
\newcommand*{\Sof}[1]{\mathring{\mathsf S}\paren{#1}}
\DeclareMathOperator{\plu}{plu}
\DeclareMathOperator{\West}{West}
\DeclareMathOperator{\East}{East}
\DeclareMathOperator{\North}{North}
\DeclareMathOperator{\South}{South}

\title{Condorcet-Winning Sets and Peer Selection\\ in Planar Metric Elections}
\author{Gabriel de Azevedo\footnote{Cornell University. Email: gm587@cornell.edu} \  and Ulysse Hennebelle\footnote{Cornell Tech, Cornell University. Email: uh34@cornell.edu}}

\begin{document}
\maketitle

\begin{abstract}
	In ranked-choice voting, a Condorcet-winning set is a group of candidates for which no outside candidate is preferred to every member of the group by a majority of voters.
	We study Condorcet-winning sets in planar metric elections, where voters rank candidates according to their distance under a given norm. We formulate general metric elections, peer selection, and the one-round Voronoi game as instances of a two-player Stackelberg game and place these problems in a common hierarchy. We also introduce a new variant, which we call \emph{strong peer selection}.

	Our main result concerns peer selection under the $\ell_1$ and $\ell_\infty$ norms. 
	We prove that every planar instance admits a Condorcet-winning set of size at most three, even under strong peer selection.
	This follows from a new result for strong rectangular $\varepsilon$-nets. We show that, for every set of points in the plane, one can choose at most three input points that intersect every axis-parallel rectangle containing more than half of the points, improving the previous threshold of $9 / 16$ due to \citet{AshokAG14}.
	Under the $\ell_2$ norm, we prove that every planar metric election admits a Condorcet-winning set of size at most four, improving the general bound of five due to \citet{SongNL26}. 
    Finally, we give new norm-independent bounds for the one-round Voronoi game.

\end{abstract}

\section{Introduction}
Legislatures, juries, boards, and committees are practical cases in which a task is delegated to a group rather than a single elected individual. 
Beyond the idea that deliberation may improve the quality of a decision, a group can satisfy desiderata that a single individual cannot.
In legislatures and juries, representativeness is central. A multi-member body can reflect a broader range of opinions and viewpoints than any one of its members alone.
Another desirable property is stability, which asks that no outside candidate should be preferred to every elected member by a strict majority of voters.
When the body consists of a single member, this is the same as electing a candidate who defeats every other candidate in a pairwise majority vote.
In 1785, Condorcet showed that such a candidate may not exist~\cite{Condorcet85}.
Whenever one does, it is called a \emph{Condorcet winner}.
This impossibility result does not make this majority guarantee less desirable.
Instead, it suggests that selecting a single candidate may be too restrictive.
\citet{Elkind15} proposed extending the requirement from a candidate to a set of candidates.
A set of candidates is a \emph{Condorcet-winning set} if, for every candidate outside the set, at most half of the voters prefer that candidate to every member of the set.
In arbitrary elections, two candidates may not suffice, whereas five candidates always do~\cite{Elkind15,SongNL26}.

One place where stability is quite valuable is in routine collective decisions, such as when a neighborhood association elects representatives from its residents or a department forms a committee from its faculty.
These decisions occur far more frequently and shape much of the day-to-day governance of communities and organizations. 
These are instances of \emph{peer selection}, where the voters form the set of eligible candidates~\cite{AlonFPT11,AzizLMRW16}. 
Because these decisions often involve small working groups, the minimum size needed to guarantee a majority is even more relevant.

In all those cases, there is a question of how preferences are generated.
We study \emph{spatial models of voting}, in which voters and candidates are represented as points in a policy space, and voters prefer candidates who are closer to them~\cite{Hotelling29,Black48}.
The plane is the simplest spatial model in which two issue dimensions can vary independently.
It can represent, for example, an economic dimension together with a social or sociocultural dimension, allowing voters who disagree on one set of issues to agree on another.
The choice of metric determines how disagreement across these dimensions is aggregated and can therefore induce different rankings for the same placement of voters and candidates.
We focus on the $\ell_p$ norms, particularly $\ell_1$, $\ell_2$, and $\ell_\infty$.
Under the $\ell_1$ norm, disagreements along the two dimensions add linearly.
Under the $\ell_2$ norm, a large disagreement along one dimension carries more weight than the same total disagreement distributed more evenly across the dimensions.
Under the $\ell_\infty$ norm, distance is determined entirely by the dimension of greatest disagreement, and smaller disagreements along the other dimension do not affect the ranking.
Thus, even on the same policy space, the choice of norm captures different ways in which voters aggregate disagreement.

On the technical side, the planar setting connects these questions to established problems in discrete and computational geometry.
The theory of $\varepsilon$-nets asks whether a small set of points can intersect every sufficiently large member of a prescribed geometric family~\cite{HausslerW87}.
\citet{BanikJCMS16} used weak $\varepsilon$-nets to derive bounds for a continuous extension of the Condorcet-winning-set problem, known as the one-round Voronoi game.
First, a set of voters is placed in the plane.
One player then places a group of candidates anywhere in the plane, after which a second player places a challenger and tries to attract as many voters as possible.
The main difference between the Voronoi game and the Condorcet-winning-set problem lies in the sets from which these candidates may be chosen.
In the Voronoi game, both the selected group and the challenger may be placed freely.
In the Condorcet-winning-set problem, both must come from a prescribed candidate set; in peer selection, they must come from the voters themselves.
We introduce \emph{strong peer selection}, in which the selected group must consist of voters, as in peer selection, but the challenger may be placed freely, as in the Voronoi game.
Thus, strong peer selection asks whether a feasible group of peers can withstand the stronger challenge of an arbitrary point in the plane.
By varying these restrictions, we place the different problems in a common hierarchy.

These restrictions also affect how $\varepsilon$-nets can be used.
In the Voronoi game, the points of a weak $\varepsilon$-net can be used directly as the selected candidates.
When the group must instead come from a prescribed candidate set or from the voters, this connection is less direct.
We show that weak and strong $\varepsilon$-nets can be used in these different settings, depending on the geometry of the norm, and thereby provide a common approach to the planar election problems studied in this paper.

\subsection{Our Results}

We obtain new bounds on the minimum size of Condorcet-winning sets in general metric elections, peer selection, the one-round Voronoi game, and strong peer selection.
We also organize these models in a hierarchy; see \cref{fig:hierarchy}.

\medskip
\noindent
\textbf{Three voters suffice for planar peer selection under the $\ell_1$ and $\ell_\infty$ norms. } Our main result is that, in every planar peer-selection instance, there is a coalition of at most three peers such that no challenger, even one placed freely anywhere in the plane, is preferred to every member of the coalition by a majority.
Thus, three peers suffice not only for peer selection, but also for strong peer selection; see Theorem~\ref{thm:3_suffice_strong_peer}.
The proof relies on a new connection between peer selection and strong $\varepsilon$-nets for axis-parallel rectangles. We prove that, for every point set in the plane, at most three of its points can be chosen to intersect every axis-parallel rectangle containing more than half of the set. This improves the previous upper bound of $9 / 16$~\cite{AshokAG14}; see \cref{thm:strong_R_3_nets}. We also improve several lower bounds for strong nets with respect to axis-parallel rectangles; see \cref{tab:strong_rectangular_bounds}.

\medskip
\noindent
\textbf{Four candidates suffice for planar metric elections under the $\ell_2$ norm. }
Under the $\ell_2$ norm, we show that every planar metric election admits a Condorcet-winning set with at most four candidates. This improves on the general bound of five candidates~\cite{SongNL26}; see Theorem~\ref{thm:4_suffice_euclidean}. 
This gives a first answer to a question of \citet{LassotaVV26} on whether better bounds for the $\ell_2$ norm exist on the plane.

\medskip
\noindent
\textbf{Exact bounds for a single candidate. } Under the $\ell_2$ norm, we show that the worst-case fraction of voters that a challenger can capture against a single defending candidate is $2 / 3$ in the Voronoi game, peer selection, and general metric elections; see \cref{prop:1_euclidean_equiv}. Under the $\ell_1$ and $\ell_\infty$ norms, this fraction is $3 / 4$ in peer selection, strong peer selection, and general metric elections; see \cref{prop:1_manhattan_equiv}.

\medskip
\noindent
\textbf{Norm-independent bounds. }
For the Voronoi game, where candidates can be freely placed on the plane, we prove that, under any norm on the plane, four freely placed candidates can prevent every challenger from attracting a strict majority. This generalizes the result of \citet{BergW23} for the $\ell_2$ norm.
We also show that two candidates suffice when a voter is equidistant from every candidate.

\medskip
This work is organized as follows. 
First, we derive improved upper and lower bounds for strong rectangular nets.
Second, we formulate the metric election problems as a two-player Stackelberg game and build a hierarchy of these problems.
We use the results from those two sections to derive stronger bounds on the sizes of Condorcet-winning sets on the plane under multiple norms.
Finally, we derive the norm-independent bound of two candidates when all candidates are equidistant from some voter.

\subsection{Related work}

\citet{Elkind15} introduced the notion of Condorcet-winning sets, proved a logarithmic upper bound on their size, and constructed elections in which three candidates are necessary for the general case.
Later, the work of \citet{JiangMW20} on approximately stable committee selection implied a constant upper bound.
Building on their approach, \citet{CharikarLRVW25} reduced the bound to six, and \citet{SongNL26} subsequently reduced it to five.
More recently, the computational results of \citet{ZilbersteinBLM26} point to four candidates being sufficient.
Closest to our metric results, \citet{LassotaVV26} proved that four\footnote{An earlier version of the work claimed a bound of three; the current arXiv version corrects the bound to four.} candidates suffice for planar elections under the $\ell_1$ and $\ell_\infty$ norms.

Peer selection requires the selected alternatives to be drawn from the agents who evaluate them.
The algorithmic study of selection from the selectors was initiated by \citet{AlonFPT11}; related models include triadic consensus and impartial nomination~\cite{GoelL12,HolzmanM13}.
\citet{AzizLMRW16} formulated the setting explicitly as peer selection and studied strategyproof mechanisms.
To our knowledge, we are the first to study Condorcet-winning sets under the peer-selection constraint.

Early work on Voronoi games measured payoffs by the area or consumer mass captured by each player~\cite{AhnCCGO04,CheongHLM04,ChawlaRRS06}.
Discrete Voronoi games, in which payoffs are determined by a finite set of voters, were subsequently studied on the line~\cite{BanikBD13} and in two and three dimensions~\cite{BanikJCMS16}.
The latter work connected discrete Voronoi games with $\varepsilon$-nets, and \citet{BergW23} sharpened the resulting bound in the plane.
A substantial literature studies asymptotic bounds for weak $\varepsilon$-nets for convex sets~\cite{AlonBFK92,Rubin22}.
More directly related to our fixed-cardinality setting are small weak nets~\cite{AronovAHLRSS09,MustafaR09}.
We extend this connection to every norm on the plane and, under the $\ell_2$ norm, show that weak nets can be rounded to the candidate set without weakening their guarantee.

For the asymptotic theory of strong $\varepsilon$-nets, see \cite{HausslerW87,AronovES10,PachT13}.
\citet{AshokAG14} initiated the study of small strong nets for axis-parallel rectangles and proved that three input points hit every rectangle containing more than $9 / 16$ of the point set.
Our main result improves this threshold to $1 / 2$.

When the defending coalition consists of a single candidate and both players choose from the same set, the resulting minimax problem was studied by \citet{Simpson69} and \citet{Durier89}.
For freely placed candidates, the exact factors under the norms considered here were previously known~\cite{Durier89,Carrizosa96}.
We determine the exact factors for the finite-candidate and peer-restricted variants.

\section{Preliminaries}
\label{sec:preliminaries}

\noindent
\textbf{Elections.}
An \emph{election} consists of a nonempty finite set of voters $V$, a nonempty finite set of candidates $C$, and a strict total preference relation $\succ_v$ over $C$ for every voter $v \in V$.
We write $c_1 \succ_v c_2$ when voter $v$ strictly prefers candidate $c_1$ to $c_2$.
For a nonempty set $S \subseteq C$, we write $c \succ_v S$ when $v$ prefers $c$ to every candidate in $S$.
A set $S \subseteq C$ is a \emph{Condorcet-winning set} if no candidate outside $S$ is preferred to all of $S$ by a strict majority of the voters, that is,
\begin{equation*}
    \card{\setst{v \in V}{c \succ_v S}}
    \leq \frac{\card V}{2}
    \qquad \text{for every } c \in C \setminus S.
\end{equation*}
When $S$ consists of a single candidate, that candidate is a \emph{Condorcet winner}.

\medskip
\noindent
\textbf{Metric elections.}
Let $\calX$ be a finite-dimensional real vector space equipped with a metric $d$.
An election is \emph{metric} if its voters and candidates are points of $\calX$ and
\begin{equation*}
    c_1 \succ_v c_2
    \quad \Longleftrightarrow \quad
    d\paren{v, c_1} < d\paren{v, c_2}.
\end{equation*}
When the metric is induced by a norm, we denote it by the name of that norm.
Unless stated otherwise, we assume that every metric election is in general position, that is, all voters and candidates are distinct, and all preferences are strict. For $a, b \in \calX$, the \emph{dominance region} of $a$ over $b$ is the set of points of the metric space that are closer to $a$ than to $b$. We denote this region by
\begin{equation*}
    \Dom_d(a, b) \coloneqq \setst{x \in \calX}{d\paren{a, x} < d\paren{b, x}}.
\end{equation*}
When the norm is clear from the context, we omit the subscript $d$.
Under the $\ell_2$ norm, every dominance region is convex regardless of the positions of $a$ and $b$, but this is not guaranteed for arbitrary norms.
For a set $X \subseteq \calX$, we denote its boundary by $\partial X$ and its convex hull by $\conv\paren{X}$.

\medskip
\noindent
\textbf{Weak and strong $\bm{\varepsilon}$-nets.} 
Let $P \subseteq \calX$ be nonempty and finite, and let $\calK$ be a family of convex sets in $\calX$.
A \emph{weak $\varepsilon$-net of size $k$} for $P$ with respect to $\calK$ is a set $Q \subseteq \calX$ with $\card Q \leq k$ that intersects every $K \in \calK$ containing more than an $\varepsilon$ fraction of $P$.
It is a \emph{strong $\varepsilon$-net} if, in addition, $Q \subseteq P$.
For each family $\calK$ and each integer $k$, we denote the worst-case fraction of uncovered points by
\begin{equation}
    \label{eqn:weak-net-factor}
    \varepsilon_k^\calK
    \coloneqq
    \sup_{\substack{P \subseteq \calX\\ P \text{ finite}}}
    \ \inf_{\substack{Q \subseteq \calX\\ \card Q \leq k}}
    \ \sup_{\substack{K \in \calK\\ K \cap Q = \varnothing}}
    \frac{\card{K \cap P}}{\card P}.
\end{equation}
The corresponding strong factor $\varsigma_k^\calK$ is defined by additionally requiring $Q \subseteq P$ in \eqref{eqn:weak-net-factor}.
When $\calK$ is the family of all convex sets, we write $\varepsilon_k$ in place of $\varepsilon_k^\calK$.
In the plane, \emph{rectangular $\varepsilon$-nets} are a specialization in which the family $\calK$ consists of all axis-parallel rectangles.
We denote the corresponding weak and strong factors by $\varepsilon_k^{\calR}$ and $\varsigma_k^{\calR}$, respectively.

\section{\texorpdfstring{Strong Rectangular $\bm{\varepsilon}$-nets}{Strong Rectangular epsilon-nets}}
\label{sec:snets}
In the plane, for arbitrary convex sets, requiring the points of an $\varepsilon$-net to belong to the input set, that is, to form a strong net, rules out any nontrivial guarantees.
To see this, let $P$ be the vertices of a convex polygon with $n$ vertices.
For every strong net $Q \subseteq P$ of size $k$, the convex hull of the remaining points does not contain any point of $Q$.
It is therefore a convex set that avoids $Q$ and contains at least an $\paren{n - k} / n$ fraction of $P$, which approaches $1$ for fixed $k$ as $n$ increases.
Thus, there are no nontrivial bounds for strong nets for arbitrary convex sets.

Nontrivial guarantees arise when we restrict the family of sets that the net is required to intersect.
Here we consider the case where the family of convex sets consists of axis-parallel rectangles.
\citet[Lem.~7]{AshokAG14} introduced this problem and showed that three input points intersect every axis-parallel rectangle containing more than a $9 / 16$ fraction of the input points.
Our main technical result decreases this threshold to $1 / 2$.
We defer the proof details to Appendix~\ref{apx:snets} and give an overview of the main ideas here.

\begin{theorem}
    \label{thm:strong_R_3_nets}
    On the plane, $\varsigma_3^{\calR} \leq 1 / 2$.
\end{theorem}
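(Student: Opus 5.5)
We will show that for every finite $P$ in the plane, with $n=\card P$, there exist $q_1,q_2,q_3\in P$ such that every axis-parallel rectangle $K$ with $\card{K\cap P}>n/2$ contains some $q_i$. Since the family is closed under shrinking, we may assume $K$ is a minimal rectangle, i.e., the bounding box of $K\cap P$. Such a $K$ is determined by four boundary lines, and $K$ misses the net exactly when each $q_i$ lies outside $K$, strictly to one of the four sides. The plan is therefore to select points whose ``escape directions'' are incompatible with a majority rectangle.

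First we fix the median structure. Let $x^*$ be a vertical median line and $y^*$ a horizontal median line; we may perturb so that they pass through no points. Any rectangle containing more than $n/2$ points must meet both lines, since otherwise it lies in a half-plane with at most $n/2$ points. The four quadrants $Q_{NE},Q_{NW},Q_{SW},Q_{SE}$ have sizes $a,b,c,d$ with $a+b=c+d=a+d=b+c=n/2$, so $a=c$ and $b=d$. A majority rectangle $K$ contains the center region around $(x^*,y^*)$ only in the sense that it crosses both medians. If $K$ avoids a point $q$, then $K$ lies in one of the four open half-planes bounded by the lines through $q$.

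The main step is to choose the three points adaptively. We would first try a ``centerpoint-like'' point $q_1$: the point of $P$ minimizing the maximum count in the four closed half-planes $\{x<q_x\},\{x>q_x\},\{y<q_y\},\{y>q_y\}$, intersected with the relevant constraints. A rectangle avoiding $q_1$ must lie in one of the four half-planes, say the west side, which gives four cases. In each case we then need $q_2,q_3$ to hit all majority rectangles in that half-plane. Within a half-plane, a majority rectangle must still cross $y^*$ and, if the half-plane lies west of $x^*$, it crosses $x^*$ only if $q_1$ lies east of $x^*$. The key combinatorial lemma we aim for is the following. Among the four directions, at most two opposite or adjacent ones can be ``heavy'', meaning their half-plane contains more than $n/2$ points. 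Choosing $q_1$ near the median cross makes at most two half-planes heavy, and then $q_2,q_3$ are chosen as suitable ``medians within the heavy half-plane'', for example the points realizing the $\lceil n/2\rceil$-th order statistic in $y$ among points west of $q_1$, taken at the extreme $x$. This mirrors the $9/16$ argument of \citet{AshokAG14}, whose slack comes from a crude choice of $q_1$. We would replace that choice with an exact counting argument using the identities $a=c$, $b=d$.

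The main obstacle is the case analysis when two heavy half-planes are adjacent, for example west and north. A single extra point must then serve both, or $q_2$ and $q_3$ must be chosen jointly, for instance as the points of $P$ in the NW region that are extremal along the staircase (Pareto frontier) of the northwest quadrant. We would attempt a sweep/continuity argument: parametrize candidate pairs along the staircase, show that the maximum number of points in a rectangle avoiding all three changes monotonically in opposite directions at the two ends, and pick a crossing index where both quantities are at most $n/2$. Discreteness may cause off-by-one issues at this crossing, and we expect handling ties (an exact count of $n/2$ versus $n/2+1$) to be the delicate part. We would resolve it by proving the nonstrict bound $\le n/2$ directly via the quadrant equalities.
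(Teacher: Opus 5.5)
There is a genuine gap: the step that carries the theorem is not supplied. The facts you establish are correct but elementary. A rectangle with more than $n/2$ points contains the median point, and a rectangle avoiding $q$ lies in one of the four open half-planes through $q$.

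Your ``key combinatorial lemma'' is automatic. The west and east (and the north and south) open half-planes through $q_1$ are disjoint, so at most one of each pair holds more than $n/2$ points. It gives no quantitative control. The rule for $q_2,q_3$ (``medians within the heavy half-plane'') is also not specified precisely enough to check.

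Everything is therefore deferred to the staircase sweep, which does not work as stated. Suppose one maximum is nondecreasing and the other nonincreasing along the sweep, each at most $n/2$ at one end. There need not be any index at which both are at most $n/2$, since the first may exceed $n/2$ before the second drops to $n/2$. You need a complementarity inequality: whenever one family contains a rectangle with more than $n/2$ points, the other family does not. Proving it is exactly the counting that improves $9/16$ to $1/2$. The identities $a=c$, $b=d$ cannot supply it, because the mass of a half-plane through a selected point depends on where that point sits inside its quadrant, which the $2\times 2$ counts do not record.

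Your heuristic of placing $q_1$ near the median cross, so that its heavy half-planes are only slightly heavy, also fails, because the net must consist of input points. Consider the quartile grid, where every row and column holds a quarter of the points. If its central $2\times 2$ block is empty, row and column counting forces the four corner cells to be empty too. Then every input point lies in an outer row or column, and every $q\in P$ has an open half-plane containing at least $3n/4$ points.

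The paper handles this case with extreme points of outer cells rather than a central point plus medians. The southernmost point of the top row and the northernmost point of the bottom row confine any rectangle south of the first and north of the second to the two middle rows. At most one further extreme point handles the remaining positions.

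More generally, your plan is missing the ingredients of the paper's proof:
\begin{itemize}
\item the $4\times 4$ quartile grid, with the identity equating the total mass of the four corner cells with that of the four central cells, together with the individual row and column identities;
\item a split-point lemma giving an input point of a cell that cuts off prescribed, possibly asymmetric, masses in two directions;
\item a case analysis on which central cells are empty.
\end{itemize}
Finally, exact halving with no points on the median lines needs $n$ even and distinct coordinates. You therefore also need a reduction such as the paper's replacement of each point by four nearby copies.
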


\begin{proof}[Proof sketch.]
    Fix a finite point set $V \subseteq \R^2$.
    We call the fraction of points of $V$ contained in a set its \emph{mass}, and we call an axis-parallel rectangle \emph{dangerous} if its mass is greater than $1 / 2$.
    The proof splits the plane into a $4 \times 4$ grid in which each row and each column has mass $1 / 4$. 
    We call it the \emph{quartile grid}.
    We name the $16$ cells as in \cref{fig:grid} and call the central $2 \times 2$ block the \emph{center}.
    Up to symmetry, there are six possible cases for the center cells: all are nonempty, one is empty, two adjacent cells are empty, two diagonally opposite cells are empty, three are empty, or all are empty.
    For each case, we construct configurations of at most three input points that intersect every dangerous rectangle for every feasible distribution of mass over the quartile grid.

    \begin{figure}[ht!]
        \centering
        \input{figures/quartile-grid}
        \caption{The quartile grid and the two regions in which a dangerous rectangle avoiding a point $q$ northwest of $p$ can lie.
        Regions west and north of $q$ have mass at most $1 / 2$.}
        \label{fig:grid}
    \end{figure}

    To see why this case split makes sense, consider how a dangerous rectangle can be placed in the quartile grid.
    Let $p$ be the intersection of the vertical and horizontal middle lines.
    Each of the four half-planes determined by these lines has mass $1 / 2$, and hence every dangerous rectangle must contain $p$.
    Thus, as illustrated in \cref{fig:grid}, if a point $q$ selected for the net lies in one of the four cells northwest of $p$, namely $A$, $B$, $E$, or $F$, a dangerous rectangle avoiding $q$ must lie either south or east of $q$.
    Otherwise, it is contained in the first two rows or the first two columns and has mass at most $1 / 2$.
    Since $p$ need not belong to the input set, we check the four neighboring cells for replacements.
\end{proof}

We also improve the lower bounds for small values of $k$. 
We summarize the known upper and lower bounds for $\varsigma_k^\calR$ in \cref{tab:strong_rectangular_bounds}.
We defer the complete discussion of these lower bounds to Appendix~\ref{apx:snets}.

\begin{table}[ht!]
    \centering
    \begin{tabular}{@{}rll@{}}
        \toprule
        $k$ & Lower bound & Upper bound \\
        \midrule
        $1$ & $3 / 4$ \, \cite{AshokAG14} & $3 / 4$ \cite{AshokAG14} \\
        $2$ & $4 / 7$\,\,\, (this work) & $5 / 8$ \cite{AshokAG14} \\
        $3$ & $6 / 13$ (this work) & $1 / 2$ (this work) \\
        $4$ & $3 / 8$\,\,\, (this work) & $1 / 2$ \cite{AshokAG14} \\
        \bottomrule
    \end{tabular}
    \caption{Lower and upper bounds for strong rectangular nets.}
    \label{tab:strong_rectangular_bounds}
\end{table}

For $k = 4$, \citet[Lem.~8]{AshokAG14} previously proved that $\varsigma_4^{\calR} \leq 1 / 2$.
We believe that the factor for $k = 4$ is strictly smaller than $1 / 2$. However, our techniques target the threshold $1 / 2$ and therefore do not improve the bound for $k = 4$.
We also believe that \cref{thm:strong_R_3_nets} may be tight. Constructing a matching lower bound remains an interesting and challenging problem.

In the following sections, we show that strong rectangular nets, like weak nets, can be used to bound the minimum size of a Condorcet-winning set in the plane. To the best of our knowledge, this is the first application of rectangular nets beyond their study as geometric objects.

\section{Metric Election Games}
\label{sec:models}

We first place several metric election problems in a common framework.
These relationships allow us to derive novel bounds for multiple types of elections under different norms.
For this section, fix a finite-dimensional real vector space $\calX$ equipped with a metric $d$ induced by a norm, a nonempty finite set of voters $V \subseteq \calX$, and a nonempty finite set of candidates $C \subseteq \calX$.

We describe the problems as two-player zero-sum Stackelberg games.
The first player, the \emph{defender}, acts as the leader and chooses a nonempty coalition $S \subseteq \calX$ of at most $k$ candidates.
After observing $S$, the second player, the \emph{challenger}, acts as the follower and chooses a single candidate $c \in \calX$ with the goal of capturing as many voters as possible from $S$.
The candidate $c$ captures a voter if the voter prefers $c$ to every member of $S$.
We take the fraction of voters captured as the challenger's payoff and its negative as the defender's payoff.
Thus, the challenger chooses a best response to $S$, while the defender chooses $S$ anticipating this response.

We consider three types of players, identified by the set of points from which they may choose.
A free player, denoted by $\pF$, may choose any point of $\calX$; a restricted player, denoted by $\pR$, must choose from the predefined set $C$; and a peer-restricted player, denoted by $\pP$, must choose from the set of voters $V$.
Formally, define $\calX_{\pF} \coloneqq \calX$, $\calX_{\pR} \coloneqq C$, and $\calX_{\pP} \coloneqq V$.
The first letter refers to the type of the defender and the second to the type of the challenger.
For $A, B \in \set{\pF, \pR, \pP}$, we denote the largest fraction of voters that a challenger of type $B$ can capture from a defender of type $A$ by

\begin{equation}
    \label{eqn:game-factor}
    {AB}(k) \coloneqq
    \sup_{\substack{V, C \subseteq \calX\\ V, C \text{ finite}}}
    \ \inf_{\substack{S \subseteq \calX_A\\ \card S \leq k}}
    \ \sup_{c \in \calX_B} \frac{\card{\setst{v \in V}{c \succ_v S}}}{\card V}.
\end{equation}
When neither player uses $C$, the supremum over $C$ is irrelevant.
Moreover, for every fixed instance, the defender has an optimal coalition and the challenger has a best response to each coalition, since the payoff takes only finitely many values.
Note that the notation above is the reverse of the convention used in other work on Condorcet-winning sets. Smaller values are better for the first player.
In particular, ${AB}(k) \leq 1 / 2$ means that, in every instance, the defender can choose a coalition of at most $k$ candidates such that no challenger captures more than half of the voters.
This orientation, in which smaller factors give stronger guarantees, matches the usual convention in the $\varepsilon$-net literature.

Considering all pairs of player types gives nine different games. 
In the remainder of this section, we identify the interesting variants and show how this framework recovers multiple problems from the literature.
We build a hierarchy of these variants and then use it to derive improved bounds in the next section.

We begin with a free defender.
When the challenger is also free, $\pF\pF(k)$ is the problem usually referred to as the one-round \emph{Voronoi game}~\cite{BanikJCMS16}.
Restricting the challenger to the predefined set $C$ gives $\pF\pR(k)$, while restricting it to the set of voters $V$ gives $\pF\pP(k)$.
Although either restriction helps the defender, neither improves the worst-case value, as the following proposition shows.

\begin{proposition}
    \label{prp:voronoi-equivalence}
    For every norm and every positive integer $k$, $\pF\pF(k) = \pF\pR(k) = \pF\pP(k)$.
\end{proposition}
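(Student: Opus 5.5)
One direction is immediate and needs no geometry. Fix an instance $(V, C)$ and a coalition $S \subseteq \calX$ with $\card S \leq k$. Then
\begin{equation*}
    \max_{c \in C} \card{\setst{v}{c \succ_v S}} \leq \sup_{c \in \calX} \card{\setst{v}{c \succ_v S}}
    \quad\text{and}\quad
    \max_{c \in V} \card{\setst{v}{c \succ_v S}} \leq \sup_{c \in \calX} \card{\setst{v}{c \succ_v S}}.
\end{equation*}
Taking the infimum over $S$ and then the supremum over instances gives $\pF\pR(k) \leq \pF\pF(k)$ and $\pF\pP(k) \leq \pF\pF(k)$.

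For the reverse inequalities, I would fix a voter set $V$ for the Voronoi game and embed it into instances for the restricted games. The plan is to show two things: every optimal free challenger can be approximated by a challenger from $C$, and every optimal free challenger can be approximated by a challenger from $V$. The difficulty is that in these games the coalition $S$ is chosen after $V$ and $C$ are fixed, so the candidate set cannot be tailored to $S$. To handle this, I would replicate the voters. Replace each $v \in V$ by a tight cluster of $m$ voters placed within distance $\delta$ of $v$. Take $C$ to be the new voter set itself; this single construction then serves both the $\pR$ and the $\pP$ challenger.

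The key step is this. Let $S$ be any coalition for the blown-up instance, and let $c^\ast \in \calX$ be a free challenger capturing a set $W$ of original voters against $S$. I would aim to show that some voter point in the blown-up instance captures essentially $\card W / \card V$ of the mass. The natural candidate is a point $x$ of some cluster, since it lies very close to an original voter. A single voter location does not capture as much as $c^\ast$ in general, however. The way around this is to let the blown-up clusters sit on a fine grid near each $v$ in every direction, so that the challenger can imitate ``a point near $v$ displaced slightly toward $c^\ast$''.

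That imitation still loses, because $c^\ast$ may be far from every voter. I therefore expect the actual argument to run instead through monotonicity along segments. Let $w \in W$ be a captured voter. Any point $x$ on the segment from $w$ to $c^\ast$ satisfies
\begin{equation*}
    d(w', x) \leq d(w', c^\ast) + d(c^\ast, x)?
\end{equation*}
Using this to show that the captured set only grows does not work for general norms.

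The fallback is to argue with strict inequalities and scaling. Because all preferences are strict and there are finitely many voters, the set of free challengers that attain the optimum contains an open set. One can then add to $V$ many voters, each of negligible weight, densely spread over a large bounded region; for $\pF\pR$ these points are placed in $C$ instead. Every open set within a bounded region then contains a candidate, so the restricted challenger matches the free one up to an additive $o(1)$ loss from the added voters. Two points need care. First, the optimal open set must lie in a bounded region, which holds because far-away challengers capture nothing new beyond what nearby rays achieve. Second, the dense voters must carry vanishing total mass, which is handled by giving the original voters huge multiplicities. Taking suprema then gives equality.

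The main obstacle is the step showing that the region of optimal free challengers has nonempty interior. I would establish it by perturbing an optimal $c^\ast$, using continuity of the norm together with the strictness of the finitely many inequalities involved.
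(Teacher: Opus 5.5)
Your easy inequalities are correct. The construction you finally settle on, with large clusters around the original voters plus many added challenger locations of vanishing total mass, is also the one the paper uses.

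\textbf{The gap.} It lies in the sentence ``every open set within a bounded region then contains a candidate.'' This reintroduces exactly the quantifier problem you identified in your second paragraph. The step you call the main obstacle is immediate: for a fixed coalition $S$, the challengers that beat $S$ at every voter of a fixed set $W$ form an open set, because they are cut out by finitely many strict inequalities between continuous functions. The real difficulty is that this open set depends on $S$. In each instance the added locations are only finitely many, and the defender chooses $S$ after seeing them. Nothing in your argument prevents the defender's optimal coalition for a given instance from making every region that captures $p$ voters thinner than the spacing of the added points. Nor does anything prevent it from leaving a capture margin smaller than the cluster radius, so that a challenger beating $S$ at an original voter fails at the surrounding cluster points. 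An ``additive $o(1)$'' conclusion needs one fixed open set while the density grows, and your argument never produces one.

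\textbf{What is missing is a compactness argument over coalitions.} The paper's proof runs as follows.
\begin{itemize}
\item For each $q$, the instance consists of the first $q$ points of a countable dense subset of $\calX$, together with $q^2$ voters in $B(v_i, 1/q)$ for each original voter. Take an optimal free coalition for this instance.
\item Replace every coalition point outside $B(v_1, 2r)$ by $v_1$. This cannot help any challenger against clustered voters, and it confines the coalitions to a compact set. Your proposal never handles coalition points placed far away. Your boundedness remark concerns the challenger instead, which is unnecessary once the added points are dense in all of $\calX$.
\item Bolzano--Weierstrass gives a subsequence $S_q \to S$.
\item Apply the hypothesis on $V$ to the \emph{limit} coalition $S$. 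This yields one fixed nonempty open capture region, hence one fixed dense point $\bar c$ inside it.
\item By continuity, $\bar c$ captures all $pq^2$ clustered voters against $S_q$ for all large $q$. Comparing with the bound $\pF\pP(k)(nq^2+q)+q$ for the optimal coalition and letting $q \to \infty$ gives $p/n \le \pF\pP(k)$, a contradiction.
\end{itemize}

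Alternatively, the best capture margin is lower semicontinuous in $S$. On the compact set of bounded coalitions, it is therefore bounded below by a positive constant, after which a single sufficiently fine finite instance with tight enough clusters suffices. Either way, some step giving uniformity in $S$ is indispensable, and your proposal does not contain one.
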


Thus, we refer to the three variants collectively as the \emph{Voronoi game}.
We defer the proof of \cref{prp:voronoi-equivalence} to Appendix~\ref{apx:models}.
The idea is to approximate an instance with a free challenger by a sequence of peer-restricted instances constructed by adding increasingly dense challenger locations to the electorate and replacing each original voter by a much larger nearby cluster.
By continuity and strict preferences, the added locations approximate the free challenger's choices, while the clusters preserve the original vote shares and make the added challengers asymptotically negligible.

We next restrict the defender to the predefined candidate set $C$.
When the challenger must also choose from $C$, we obtain $\pR\pR(k)$, the usual problem of finding a Condorcet-winning set in a general metric election~\cite{Elkind15}.
No meaningful bounds can be proved for the other two variants, $\pR\pF(k)$ and $\pR\pP(k)$, since in the worst case the predefined candidates may be arbitrarily far from the set of voters, so that every voter prefers the challenger to the defending coalition.

Lastly, consider the case in which the defender is restricted to the voters.
When the challenger faces the same restriction, we obtain $\pP\pP(k)$, which is known as the \emph{peer selection} problem~\cite{AlonFPT11,AzizLMRW16}.
It is a special case of the Condorcet-winning-set problem obtained by taking $C = V$.
Allowing the defender to choose freely while restricting the challenger to the set of voters can only make the game easier for the defender. 
Hence, $\pP\pP(k) \geq \pF\pP(k)$.
Thus, by \cref{prp:voronoi-equivalence}, peer selection is at least as hard as all Voronoi game variants. Consequently,
\begin{equation}
    \label{eqn:hierarchy_condorcet}
    \pR\pR(k) \geq \pP\pP(k) \geq \pF\pF(k) = \pF\pR(k) = \pF\pP(k).
\end{equation}

Allowing the challenger to be free while the defender is peer-restricted gives $\pP\pF(k)$.
Interestingly, this variant has the same factor as the variant in which the challenger is restricted to a predefined set $C$, namely $\pP\pR(k)$.
To see this, consider all coalitions the peer-restricted defender may choose, and select the best free challenger for each one.
There are only finitely many such coalitions because $V$ is finite, so the set of challengers is finite. Taking $C$ as this set proves the following proposition.

\begin{proposition}
    For every norm and every positive integer $k$, $\pP\pF(k) = \pP\pR(k)$.
\end{proposition}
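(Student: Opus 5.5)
We prove the two inequalities $\pP\pR(k) \leq \pP\pF(k)$ and $\pP\pF(k) \leq \pP\pR(k)$ separately, working directly with the definition \eqref{eqn:game-factor}.

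\textbf{The inequality $\pP\pR(k) \leq \pP\pF(k)$.} This direction is immediate. Fix any instance $(V, C)$ and any coalition $S \subseteq V$ with $\card S \leq k$. Since $C \subseteq \calX$, we have
\begin{equation*}
    \sup_{c \in C} \card{\setst{v \in V}{c \succ_v S}}
    \leq \sup_{c \in \calX} \card{\setst{v \in V}{c \succ_v S}}.
\end{equation*}
The inequality survives taking the infimum over $S$ and then the supremum over instances. The free game does not depend on $C$ at all, so the outer supremum there is only over $V$.

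\textbf{The inequality $\pP\pF(k) \leq \pP\pR(k)$.} Fix a finite voter set $V$. The payoff of a challenger $c$ against $S$ is a count in $\set{0, \dots, \card V}$. So for each coalition $S \subseteq V$ with $\card S \leq k$, the supremum over $c \in \calX$ is attained by some point $c_S$. There are finitely many such $S$, so we set
\begin{equation*}
    C \coloneqq \setst{c_S}{S \subseteq V,\ 1 \leq \card S \leq k},
\end{equation*}
which is a nonempty finite set. In the instance $(V, C)$, for every $S$ the restricted challenger can play $c_S \in C$ and capture exactly the same voters as the best free challenger. Hence its best payoff against $S$ equals the free value, and in fact equality holds, since $C \subseteq \calX$.

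The defender in $\pP\pR$ still chooses from $V$, so its feasible coalitions are unchanged. Therefore the inf–sup value of $(V, C)$ in $\pP\pR$ equals the value of $V$ in $\pP\pF$. Taking the supremum over $V$ gives $\pP\pF(k) \leq \pP\pR(k)$.

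\textbf{The general-position convention.} This is the only point needing care. Adding $C$ might create a voter equidistant from $c_S$ and some member of $S$, or make $c_S$ coincide with a voter. I would handle this as follows. The strict-preference count $\card{\setst{v}{d(v, c) < d(v, S)}}$ is defined by strict inequalities, so the set of optimal challengers for $S$ is open. We can therefore perturb each $c_S$ within this open set to avoid the finitely many ties and coincidences with voters, without changing which voters it captures.

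I expect the argument to be straightforward overall. The one place requiring care is this tie-breaking perturbation needed to respect general position.
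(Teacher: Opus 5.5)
Your proof is correct and follows the same approach as the paper. The reverse inequality is immediate because $C \subseteq \calX$; for the main direction, the paper likewise picks a best free challenger for each of the finitely many peer coalitions and takes $C$ to be this finite set. Your perturbation step for the general-position convention is a detail the paper leaves implicit, and it is sound: the set of optimal challengers is open, and the finitely many tie loci are norm spheres with empty interior, so they can be avoided.
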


Thus, we refer to both variants collectively as \emph{strong peer selection}.
Unlike $\pR\pF(k)$ and $\pR\pP(k)$, strong peer selection admits meaningful bounds.
Moreover, it generalizes peer selection, as its name suggests. 
Thus, using \eqref{eqn:hierarchy_condorcet}, we obtain the following hierarchy of problems:
\begin{equation}
    \label{eqn:hierarchy_strong}
    \pP\pF(k) = \pP\pR(k) \geq \pP\pP(k) \geq \pF\pF(k) = \pF\pR(k) = \pF\pP(k).
\end{equation}

Although both general metric elections and strong peer selection are at least as hard as peer selection, they are generally incomparable because $C$ and $V$ need not contain one another.
Restricting to peer selection yields improved bounds for other metric election problems, such as the metric distortion problem~\cite{GkatzelisHS20}.
However, for the Condorcet-winning-set problem, it remains open whether peer selection is any easier than the general case.
We show that in some settings the factors coincide.
In particular, we show in \cref{prop:1_euclidean_equiv} that $\pR\pR_{\ell_2}(1) = \pP\pP_{\ell_2}(1) = \pF\pF_{\ell_2}(1)$.
Likewise, we show in \cref{prop:1_manhattan_equiv} that $\pP\pF_{\ell_1}(1) = \pR\pR_{\ell_1}(1) = \pP\pP_{\ell_1}(1)$.

We summarize the resulting hierarchy of problems in \cref{fig:hierarchy}.
The reductions from weak and strong $\varepsilon$-nets will be the subject of the next section, where we also derive improved bounds for all the described problems.

\begin{figure}[ht!]
    \centering
    \input{figures/hierarchy}

    \caption{Hierarchy of metric election problems and their connections to $\varepsilon$-nets.
    An arrow from $A$ to $B$ means that the factor at $B$ is at most the factor at $A$.
    Solid arrows are unconditional, whereas dashed arrows hold under the metric assumptions indicated beside them.
    Here, $d$ is the metric induced by an arbitrary norm on the plane, and $r$ is an arbitrary positive integer.}
    \label{fig:hierarchy}
\end{figure}

\section{\texorpdfstring{$\bm{\varepsilon}$-nets}{epsilon-nets} and Metric Condorcet Problems}
\label{sec:applications}

The space and the norm determine the geometry of a challenger's dominance region.
A net that intersects every sufficiently large member of a family containing these regions bounds the challenger's payoff.
For Voronoi games, where the defender may choose arbitrary points of the space, a weak $\varepsilon$-net is itself a feasible defending coalition.
This gives a direct connection between this continuous version of the problem and weak convex nets.
The main obstacle to extending this connection to a restricted defender is that a weak net may not be a feasible strategy.
In this section, we extend the Voronoi-game connection to every norm on the plane and show how to overcome these feasibility issues for restricted defenders.

All results rely on the following lemma.
It states that, on the plane, the voters that prefer one point to another can be enclosed in a convex set that still excludes the latter point.

\begin{lemma}
    \label{lem:giveaway}
    On the plane, for any norm and any distinct points $a, b$, we have $b \notin \conv\paren{\Dom(a, b)}$.
    In particular, under the $\ell_1$ norm, $\Dom_{\ell_1}(a, b)$ is contained in an open axis-parallel half-plane that does not contain $b$.
\end{lemma}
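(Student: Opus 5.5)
The plan is to reduce the claim to a statement about directions seen from $b$, and then use the planar structure of the norm's duality map. Translate so that $b = 0$, and write $g(x) \coloneqq \norm{x} - \norm{x - a}$, so that $\Dom(a, b) = \setst{x}{g(x) > 0}$.

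\emph{Step 1: $\Dom(a, b)$ is closed under moving away from $b$.} For $0 < s \le t$ we have $\norm{tx - a} - \norm{sx - a} \le (t - s)\norm{x} = \norm{tx} - \norm{sx}$. Hence $t \mapsto g(tx)$ is nondecreasing on $(0, \infty)$. Consequently, if $x \in \Dom(a, b)$, then the whole ray $\setst{tx}{t \ge 1}$ lies in $\Dom(a, b)$.

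\emph{Step 2: reduce to a statement about directions.} Suppose, for contradiction, that $0 \in \conv\paren{\Dom(a, b)}$. Since $0 \notin \Dom(a, b)$ (because $g(0) = -\norm{a} < 0$), Carathéodory in the plane gives points $x_1, x_2, x_3 \in \Dom(a, b)$ (possibly only two) with $0 \in \conv\set{x_1, x_2, x_3}$. Let $u_i = x_i / \norm{x_i}$. Then $0$ lies in the convex hull of the $u_i$, so these directions do not all lie in an open half-plane through $0$.

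By Step 1, $g(t u_i) > 0$ for all large $t$. The limit as $t \to \infty$ of $g(t u_i) = t\paren{\norm{u_i} - \norm{u_i - a/t}}$ is the one-sided directional derivative of the norm at $u_i$ in direction $a$. This equals $\max\setst{\varphi(a)}{\varphi \text{ norming } u_i}$. Since $g(tu_i)$ is nondecreasing in $t$, positivity at some finite $t$ forces this limit to be positive. So each $u_i$ lies in
\[
U \coloneqq \setst{u}{\norm{u} = 1,\ \varphi(a) > 0 \text{ for some norming functional } \varphi \text{ of } u}.
\]

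\emph{Step 3 (main obstacle): $U$ lies in an open half-plane through $0$.} Here planarity is essential. On the unit circle of a planar norm, the set-valued duality map $u \mapsto \setst{\varphi}{\varphi \text{ norming } u}$ is cyclically monotone: as $u$ turns counterclockwise, the norming functionals turn counterclockwise through a full turn, and antipodal points have negated functional sets.

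Fix a nonzero linear functional $\psi$ with $\psi(a) = 0$, and consider the directions $\pm u_0$ at which $\psi$ (respectively $-\psi$) is norming. By monotonicity, the functionals norming points on one open arc between $u_0$ and $-u_0$ take only positive values at $a$, and those on the other arc take only nonpositive values. Thus $U$ is contained in an open arc strictly shorter than a half-turn, hence in an open half-plane through $0$.

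This contradicts Step 2, so $b \notin \conv\paren{\Dom(a, b)}$. I expect the care to go into non-smooth and non-strictly-convex norms: flat edges make $u_0$ an interval, and corners make the norming set an interval. I would handle both by working with closed arcs of the duality map and checking that the endpoints of $U$ are excluded, using the strict inequality in Step 2.

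\emph{The $\ell_1$ refinement.} Here I would argue directly rather than through Step 3. Write $a - b = (\alpha, \beta)$ and assume without loss of generality that $\alpha \ge 0$ and $\beta \ge 0$, with $\alpha \ge \beta$ by symmetry. Then every $x$ with $x_1 \le b_1$ satisfies $\abs{x_1 - a_1} = \abs{x_1 - b_1} + \alpha$. By the triangle inequality in the second coordinate, $\abs{x_2 - a_2} \ge \abs{x_2 - b_2} - \beta$. Hence $\norm{x - a}_1 \ge \norm{x - b}_1 + \alpha - \beta \ge \norm{x - b}_1$, so $x \notin \Dom_{\ell_1}(a, b)$. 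Therefore $\Dom_{\ell_1}(a, b)$ lies in the open half-plane $\set{x_1 > b_1}$, which does not contain $b$.
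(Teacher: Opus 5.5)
\textbf{There is a genuine error in Step~2, and it breaks Step~3 for the non-smooth norms the lemma is needed for.}

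With $h = 1/t$, the quantity $g(tu) = (\|u\| - \|u - ha\|)/h$ does not tend to the right-hand directional derivative of the norm at $u$ in direction $a$. It tends to minus the right-hand derivative in direction $-a$. That limit is $\min\{\varphi(a) : \varphi \text{ norming } u\}$, not the maximum, and the two can differ at corners of the unit sphere.

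With the maximum, your set $U$ need not lie in any open half-plane through $0$. This happens whenever the norming cone at your $u_0$ contains $\psi$ in its interior. Concretely, take $\ell_\infty$ with $b = 0$ and $a = (1,1)$. The point $(1,-1)$ is normed by $x \mapsto x_1$, and $(-1,1)$ is normed by $x \mapsto x_2$. Both functionals equal $1$ at $a$, so $U$ contains an antipodal pair. Yet $g(t(1,-1)) = g(t(-1,1)) = -1$ for every $t > 0$, so neither direction is actually an asymptotic direction of the dominance region. The same happens under $\ell_1$ with $a = (1,0)$ and the points $(0,\pm 1)$.

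Your proposed remedy, excluding the endpoints of $U$ via the strict inequality of Step~2, cannot work with the maximum. At such a corner the norming cone contains functionals of both signs at $a$.

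\textbf{How to repair it.} Use the correct limit. Step~2 then shows that every norming functional of $u_i$ is strictly positive at $a$. This excludes $\pm u_0$, which are normed by $\pm\psi$, and $\pm\psi$ vanish at $a$. Monotonicity of the duality map then places all admissible directions in the open arc from $u_0$ to $-u_0$ on the side of $a$. Corners and flat edges need no separate treatment.

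Together with Step~1, this shows directly that $\Dom(a,b)$ lies in the open half-plane bounded by the line through $b$ in direction $u_0$, on the side of $a$. So the Carath\'eodory step is not needed.

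\textbf{Comparison with the paper.} This is exactly the paper's half-plane: its $u$ is your $u_0$, the contact point of a supporting line of the unit ball parallel to $b - a$. The paper reaches it more economically, without duality-map monotonicity. It notes that on each line parallel to $b - a$, the point of $a + \R u$ is nearest to $a$. It then uses convexity of the norm along that line to show that every point beyond $b + \R u$ is at least as close to $b$ as to $a$.

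Your $\ell_1$ computation is correct. It obtains, more elementarily, the same axis-parallel half-plane that the paper gets by choosing $u$ to be a vertex of the $\ell_1$ ball.
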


We defer the proof of \cref{lem:giveaway} to Appendix~\ref{apx:applications}.
To prove the result, we show how to construct a half-plane containing $\conv\paren{\Dom(a, b)}$ but not $b$.
We also show that the result need not extend to higher dimensions.
In fact, it already fails for every $\ell_p$ norm with $p \neq 2$ on $\R^3$.
For the $\ell_2$ norm, however, the result holds in every dimension because the dominance regions are already convex.

\clearpage
\noindent
\textbf{Weak Nets and the Voronoi Game under Any Norm.}
\citet{BanikJCMS16} observed that the bounds for weak $\varepsilon$-nets extend to Voronoi games under the $\ell_2$ norm.
We first use \cref{lem:giveaway} to extend the reduction to all norms on the plane.

\begin{proposition}
    \label{prop:weak_nets_to_voronoi}
    On the plane, for every norm and every positive integer $k$, $\pF\pF(k) = \pF\pR(k) = \pF\pP(k) \leq \varepsilon_k$.
\end{proposition}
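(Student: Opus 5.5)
Because \cref{prp:voronoi-equivalence} already gives the equalities, it suffices to show $\pF\pF(k) \leq \varepsilon_k$. Fix a finite voter set $V \subseteq \R^2$ and any $\eta > 0$. By the definition of $\varepsilon_k$ in \eqref{eqn:weak-net-factor}, applied with $P = V$ and the family of all convex sets, there is a set $Q \subseteq \R^2$ with $\card Q \leq k$ such that every convex set $K$ with $K \cap Q = \varnothing$ satisfies $\card{K \cap V} / \card V \leq \varepsilon_k + \eta$. Since $\card{K \cap V}/\card V$ takes only finitely many values, we may take $\eta = 0$ once $\eta$ is below the gap between consecutive attainable fractions. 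The defender plays $S \coloneqq Q$, which is feasible because a free defender may choose any points. If $Q$ is empty, we replace it by any single point; this is harmless, as adding points only shrinks the challenger's capture set.

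Next fix an arbitrary challenger $c \in \R^2$. If $c \in S$, then no voter strictly prefers $c$ to all of $S$, so the payoff is $0$. Otherwise, the set of captured voters is $V \cap \bigcap_{s \in S} \Dom(c, s)$. Put
\[
K \coloneqq \conv\Bigl(\textstyle\bigcap_{s \in S} \Dom(c, s)\Bigr).
\]
This is a convex set that contains every captured voter. The key claim is that $K \cap S = \varnothing$. For each $s \in S$, we have $K \subseteq \conv\paren{\Dom(c, s)}$, and by \cref{lem:giveaway}, $s \notin \conv\paren{\Dom(c, s)}$, so $s \notin K$. The weak-net property then bounds the captured fraction by $\card{K \cap V}/\card V \leq \varepsilon_k$. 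Taking the supremum over $c$, the infimum over coalitions, and the supremum over $V$ gives $\pF\pF(k) \leq \varepsilon_k$.

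The only substantive step is $s \notin K$. It is exactly where planarity and the norm enter, and it is entirely delegated to \cref{lem:giveaway}. We cannot simply use $\bigcap_s \Dom(c,s)$ itself as $K$, because for non-Euclidean norms dominance regions can fail to be convex. Taking convex hulls of the individual regions, and noting that the hull of the intersection lies inside each of them, resolves this. The remaining detail is to check that the convention ``containing more than an $\varepsilon$ fraction'' in the definition of weak nets matches the supremum in \eqref{eqn:weak-net-factor}, which is the finiteness argument above.
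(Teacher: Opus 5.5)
Your proof is correct and follows the paper's argument: like the paper, you use \cref{lem:giveaway} to show that a convex set containing the captured voters misses every $s \in S$, then compare with the weak-net supremum and invoke \cref{prp:voronoi-equivalence}. The paper takes the hull of the captured voters rather than your hull of $\bigcap_{s \in S}\Dom(c,s)$, which works equally well. It also chains the inequality directly through the infimum over $S$ and the supremum over $V$, so your $\eta$-elimination is unnecessary; as stated, its threshold should be the distance from $\varepsilon_k$ to the next attainable fraction above it, not the gap between consecutive fractions.
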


\begin{proof}
    Fix a defending coalition $S \subseteq \R^2$ with $\card S \leq k$ and a challenger $c \in \R^2 \setminus S$.
    Let $W \coloneqq \setst{v \in V}{c \succ_v S}$ be the set of voters captured by $c$.
    For every $s \in S$, we have $W \subseteq \Dom(c, s)$ and therefore
    $\conv\paren{W} \subseteq \conv\paren{\Dom(c, s)}$.
    By \cref{lem:giveaway}, $s \notin \conv\paren{\Dom(c, s)}$, and hence
    $s \notin \conv\paren{W}$.
    Since this holds for every $s \in S$, we obtain
    $\conv\paren{W} \cap S = \varnothing$.
    Hence, the voters captured by $c$ are contained in a convex set disjoint from $S$.
    Thus,
    \begin{equation*}
        \sup_{c \in \R^2} \frac{\card{\setst{v \in V}{c \succ_v S}}}{\card V}
        \leq \sup_{\substack{K \subseteq \R^2 \text{ convex}\\K \cap S = \varnothing}} \frac{\card{K \cap V}}{\card V}.
    \end{equation*}
    Taking the infimum over $S$ and then the supremum over $V$, as in \eqref{eqn:weak-net-factor} and \eqref{eqn:game-factor}, gives $\pF\pF(k) \leq \varepsilon_k$.
    The result follows by \cref{prp:voronoi-equivalence}.
\end{proof}
In particular, \citet[Cor.~2.5]{BergW23} proved that $\varepsilon_4 \leq 1 / 2$ on the plane, which immediately yields the following corollary.
\begin{cor}
    \label{cor:voronoi_general_norm}
    On the plane under any norm, four candidates suffice in the Voronoi game to ensure that no challenger captures a majority of voters; that is, $\pF\pF(4) \leq 1 / 2$.
\end{cor}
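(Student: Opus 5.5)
The corollary follows by chaining \cref{prop:weak_nets_to_voronoi} with the planar weak-net bound of \citet[Cor.~2.5]{BergW23}. Fix any norm on $\R^2$ and take $k = 4$ in \cref{prop:weak_nets_to_voronoi}. This gives $\pF\pF(4) \leq \varepsilon_4$, where $\varepsilon_4$ is the weak-net factor for the family of all convex sets in the plane. The key point is that this factor is purely geometric: definition \eqref{eqn:weak-net-factor} refers only to point sets and convex sets, never to the norm. So the bound $\varepsilon_4 \leq 1/2$ of \citet{BergW23} applies verbatim under every norm, and we conclude $\pF\pF(4) \leq 1/2$.

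It remains to translate this inequality into the stated guarantee. By \eqref{eqn:game-factor}, $\pF\pF(4) \leq 1/2$ means the following. For every finite voter set $V$ there is an infimum over coalitions $S \subseteq \R^2$ with $\card S \leq 4$ of the challenger's best payoff, and this infimum is at most $1/2$. As noted after \eqref{eqn:game-factor}, the payoff takes only finitely many values for a fixed $V$. Hence the infimum is attained by some coalition $S$. For that $S$, every challenger $c \in \R^2$ captures at most $\card V / 2$ voters, that is, no challenger captures a strict majority.

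There are two points to check. The first is that the cited result of \citet{BergW23} is stated with the same convention as \eqref{eqn:weak-net-factor}: the net must hit every convex set containing \emph{more than} half of the points, rather than at least half. This strictness is exactly what yields "no strict majority". The second is the edge case where $c$ lies in $S$. There the captured set is empty, because no voter strictly prefers $c$ to itself, so that case is trivial. Once these are confirmed, there is no real obstacle: the corollary is a direct composition of \cref{lem:giveaway} (through \cref{prop:weak_nets_to_voronoi}) with the known weak-net bound. By \cref{prp:voronoi-equivalence}, the same bound also holds for $\pF\pR(4)$ and $\pF\pP(4)$.
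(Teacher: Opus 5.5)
Your proposal is correct and follows the paper's argument exactly: apply \cref{prop:weak_nets_to_voronoi} with $k = 4$, then use the norm-independent bound $\varepsilon_4 \leq 1/2$ of \citet[Cor.~2.5]{BergW23}. Your extra unpacking (the infimum being attained, the trivial case $c \in S$, and the strict-majority convention) is sound; it is bookkeeping that the paper leaves implicit.
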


\medskip
\noindent
\textbf{Weak Nets and Euclidean Metric Elections.}
In general, the points of a weak net need not belong to the candidate set.
Under the $\ell_2$ norm, this can be repaired by rounding each point of the net to the closest candidate, as the following lemma shows.

\begin{lemma}
    \label{lem:euclidean_projection}
    For all positive integers $r$ and $k$, $\pR\pR_{\ell_2}(k) \leq \varepsilon_k$ on $\R^r$.
\end{lemma}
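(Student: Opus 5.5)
Fix $V, C \subseteq \R^r$ finite. Take a weak $\varepsilon$-net $Q$ of size at most $k$ for $V$ with respect to convex sets, achieving the infimum in \eqref{eqn:weak-net-factor} up to an arbitrarily small slack. Round each $q \in Q$ to a nearest candidate $\pi(q) \in C$, breaking ties arbitrarily, and set $S \coloneqq \pi(Q) \subseteq C$, so $\card S \leq k$. The goal is to show that for every challenger $c \in C \setminus S$ the captured set $W \coloneqq \setst{v \in V}{c \succ_v S}$ lies in a convex set disjoint from $Q$. Then $\card W / \card V$ is at most the net guarantee, and taking $\inf$ and $\sup$ as in \eqref{eqn:game-factor} gives $\pR\pR_{\ell_2}(k) \leq \varepsilon_k$.

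The convex set will be $K \coloneqq \bigcap_{s \in S} \Dom_{\ell_2}(c, s)$. Under $\ell_2$ each $\Dom_{\ell_2}(c, s)$ is an open half-space, namely the side of the perpendicular bisector of $cs$ containing $c$. Hence $K$ is convex, in any dimension $r$; this is the only place where the Euclidean norm is used. By definition $W \subseteq K$.

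It remains to check $K \cap Q = \varnothing$. Take $q \in Q$ and $s = \pi(q)$. Since $s$ is a closest candidate to $q$ and $c \in C$, we have $d(q, s) \leq d(q, c)$. So $q \notin \Dom_{\ell_2}(c, s)$, and thus $q \notin K$. This is immediate, and the tie case is harmless because the dominance region is defined with strict inequality.

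The main point of care is handling the infimum in the definition of $\varepsilon_k$, which may not be attained. I would fix $\delta > 0$ and choose $Q$ so that every convex set avoiding $Q$ has mass at most the inner infimum plus $\delta$, which is at most $\varepsilon_k + \delta$. Then $\sup_{c} \card W / \card V \leq \varepsilon_k + \delta$. Letting $\delta \to 0$ gives the bound, since the left side does not depend on $\delta$ once $S$ is optimized.

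One more subtlety: if $c \in S$, it captures nothing, so only $c \in C \setminus S$ matters. Finally, the general-position assumption plays no role in this argument.
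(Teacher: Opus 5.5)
Your proof is correct and follows essentially the same route as the paper. You round each point of the weak net to a nearest candidate, intersect the Euclidean dominance half-spaces $\Dom_{\ell_2}(c,s)$ over $s\in S$ to get a convex set containing the captured voters, and use the nearest-candidate choice to show this set misses $Q$; the only extra detail the paper records is the trivial case $Q=\varnothing$ (where $S$ would be an empty, infeasible coalition), which is harmless because the net guarantee is then $1$.
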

\begin{proof}
    Fix finite sets of voters $V \subseteq \R^r$ and candidates $C \subseteq \R^r$, and let $Q \subseteq \R^r$ be an $\varepsilon$-net for $V$ with $\card Q \leq k$ and some $\varepsilon > 0$.

    If $Q = \varnothing$, the conclusion is trivial, so we assume $Q \neq \varnothing$.
    For each $q \in Q$, choose a closest candidate $s\paren{q} \in C$ such that $\norm{q - s\paren{q}}_2 = \min_{a \in C} \norm{q - a}_2$, and take $S \coloneqq \setst{s\paren{q}}{q \in Q} \subseteq C$ as the defending coalition.
    For each $c \in C \setminus S$, let the region in which $c$ is preferred to the entire coalition be
    \begin{equation*}
        K_c \coloneqq \bigcap_{s \in S} \Dom_{\ell_2}(c, s).
    \end{equation*}
    Since $\Dom_{\ell_2}(c, s)$ is convex for each $s \in S$, it follows that $K_c$ is convex.
    Moreover, $K_c$ and $Q$ are disjoint.
    Suppose, for contradiction, that there exists $q \in Q \cap K_c$.
    Then
    \begin{equation*}
        \norm{q - c}_2 < \min_{s \in S} \norm{q - s}_2 \leq \norm{q - s\paren{q}}_2,
    \end{equation*}
    a contradiction by the definition of $s\paren{q}$.
    Since $K_c \cap V$ is the set of voters that prefer $c$ to $S$, we obtain
    \begin{equation*}
        \frac{\card{\setst{v \in V}{c \succ_v S}}}{\card V} \leq
        \sup_{\substack{K \subseteq \R^r \text{ convex} \\ K \cap Q = \varnothing}}\frac{\card{K \cap V}}{\card V}.
    \end{equation*}
    Taking the infimum over $Q$ and then the supremum over $V$ and $C$ gives $\pR\pR_{\ell_2}(k) \leq \varepsilon_k$.
\end{proof}

Thus, every upper bound for weak convex nets transfers without loss to general metric elections under the $\ell_2$ norm.
Together with \cite[Cor.~2.5]{BergW23}, this gives the following theorem.

\begin{theorem}
    \label{thm:4_suffice_euclidean}
    Every planar metric election under the $\ell_2$ norm admits a Condorcet-winning set of size at most four, that is, $\pR\pR_{\ell_2}(4) \leq 1 / 2$.
\end{theorem}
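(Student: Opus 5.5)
This theorem follows by combining \cref{lem:euclidean_projection} with the weak-net bound of \citet[Cor.~2.5]{BergW23}. Taking $r = 2$ and $k = 4$ in \cref{lem:euclidean_projection} gives $\pR\pR_{\ell_2}(4) \leq \varepsilon_4$ on $\R^2$. Since \citet{BergW23} show $\varepsilon_4 \leq 1/2$ on the plane, we get $\pR\pR_{\ell_2}(4) \leq 1/2$. This inequality is exactly the statement of the theorem.

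The one point that needs care is how the suprema and infima are ordered. The lemma's proof picks a weak net $Q$ and rounds each point to a nearest candidate. The bound $\varepsilon_4 \leq 1/2$ is a statement about a supremum over $V$ of an infimum over $Q$, and that infimum may not be attained. So for a given $V$ we should not assume there is a $Q$ with uncovered fraction exactly at most $1/2$. Instead we fix $\delta > 0$ and choose $Q$ with $\card Q \leq 4$ such that every convex set avoiding $Q$ contains at most a $1/2 + \delta$ fraction of $V$.

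The lemma's argument then yields a coalition $S \subseteq C$ with $\card S \leq 4$ against which every $c \in C \setminus S$ captures at most a $(1/2 + \delta)\card V$ voters. The number of voters captured is an integer, and only finitely many values are possible. Choosing $\delta < 1/(2\card V)$ therefore forces every challenger to capture at most $\card V / 2$ voters. This is precisely the condition for $S$ to be a Condorcet-winning set.

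Two degenerate cases remain. If $\card C \leq 4$, we can take $S = C$, and the condition holds vacuously. The case where the rounded coalition has fewer than four distinct members is already covered, since the theorem only requires size at most four.

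I expect no real obstacle here. The substantive work is the weak-net result of \citet{BergW23} and the convexity argument in \cref{lem:euclidean_projection}, both of which are taken as given.
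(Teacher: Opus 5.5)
Your proposal is correct and follows the paper's argument: the theorem is obtained by applying \cref{lem:euclidean_projection} with $r = 2$ and $k = 4$, together with the bound $\varepsilon_4 \leq 1/2$ from \citet[Cor.~2.5]{BergW23}. Your $\delta$-approximation step is valid but unnecessary. For a fixed $V$, the uncovered fraction takes only finitely many values (multiples of $1/\card V$), so the infimum over $Q$ is in fact attained.
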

To the best of our knowledge, this is the first bound for the $\ell_2$ norm that improves on the general bound of five due to \citet{SongNL26}.
The reduction itself is not restricted to the plane.
The dimensional restriction in \cref{thm:4_suffice_euclidean} comes only from the currently available bounds for weak convex nets.
The nearest-candidate rounding argument, however, is specific to the $\ell_2$ norm.
In Appendix~\ref{apx:applications}, we give an example on the plane that shows that the rounding argument fails for every other $\ell_p$ norm.

\medskip
\noindent
\textbf{Strong Rectangular Nets and Manhattan Strong Peer Selection.}
Under the $\ell_1$ norm, the second part of \cref{lem:giveaway} gives more than convex separation.
Taking the smallest axis-parallel rectangle containing all captured voters produces a rectangle disjoint from the entire defending coalition.
This gives a reduction from strong rectangular nets to strong peer selection.

\begin{lemma}
    \label{lem:strong_nets_to_peer}
    On the plane, for every positive integer $k$, $\pP\pF_{\ell_1}(k) \leq \varsigma_k^{\calR}$.
\end{lemma}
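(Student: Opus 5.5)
The plan mirrors the proof of \cref{prop:weak_nets_to_voronoi}, with convex sets replaced by axis-parallel rectangles and the weak net replaced by a strong one. Fix a finite voter set $V \subseteq \R^2$ and a strong rectangular net $Q \subseteq V$ with $\card Q \leq k$. Take $Q$ itself as the defending coalition $S \coloneqq Q$; this is feasible for the peer-restricted defender precisely because $Q \subseteq V$. (If $Q$ is empty we add an arbitrary voter, which only helps the defender.) Now fix any challenger $c \in \R^2$ and let $W \coloneqq \setst{v \in V}{c \succ_v S}$ be the set of captured voters. If $c \in S$, or if $W = \varnothing$, there is nothing to prove, so assume $c \notin S$ and $W \neq \varnothing$.

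The key step is to show that the bounding box $R$ of $W$, that is, the smallest closed axis-parallel rectangle containing $W$, is disjoint from $S$. For each $s \in S$ we have $W \subseteq \Dom_{\ell_1}(c, s)$. By the second part of \cref{lem:giveaway}, $\Dom_{\ell_1}(c, s)$ lies in an open axis-parallel half-plane $H_s$ not containing $s$, say $H_s = \setst{x}{x_1 < t}$ with $s_1 \geq t$, up to orientation and choice of axis. The set $R \cap H_s$ is an axis-parallel rectangle containing $W$, so minimality of the bounding box gives $R \subseteq H_s$. Hence $s \notin R$. Since this holds for every $s \in S$, we get $R \cap S = \varnothing$.

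It follows that the fraction of voters captured by $c$ is at most $\card{R \cap V}/\card V$, where $R$ is an axis-parallel rectangle avoiding $Q$. Therefore
\begin{equation*}
\sup_{c \in \R^2} \frac{\card{\setst{v \in V}{c \succ_v S}}}{\card V} \leq \sup_{\substack{K \in \calR\\ K \cap Q = \varnothing}} \frac{\card{K \cap V}}{\card V}.
\end{equation*}
Taking the infimum over strong nets $Q \subseteq V$ of size at most $k$ and then the supremum over $V$ gives $\pP\pF_{\ell_1}(k) \leq \varsigma_k^{\calR}$. If the family $\calR$ is meant to consist of open rectangles, we instead shrink $R$ slightly; since $S$ is finite and $W \subseteq R$, a slightly enlarged open rectangle still avoids $S$, so either convention works.

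The only substantive point is the second part of \cref{lem:giveaway}, which provides axis-parallel rather than arbitrary separating half-planes; we take it as given. The one place needing care is the minimality argument in the key step. The separating half-plane is axis-parallel, so intersecting it with $R$ yields a smaller axis-parallel rectangle, and this is exactly why the bounding box avoids $s$. A general convex separator would not give this, which is why the lemma is stated for $\ell_1$.
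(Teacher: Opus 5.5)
Your proof is correct and follows the paper's argument: you bound the captured voters $W$ by their closed bounding box, use the axis-parallel separating half-plane from \cref{lem:giveaway} to show the box misses every coalition member, and compare with the strong rectangular net factor after taking the infimum over coalitions $Q \subseteq V$ and the supremum over $V$. One minor slip: in your last paragraph you mean to \emph{enlarge} the closed box slightly to an open rectangle, not shrink it, as your own sentence goes on to say.
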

\begin{proof}
    Fix a defending coalition $S \subseteq V$ with $\card S \leq k$ and a challenger $c \in \R^2$ that does not belong to $S$.
    Let $W \coloneqq \setst{v \in V}{c \succ_v S}$ be the set of voters captured by $c$.
    If $W$ is empty, there is nothing to prove.

    Let $R_W$ be the smallest closed axis-parallel rectangle containing $W$. 
    For every $s \in S$, we have $W \subseteq \Dom_{\ell_1}(c, s)$.
    By \cref{lem:giveaway}, $W$ is contained in an open axis-parallel half-plane that does not contain $s$.
    Since $W$ is finite, it lies at positive distance from the boundary of this half-plane.
    Because the half-plane is axis-parallel, $R_W$ is also strictly contained in it.
    Hence, $s \notin R_W$.
    As this holds for every $s \in S$, we obtain $R_W \cap S = \varnothing$.

    Thus, the voters captured by $c$ lie in an axis-parallel rectangle disjoint from $S$.
    It follows that
    \begin{equation*}
        \sup_{c \in \R^2} \frac{\card{\setst{v \in V}{c \succ_v S}}}{\card V}
        \leq \sup_{\substack{R \in \calR\\R \cap S = \varnothing}} \frac{\card{R \cap V}}{\card V}.
    \end{equation*}
    Taking the infimum over $S$ and then the supremum over $V$ gives $\pP\pF_{\ell_1}(k) \leq \varsigma_k^{\calR}$.
\end{proof}

The same bound holds under the $\ell_\infty$ norm, since rotating the plane by $\pi / 4$ transforms $\ell_\infty$ distances into $\ell_1$ distances up to a multiplicative factor of $\sqrt{2}$.
Hence, $\pP\pF_{\ell_\infty}(k) \leq \varsigma_k^{\calR}$ for every positive integer $k$.
Thus, every upper bound for strong rectangular nets transfers without loss to strong peer selection under the $\ell_1$ and $\ell_\infty$ norms.
In particular, \cref{thm:strong_R_3_nets} yields the following theorem.

\begin{theorem}
    \label{thm:3_suffice_strong_peer}
    Three voters suffice for strong peer selection on the plane under the $\ell_1$ and $\ell_\infty$ norms, that is, $\pP\pF_{\ell_1}(3) \leq 1 / 2$ and $\pP\pF_{\ell_\infty}(3) \leq 1 / 2$.
\end{theorem}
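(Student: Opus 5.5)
The plan is to combine the two results already in hand: the reduction in \cref{lem:strong_nets_to_peer} and the net bound in \cref{thm:strong_R_3_nets}. For the $\ell_1$ norm this is immediate. \cref{lem:strong_nets_to_peer} with $k = 3$ gives $\pP\pF_{\ell_1}(3) \leq \varsigma_3^{\calR}$, and \cref{thm:strong_R_3_nets} gives $\varsigma_3^{\calR} \leq 1/2$. Unwinding the definitions, every finite $V$ then admits a coalition $S \subseteq V$ with $\card S \leq 3$ such that every free challenger captures at most half of the voters. This is exactly the statement for $\ell_1$.

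For the $\ell_\infty$ norm, I would reduce to the $\ell_1$ case by an explicit linear change of coordinates. Let $T(x,y) = (x+y,\, x-y)$. A direct check gives
\[
\norm{T(u)}_1 = \abs{u_1+u_2} + \abs{u_1-u_2} = 2\max\paren{\abs{u_1}, \abs{u_2}} = 2\norm{u}_\infty .
\]
So $T$ is a bijection of $\R^2$ that scales $\ell_\infty$ distances into $\ell_1$ distances by a uniform factor of $2$. It is a rotation by $\pi/4$ composed with a scaling by $\sqrt 2$, which is what the remark after \cref{lem:strong_nets_to_peer} refers to. Because the factor is uniform, every comparison $d_\infty(v,c_1) < d_\infty(v,c_2)$ is equivalent to $d_1(Tv,Tc_1) < d_1(Tc_2,Tv)$.

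Now fix a finite $V \subseteq \R^2$ and apply the $\ell_1$ result to $T(V)$. This yields $S' \subseteq T(V)$ with $\card{S'} \leq 3$ that no $\ell_1$ challenger beats on more than half of $T(V)$. Set $S = T^{-1}(S') \subseteq V$. Since $T$ is bijective on the plane and preserves all preference comparisons, a free $\ell_\infty$ challenger $c$ capturing voters $W$ from $S$ corresponds to the free $\ell_1$ challenger $Tc$ capturing $T(W)$ from $S'$. Hence $\card W \leq \card V/2$, which gives $\pP\pF_{\ell_\infty}(3) \leq 1/2$.

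There is no real obstacle, since all the substantive work sits in \cref{thm:strong_R_3_nets}. The one point to check is that the general-position assumption (strict preferences) is preserved under $T$. It is, because ties correspond exactly under a uniform scaling.
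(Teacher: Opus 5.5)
Your proposal is correct and is essentially the paper's argument: \cref{lem:strong_nets_to_peer} combined with \cref{thm:strong_R_3_nets} gives the $\ell_1$ bound, and the remark after \cref{lem:strong_nets_to_peer} transfers it to $\ell_\infty$ by rotating the plane by $\pi/4$ and scaling. One cosmetic point: your map $T(x,y) = (x+y, x-y)$ has determinant $-2$, so it is $\sqrt{2}$ times a reflection rather than a rotation, which is harmless because your argument only uses the identity $\norm{T u}_1 = 2\norm{u}_\infty$.
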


Consequently, three voters also suffice for peer selection under both norms, yielding a Condorcet-winning set of size at most three in the special case $C = V$.

\medskip
\noindent
\textbf{The One-Candidate Case.}
Finally, we consider the case in which the defending coalition consists of a single candidate.
In this case, we are able to determine the exact factors for the problems considered in this work.
We begin with the $\ell_2$ norm.

\begin{proposition}
    \label{prop:1_euclidean_equiv}
    On the plane, $\pR\pR_{\ell_2}(1) = \pP\pP_{\ell_2}(1) = \pF\pF_{\ell_2}(1) = 2 / 3$.
\end{proposition}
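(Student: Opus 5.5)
Using the hierarchy \eqref{eqn:hierarchy_condorcet}, namely $\pR\pR_{\ell_2}(1) \geq \pP\pP_{\ell_2}(1) \geq \pF\pF_{\ell_2}(1)$, it suffices to prove two inequalities: the upper bound $\pR\pR_{\ell_2}(1) \leq 2/3$ and the lower bound $\pF\pF_{\ell_2}(1) \geq 2/3$.

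\emph{Upper bound.} By \cref{lem:euclidean_projection} with $r = 2$ and $k = 1$, we have $\pR\pR_{\ell_2}(1) \leq \varepsilon_1$. So it remains to show that $\varepsilon_1 \leq 2/3$ on the plane. This is the classical centerpoint theorem. For every finite $V \subseteq \R^2$ there is a point $q$ such that every closed half-plane containing $q$ contains at least $\card V/3$ points of $V$. If a convex set $K$ avoids $q$, then $K$ is separated from $q$ by a line. Hence $K$ lies in an open half-plane missing $q$, whose complement is a closed half-plane containing $q$. Therefore $\card{K \cap V} \leq 2\card V/3$.

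The centerpoint theorem is standard: apply Helly's theorem to the closed half-planes containing more than $2\card V/3$ points. Any three such half-planes share a point of $V$, so all of them have a common point. I would cite it rather than reprove it.

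\emph{Lower bound.} I need an instance $V$ in which every single freely placed defender $s \in \R^2$ admits a challenger capturing a $2/3$ fraction, up to a vanishing error. Under $\ell_2$, when $c$ approaches $s$, the set $\Dom(c,s)$ tends to an open half-plane whose boundary passes through $s$. So the challenger can capture every voter strictly on one side of any line through $s$, minus those on the line itself; perturbing the line handles those. Thus the challenger's payoff against $s$ is at least the largest open half-plane mass avoiding $s$, which is $1 - \min_{H \ni s}$ (closed half-plane mass).

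The natural candidate is three small clusters of $m$ voters each, placed near the vertices of an equilateral triangle. Every point $s$ lies on a line with at most one cluster strictly on the closed side, except for boundary effects. Concretely:
\begin{itemize}
\item if $s$ is outside the triangle, or on its boundary, some line through $s$ has two whole clusters strictly on one side;
\item if $s$ is inside the triangle, the line through $s$ parallel to a side can be chosen so that the two clusters near that side lie strictly on the open side away from $s$.
\end{itemize}
This gives payoff $2/3$, or $(2m - O(1))/(3m)$ once the clusters are spread out to meet general position. The main obstacle is exactly this bookkeeping: making the cluster points distinct and in general position while keeping the bound $2/3 - o(1)$ uniformly over all $s$. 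Taking the supremum over $V$ then gives $\pF\pF_{\ell_2}(1) \geq 2/3$.

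I expect an exact $2/3$ to be achievable with clusters of a single point each, since three voters at the triangle vertices already work. If $s$ is one of the voters, the challenger still captures the other two by moving toward the midpoint of the opposite side. The general-position caveat concerns only ties, which generic placement avoids. Combining the two bounds gives all equalities with value $2/3$.
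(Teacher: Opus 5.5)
Your proposal is correct and matches the paper's proof: the upper bound comes from \cref{lem:euclidean_projection} with the centerpoint bound $\varepsilon_1 \leq 2/3$, and the lower bound uses the three vertices of a triangle, where the paper takes the challenger to be the reflection of the defender's point across the boundary of an open half-plane that contains two vertices and avoids that point, so, as you note, the cluster construction is unnecessary. One small repair: a non-closed convex set avoiding $q$ need not lie in an open half-plane missing $q$, so apply strict separation to the compact set $\conv\paren{K \cap V}$ instead, which is all the counting needs.
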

\begin{proof}
    \citet[Thm.~1]{BanikJCMS16} proved that $\varepsilon_1 \leq 2 / 3$.
    Applying \cref{lem:euclidean_projection} gives $\pR\pR_{\ell_2}(1) \leq \varepsilon_1 \leq 2 / 3$.

    Now, we show that $\pF\pF_{\ell_2}(1) \geq 2 / 3$.
    Let $V$ be the vertex set of a nondegenerate triangle in the plane and fix any point $p \in \R^2$ chosen by the free defender.
    There is a half-plane $H$ that contains two vertices and satisfies $p \notin H$.
    Let $c$ be the reflection of $p$ across the boundary of $H$.
    Then $H = \Dom_{\ell_2}(c, p)$, so $c$ captures at least two of the three voters and $\pF\pF_{\ell_2}(1) \geq 2 / 3$.

    Therefore, by \eqref{eqn:hierarchy_condorcet}, $2 / 3 \geq \pR\pR_{\ell_2}(1) \geq \pP\pP_{\ell_2}(1) \geq \pF\pF_{\ell_2}(1) \geq 2 / 3$, which implies equality.
\end{proof}

The same conclusion does not hold for strong peer selection, since the problem has no nontrivial guarantee under the $\ell_2$ norm with a single candidate.
To see this, place $n$ voters at the vertices of a convex polygon.
For each voter $v$, choose a line that strictly separates $v$ from the convex hull of the remaining voters, and let $c$ be the reflection of $v$ across this line.
The open half-plane containing the remaining voters is then $\Dom_{\ell_2}(c, v)$, so the challenger captures all $n - 1$ other voters.
Letting $n$ tend to infinity gives $\pP\pF_{\ell_2}(1) = \pP\pR_{\ell_2}(1) = 1$.

We next turn to the $\ell_1$ and $\ell_\infty$ norms.
Here, it is the Voronoi game that behaves differently from the other models.
\begin{proposition}
    \label{prop:1_manhattan_equiv}
    On the plane, $\pP\pF_{\ell_1}(1) = \pR\pR_{\ell_1}(1) = \pP\pP_{\ell_1}(1) = 3 / 4$.
    The same equality holds under the $\ell_\infty$ norm.
\end{proposition}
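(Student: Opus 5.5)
We prove the chain $3/4 \geq \pP\pF_{\ell_1}(1)$ and $\pR\pR_{\ell_1}(1) \geq \pP\pP_{\ell_1}(1) \geq 3/4$, together with $\pP\pF_{\ell_1}(1) \geq \pP\pP_{\ell_1}(1)$ from \eqref{eqn:hierarchy_strong}. Two steps are still needed: the upper bound $\pR\pR_{\ell_1}(1) \leq 3/4$, and a lower-bound instance for peer selection. The $\ell_\infty$ statement then follows by rotating the plane by $\pi/4$, which maps $\ell_\infty$ to a scaled $\ell_1$ and preserves every preference.

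\textbf{Strong peer selection.} By \cref{lem:strong_nets_to_peer}, $\pP\pF_{\ell_1}(1) \leq \varsigma_1^{\calR} \leq 3/4$, where the last inequality is the bound of \citet{AshokAG14} in \cref{tab:strong_rectangular_bounds}.

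\textbf{General elections.} The coordinatewise median of $V$ need not be a candidate, so a rounding step is required.
\begin{enumerate}
\item Let $m$ be a point that is both an $x$-median and a $y$-median of $V$. Every open axis-parallel half-plane avoiding $m$ then contains at most half of $V$, and every open quadrant avoiding $m$ contains at most $1/4$ of $V$.
\item Choose $s \in C$ minimizing $\norm{m - s}_1$. For any $c \in C \setminus S$, minimality gives $d(m,c) \geq d(m,s)$, so $m \notin \Dom_{\ell_1}(c,s)$ under general position.
\item The key claim is the following. If $m \notin \Dom_{\ell_1}(c,s)$, then $\Dom_{\ell_1}(c,s)$ is covered by one axis-parallel half-plane avoiding $m$ plus one quadrant avoiding $m$. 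Equivalently, it lies in the complement of a quadrant whose apex is $m$, a region of mass at most $1/2 + 1/4 = 3/4$.
\end{enumerate}
I expect this covering step to be the main obstacle. The $\ell_1$ bisector consists of two parallel axis rays joined by a diagonal segment, and $m$ may sit in various positions relative to it. To prove the claim, I would assume without loss of generality that $c$ lies northeast of $s$, and split into cases according to whether the horizontal or the vertical displacement dominates. In each case I would check that the dominance region of $c$ avoids the southwest quadrant at $m$. The reason to expect this is that $m$ is not in the region, and the region is monotone: if $x$ lies in $\Dom_{\ell_1}(c,s)$, so does every point northeast of $x$.

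If the covering argument turns out too delicate, the fallback is to use $\varsigma_1^{\calR}$-style reasoning directly on rectangles anchored at $m$. The fallback should work because a monotone region not containing $m$ misses the entire closed southwest quadrant at $m$.

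\textbf{Lower bound.} I want $\pP\pP_{\ell_1}(1) \geq 3/4$ with $C = V$, using four voters arranged in a diamond: $(\pm1,0)$ and $(0,\pm1)$, slightly perturbed into general position. For a defender at, say, $(1,0)$, a challenger at $(0,0)$ would capture the other three voters, but the challenger must itself be a voter. So I would replace each vertex by a small cluster and add a central voter, then check that against any defending voter some other voter captures three quarters of the mass, up to a vanishing error. Rebalancing the weights can then push the ratio to $3/4$ in the limit, analogously to the $\varsigma_1^{\calR} \geq 3/4$ construction of \citet{AshokAG14}.

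The cleaner route is to reuse that construction directly. Take their four clusters in pinwheel position, one per quadrant, and verify that, against any defending point, some point of a neighboring cluster dominates the three other clusters in $\ell_1$.
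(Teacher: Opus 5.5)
Your treatment of $\pP\pF_{\ell_1}(1)\le 3/4$ via \cref{lem:strong_nets_to_peer}, your use of the hierarchy, and the rotation to $\ell_\infty$ all match the paper. Both remaining steps, however, have genuine gaps.

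\textbf{Upper bound for $\pR\pR_{\ell_1}(1)$.} Your claim that every open quadrant at a coordinatewise median $m$ has mass at most $1/4$ is false, and so is the mass bound you derive from it. The monotonicity observation itself is correct: for $c$ northeast of $s$, the dominance region misses the closed southwest quadrant at $m$. The problem is that this quadrant can be nearly empty.

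For example, take voters $(-1,10)$, $(-1.1,10.1)$, $(10,-1)$, $(10.1,-1.1)$, $(0,0)$, whose coordinatewise median is $m=(0,0)$, and candidates $C=\{s,c\}$ with $s=(-2,-2)$ and $c=(5,5)$. Then $s$ is the candidate nearest to $m$ (distance $4$ versus $10$). Yet each outer voter is at $\ell_1$-distance $13$ from $s$ and at most $11.2$ from $c$, so $c$ captures $4/5>3/4$ of the voters. With $k$ voters near each outer point, the fraction is $2k/(2k+1)\to 1$. So no covering argument, and no fallback that keeps this defender, can work.

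The missing idea is the paper's combinatorial argument. Under $\ell_\infty$ (or $\ell_1$, using the directions $(\pm1,\pm1)$), a voter's farthest candidate within any subset is extreme in one of four fixed directions, so at most four members of any subset are ever ranked last. Suppose every candidate were beaten by more than $3/4$ of the voters. Following a beating challenger from each candidate gives a cycle $c_0,\dots,c_{m-1}$ in which each $V_i=\{v : c_i\succ_v c_{i+1}\}$ has fewer than $|V|/4$ voters. Every voter who ranks $c_i$ last within the cycle lies in $V_{i-1}$, so at most four of these sets would cover $V$, which is impossible.

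\textbf{Lower bound for $\pP\pP_{\ell_1}(1)$.} The diamond with a central voter fails. If the defender picks the central voter, every challenger is a voter near some vertex. Such a challenger is at distance about $2$ from the other vertices, while the center is at distance about $1$, so it captures only about $1/4$, whatever the weights.

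The pinwheel route is exactly the step you leave unverified. With four tight clusters in which each center is at distinct distances from the other three, it cannot work in any metric. Say cluster $i$ is beaten by cluster $j$ if both remaining clusters are closer to $j$ than to $i$. If every cluster were beaten, this would define a functional graph on four nodes, which must contain a cycle. A $2$-cycle is directly contradictory. A $3$-cycle gives the fourth cluster cyclic preferences. A $4$-cycle gives, after relabeling, $d_{12}>d_{13}>d_{14}>d_{24}>d_{12}$.

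The paper's remedy is to add peers outside the clusters. Working under $\ell_\infty$ and then rotating, it places clusters near four rotated points $v_1,\dots,v_4$ and adds four single voters $c_1,\dots,c_4$ near the center. These are chosen so that the cluster at $v_i$ ranks them cyclically as $c_i\succ c_{i+1}\succ c_{i+2}\succ c_{i+3}$, and every $c_i$ is closer to every cluster than any two clusters are to each other. A defender $c_i$ then loses $3q$ of the $4q+4$ voters to $c_{i-1}$, and a defender in the $i$-th cluster loses $3q$ voters to $c_i$.
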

\begin{proof}
    We prove the result for the $\ell_\infty$ norm. First, \citet[Thm.~2]{AshokAG14} proved that $\varsigma_1^\calR = 3 / 4$.
    By \cref{lem:strong_nets_to_peer} and the rotation argument above, $\pP\pF_{\ell_\infty}(1) \leq \varsigma_1^\calR = 3 / 4$.

    Second, in any metric election under the $\ell_\infty$ norm, at most four candidates can appear last in the voters' rankings, each attaining an extreme horizontal or vertical coordinate.
    Suppose that every candidate is defeated by some challenger by more than $3 / 4$ of the voters.
    Choosing one such challenger for every candidate defines a directed graph in which every vertex has out-degree one.
    This graph contains a directed cycle $c_0, c_1, \ldots, c_{m - 1}$ such that $c_{i + 1}$ defeats $c_i$ by more than $3 / 4$ of the voters, where the indices are taken modulo $m$.
    Define
    \begin{equation*}
        V_{i} \coloneqq \setst{v \in V}{c_i \succ_v c_{i+1}}.
    \end{equation*}
    Then, since $c_{i + 1}$ defeats $c_i$ by more than $3 / 4$ of the voters, we have $\card{V_{i}} < \card{V} / 4$ for $i = 0, \dots, m - 1$.
    Now, restrict each voter's ranking to the candidates in the cycle.
    For each candidate $c_i$ that appears last in at least one restricted ranking, every voter who ranks $c_i$ last belongs to $V_{i - 1}$.
    Consequently, the sets $V_{i - 1}$ corresponding to these candidates cover $V$.
    There are at most four such sets, and each contains less than a quarter of the voters, so their union cannot cover $V$.
    This contradiction shows that $\pR\pR_{\ell_\infty}(1) \leq 3 / 4$.

    Now, we show that $\pP\pP_{\ell_\infty}(1) \geq 3 / 4$.
    We first construct a metric election under the $\ell_\infty$ norm in which all voter rankings are cyclic permutations of one another.
    We then convert it into a peer-selection instance by replacing each voter with a large cluster of nearby voters and including the candidates in the voter set.
    Let $R\paren{x, y} \coloneqq \paren{-y, x}$ denote rotation by $\pi / 2$, and take all indices modulo four.
    Set $c_1 \coloneqq \paren{2, 1}$ and $v_1 \coloneqq \paren{3, 4}$, and define $c_i \coloneqq R\paren{c_{i - 1}}$ and $v_i \coloneqq R\paren{v_{i - 1}}$ for $i = 2, 3, 4$ (see \cref{fig:manhattan_one_candidate}).
    \begin{figure}[ht!]
        \centering
        \input{figures/manhattan-one-candidate}
        \caption{Lower bound used in the proof of \cref{prop:1_manhattan_equiv}.}
        \label{fig:manhattan_one_candidate}
    \end{figure}
    The distances from $v_1$ to $c_1, c_2, c_3, c_4$ are $3, 4, 5, 6$, respectively.
    By symmetry, the ranking at each $v_i$ is
    \begin{equation}
        \label{eqn:1_manhattan_equiv_eq_1}
        c_i \succ_{v_i} c_{i + 1} \succ_{v_i} c_{i + 2} \succ_{v_i} c_{i + 3}.
    \end{equation}
    Consequently, for every $i \in [4]$, $c_{i - 1}$ is preferred to $c_i$ at each $v_j$ with $j \neq i$.
    Moreover, every voter is closer to every candidate than to any other voter,
    \begin{equation}
        \label{eqn:1_manhattan_equiv_eq_2}
        \max_{i, j \in [4]} \norm{v_j - c_i}_\infty < \min_{\substack{i, j \in [4] \\ i \neq j}} \norm{v_j - v_i}_\infty.
    \end{equation}

    Fix a positive integer $q$.
    For every $i \in [4]$, replace $v_i$ by a cluster $U_{i, q}$ of $q$ distinct points sufficiently close to $v_i$.
    Choose the clusters sufficiently small that the rankings in \eqref{eqn:1_manhattan_equiv_eq_1} are preserved and, whenever $i \neq j$, every voter in $U_{j, q}$ is closer to $c_i$ than to every point in $U_{i, q}$.
    Take $V_q \coloneqq \set{c_1, c_2, c_3, c_4} \cup \bigcup_{i = 1}^4 U_{i, q}$ as a peer selection instance.
    Suppose first that the defender chooses $c_i$ for some $i \in [4]$.
    Then $c_{i - 1}$ is preferred to $c_i$ by every voter in $U_{j, q}$ with $j \neq i$, and therefore the challenger captures at least $3q$ voters.

    Suppose instead that the defender chooses $s \in U_{i, q}$ and the challenger chooses $c_i$.
    By \eqref{eqn:1_manhattan_equiv_eq_2}, every voter in $U_{j, q}$ with $j \neq i$ is closer to $c_i$ than to $s$.
    Thus, $c_i$ again captures at least $3q$ voters.

    Thus, for every choice of the defender, the challenger captures at least $3q$ of the $4q + 4$ voters.
    Letting $q \to \infty$ shows that $\pP\pP_{\ell_\infty}(1) \geq 3 / 4$.
    Since $\pP\pF_{\ell_\infty}(1) \leq 3 / 4$, \eqref{eqn:hierarchy_strong} yields $3 / 4 \geq \pP\pF_{\ell_\infty}(1) \geq \pP\pP_{\ell_\infty}(1) \geq 3 / 4$.
    The upper bound for $\pR\pR_{\ell_\infty}(1)$ established above and \eqref{eqn:hierarchy_condorcet} similarly yield $3 / 4 \geq \pR\pR_{\ell_\infty}(1) \geq \pP\pP_{\ell_\infty}(1) \geq 3 / 4$.
    Hence, equality holds throughout.
    Rotating the plane by $\pi / 4$ and scaling by $1 / \sqrt{2}$ gives the same equalities under the $\ell_1$ norm.
\end{proof}

We now compare these values with the Voronoi game, whose one-candidate factor is smaller under the $\ell_1$ and $\ell_\infty$ norms.
Consider the $\ell_1$ norm and let the defender choose a coordinate-wise median $p$ of the voter set.
By \cref{lem:giveaway}, the voters captured by any challenger are contained in an open axis-parallel half-plane that does not contain $p$.
By the choice of $p$, every such half-plane contains at most half of the voters. Thus, $\pF\pF_{\ell_1}(1) \leq 1 / 2$.
An instance consisting of two distinct voters gives the reverse inequality.
Therefore, $\pF\pF_{\ell_1}(1) = 1 / 2$.
By the same rotation argument, $\pF\pF_{\ell_\infty}(1) = 1 / 2$.

\section{A Norm-Independent Two-Candidate Bound}
\label{sec:general_norms}

We finally turn to a guarantee that is independent of the choice of the norm.
Consider an election in which all candidates lie at the same distance from some voter.
Geometrically, the candidates may be placed arbitrarily on the boundary of a unit ball centered at the origin. From the perspective of the election, this corresponds to a voter who is indifferent among all candidates.
This simple structure is enough to guarantee a Condorcet-winning pair under every norm on the plane.
Moreover, the pair can be required to contain any prescribed candidate.
Throughout this section, we drop the general-position assumption.
For $x \in \calX$ and $\alpha \geq 0$, we denote the closed ball of radius $\alpha$ centered at $x$ by $B\paren{x, \alpha}$.

\begin{proposition}
    \label{prop:undecided_voter}
    On the plane, under every norm, if an instance contains an undecided voter that is equidistant from every candidate, then every candidate belongs to a Condorcet-winning set of size at most two.
\end{proposition}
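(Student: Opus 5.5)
The plan is to put the undecided voter $u$ at the origin and rescale so that every candidate lies on the unit sphere $\partial B(0,1)$. Fix the prescribed candidate $a$. If no challenger beats $a$ alone by a strict majority, then $\set{a}$ already works. Otherwise, let $b$ be a candidate maximizing $\card{\setst{v\in V}{b \succ_v a}}$, which is the strongest single challenger to $a$. The goal is then to show that $S=\set{a,b}$ is Condorcet-winning. This means showing that for every $c\notin S$ the captured set $W_c \coloneqq \Dom(c,a)\cap\Dom(c,b)\cap V$ has $\card{W_c}\le \card V/2$.

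The geometric heart is a structural lemma for points on a common norm sphere. If $x,y\in\partial B(0,1)$, then $0$ lies on the bisector $\setst{z}{d(z,x)=d(z,y)}$. In the plane, I claim the bisector separates $\R^2$ into two regions, and each region is star-shaped with respect to the origin. I would prove this by checking monotonicity of $t\mapsto d(tz,x)-d(tz,y)$ along rays from the origin, using convexity of the norm. The main tool is the triangle inequality applied to $tz$, $z$ and $x$, together with the fact that $\norm{x}=\norm{y}$. \cref{lem:giveaway} enters here, since it shows that each dominance region $\Dom(x,y)$ misses $y$ from its convex hull.

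With this, the three bisectors among $a,b,c$ all pass through the origin. The nearest-candidate regions for $\set{a,b,c}$ are then three star-shaped sectors meeting at $u$, arranged in the circular order of $a,b,c$ around the sphere. I expect this to be the main obstacle: non-strictly-convex norms allow thick bisectors and ties, and general position has been dropped. For that reason I would first treat strictly convex norms and then pass to the limit by approximation, or handle ties directly using strict inequalities in the dominance regions.

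Given the sector picture, the counting would go as follows. The captured set $W_c$ is the sector of $c$, and it lies inside $\Dom(c,a)$. The region $\Dom(b,a)$ consists of the sector of $b$ together with the part of the sector of $c$ on $b$'s side of the $a$–$b$ bisector. Comparing the circular orders, I would show that either $\Dom(c,a)$ strictly contains $\Dom(b,a)\cup W_c$ up to a set disjoint from $W_c$, or the roles of $b$ and $c$ swap symmetrically. Thus $\card{W_c}>\card V/2$ together with the maximality of $b$ should force $\card{\Dom(c,a)\cap V}>\card{\Dom(b,a)\cap V}$, which is a contradiction.

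If this direct comparison fails for some circular order, my fallback is a different choice of $b$. I would take $b$ to be the candidate adjacent to the antipode $-a$ in angular order on the sphere, which is the $\ell_2$ intuition. I would then show directly that every sector avoiding both $a$ and $b$ lies in an open half-plane through the origin whose complement contains a point of $\set{a,b}$. From there I would bound $\card{W_c}$ by comparing it with the voters preferring $a$ or $b$.
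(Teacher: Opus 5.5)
\textbf{Choosing $b$ as the strongest pairwise challenger to $a$ does not work, and the fallback cannot work either.} The first failure already shows up under $\ell_2$, where your sector picture holds exactly. Take $a=(1,0)$, $b=(0,1)$, $c=(-1,0)$, and the undecided voter $u$ at the origin. Add four voters near $(-2,0)$, with ranking $c\succ b\succ a$, and two voters near $(1,2)$, with ranking $b\succ a\succ c$, so $n=7$.
\begin{itemize}
\item Six voters prefer $b$ to $a$ and only four prefer $c$ to $a$. So $\{a\}$ is not winning, and $b$ is the unique maximizer you would select.
\item Yet $c$ is preferred to both $a$ and $b$ by $4>7/2$ voters.
\item The comparison you hope for in the counting step is simply false. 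Here $\Dom(c,a)=\{z_1<0\}$ and $\Dom(b,a)=\{z_2>z_1\}$ are half-planes through the origin, and neither contains the other. The bulk of $\Dom(b,a)$ sits exactly in the wedge that $c$ captures, so maximality of $b$ yields no contradiction.
\item The correct partner here is $c$, which is the top choice of $4\ge n/2$ voters.
\end{itemize}
The fallback fails for a more basic reason: it picks $b$ from the positions of the candidates alone, and no such choice can work. For any third candidate $c$, place all voters other than $u$ near $c$. They strictly prefer $c$ to every other candidate, so $c$ captures $n-1>n/2$ of them.

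\textbf{The missing idea is a partner chosen from where voters' top choices lie along $\partial B$, together with a structural fact about candidate space rather than voter space.} The paper proves the following. For each voter $v$ and candidate $c$, the set of points of $\partial B$ that $v$ weakly prefers to $c$ is a point, a closed arc, or all of $\partial B$. This holds because the boundaries of a planar convex body and a homothet of it cross at most twice. Hence each voter's upper contour sets are contiguous arcs of candidates. Fix a top candidate for each voter and use plurality scores. Either some candidate $c$ has score at least $n/2$, in which case $\{a,c\}$ works. Or you walk clockwise from $a$ to the first candidate $b$ at which the accumulated score exceeds $n/2$. Then each of the two open arcs between $a$ and $b$ carries plurality at most $n/2$. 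By the arc property, a challenger in one arc can capture only voters whose top choice lies in that same arc. Your star-shapedness lemma and three-candidate sector picture, even where true, supply neither this arc property for individual voters' rankings nor a median-type choice of partner. Without those, no bound on the captured set follows.
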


\begin{proof}
    Fix a set of voters $V = \curly{v_1, \dots, v_n}$, an undecided voter $u \in V$, and a candidate $c_1 \in C$.
    We may assume that $u$ is at the origin and at unit distance from every candidate.
    Thus, every candidate lies on the boundary of the unit ball $B=B(0,1)$.
    For every voter $v \in V$ and candidate $c \in C$, let
    \begin{equation*}
        A_v(c) \coloneqq \setst{x \in \partial B}{d\paren{v, x} \leq d\paren{v, c}}.
    \end{equation*}
    These are the points on $\partial B$ that $v$ weakly prefers to $c$.
    The ball $B\paren{v, d\paren{v, c}}$ is a translated and scaled copy of $B$.
    The boundaries of a planar convex body and any translated and scaled copy of it cross at most twice~\cite[Sec.~2.2]{AgarwalPS08}.
    Hence, $A_v(c)$ is always either a point, a closed arc of $\partial B$, or all of $\partial B$.
    Consequently, the candidates at least as close to $v$ as $c$ occur consecutively around $\partial B$.

    For every voter $v \in V$, choose an arbitrary closest candidate $\operatorname{top}\paren{v} \in C$.
    For a candidate $c \in C$, define its \emph{plurality score} by $\plu\paren{c} \coloneqq \card{\setst{v \in V}{\operatorname{top}\paren{v} = c}}$, and for $X \subseteq C$, define $\plu\paren{X} \coloneqq \sum_{c \in X} \plu\paren{c}$.

    Suppose first that $\plu\paren{c} \geq n / 2$ for some $c \in C$.
    Every voter whose top candidate is $c$ is at least as close to $c$ as to any challenger, so any candidate outside $\set{c_1, c}$ is preferred to this pair by at most $n - \plu\paren{c} \leq n / 2$ voters.
    Hence, $\set{c_1, c}$ is a Condorcet-winning set.
    We may therefore assume that $\plu\paren{c} < n / 2$ for every $c \in C$.
    
    Starting immediately after $c_1$, traverse the candidates in clockwise order around $\partial B$, and let $c_2$ be the first candidate for which the accumulated plurality score, including $\plu\paren{c_2}$, exceeds $n / 2$.
    Such a candidate exists because the candidates other than $c_1$ have total plurality score $n - \plu\paren{c_1} > n / 2$.
    Let $I_1$ be the candidates strictly between $c_1$ and $c_2$ in this clockwise order, and let $I_2$ be the candidates strictly between $c_2$ and $c_1$.
    By the choice of $c_2$, we have $\plu\paren{I_1} \leq n / 2$ and $\plu\paren{I_1} + \plu\paren{c_2} > n / 2$.
    Therefore,
    \begin{equation*}
        \plu\paren{I_2} = n - \plu\paren{c_1} - \plu\paren{c_2} - \plu\paren{I_1} < \frac{n}{2}.
    \end{equation*}

    Take a challenger $b \in I_i$ for $i \in \set{1, 2}$.
    If a voter $v$ prefers $b$ to both $c_1$ and $c_2$, then $A_v(b)$ contains both $b$ and $\operatorname{top}\paren{v}$ but contains neither $c_1$ nor $c_2$.
    Since $A_v(b)$ is a connected arc, it follows that $\operatorname{top}\paren{v} \in I_i$.
    Thus,
    \begin{equation*}
        \card{\setst{v \in V}{b \succ_v \set{c_1, c_2}}} \leq \plu\paren{I_i} < \frac{n}{2}.
    \end{equation*}
    Hence, $\set{c_1, c_2}$ is a Condorcet-winning set.
    Since $c_1$ was arbitrary, the result follows.
\end{proof}

The proof only uses the convexity of the unit ball, so the result also extends to asymmetric gauges.
The proof also does not require the center of the unit ball to be a voter; it only requires all candidates to lie on its boundary.
The example in \cref{fig:undecided_voter_example} shows that this more general statement is tight for every $\ell_p$ norm.
For every $1 \leq p \leq \infty$, the three candidates are equidistant from the origin, while the voters induce a Condorcet cycle, so no candidate is a Condorcet winner.

\begin{figure}[H]
    \centering
    \input{figures/undecided-voter-example}
    \caption{Example producing a Condorcet cycle under every $\ell_p$ norm.
    The candidates are $a = \paren{1, 0}$, $b = \paren{0, 1}$, and $c = \paren{-r_p, -r_p}$, where $r_p \coloneqq 2^{-1 / p}$ for $p < \infty$ and $r_\infty \coloneqq 1$, and the voters are $v_1 = \paren{2, 1 / 4}$, $v_2 = \paren{-2, 7 / 4}$, and $v_3 = \paren{1 / 4, -2}$.
    The segment along the negative diagonal traces the position of $c$ as $p$ increases.}
    \label{fig:undecided_voter_example}
\end{figure}

\section{Conclusion}

Representing political preferences in the plane is familiar both from political discussions in the media and from academic work across economics, political science, mathematics, and computer science.
Its simplicity makes planar metric models particularly appealing for studying collective decisions.
In this paper, we use the geometry of these models to obtain sharper guarantees for Condorcet-winning sets and related selection problems.

The notion of Condorcet-winning sets introduced by \citet{Elkind15} provides a natural way to address Condorcet's paradox and may also be useful in practical selection problems.
In many such problems, the alternatives are the participants themselves, making peer selection the relevant restriction.
We place the Condorcet problem, peer selection, strong peer selection, and the Voronoi game in a common two-player framework that separates the restrictions imposed on the defending coalition and the challenger.
This framework makes the relationships among these problems explicit and, more importantly, connects them to weak and strong $\varepsilon$-nets.

These connections yield improved bounds under the main planar norms.
Under the $\ell_2$ norm, rounding weak $\varepsilon$-nets to the candidate set shows that four candidates suffice for the general Condorcet-winning-set problem, improving the general guarantee of five candidates~\cite{SongNL26}.
Under the $\ell_1$ and $\ell_\infty$ norms, our bound $\varsigma_3^{\calR} \leq 1 / 2$ for strong rectangular nets shows that three voters suffice even for strong peer selection, and hence also for peer selection.
The geometric result improves the previous upper bound of $9 / 16$~\cite{AshokAG14}, while the lower bound for two points shows that three is the minimum size that can guarantee the threshold $1 / 2$.
For the small collective decisions that motivate peer selection, a three-person committee is also a natural and practically meaningful guarantee.

Several questions remain open.
The most immediate is whether peer selection is easier than the Condorcet-winning-set problem for general metric elections.
Even when the defending coalition consists of a single candidate, it remains open whether $\pR\pR_{\ell_p}(1) = \pP\pP_{\ell_p}(1)$ for $1 < p < \infty$ and $p \neq 2$.

On the geometric side, we believe that \cref{thm:strong_R_3_nets} is tight, that is, that $\varsigma_3^{\calR} = 1 / 2$.
However, it seems that a matching lower-bound construction will require different ideas.
More broadly, to the best of our knowledge, this is the first use of $\varepsilon$-nets to derive bounds for these restricted Condorcet problems, and the connection provides a concrete route toward further improvements.
For weak nets, the current bounds for three points are $5 / 11 \leq \varepsilon_3 \leq 8 / 15$~\cite{MustafaR09}.
An improvement to $\varepsilon_3 \leq 1 / 2$ would, through \cref{lem:euclidean_projection}, show that three candidates suffice under the $\ell_2$ norm.

\section*{Acknowledgements}

We thank Cristian Palma Foster for his contributions during the early stages of this work.
We thank Claire Chang, Frank Connor, and Haripriya Pulyassary for helpful discussions.
We thank Paul Gölz, Omar El Housni, and David Shmoys for their guidance and support.
Finally, we thank the participants in ORIE 6150, whose discussions motivated us to study this problem.

\section*{AI Disclosure}
ChatGPT was used during the drafting of this manuscript to generate figures, revise the writing, and verify the proofs.
For the technical work, ChatGPT generated code to search for lower bounds on strong rectangular $\varepsilon$-nets and to test, using mixed-integer programming, the coverage of a point configuration and all its symmetries in the proof of the upper bounds.

\bibliographystyle{plainnat}
\bibliography{bibliography}

\clearpage
\appendix

\section{Deferred details for \nameref{sec:snets}}
\label{apx:snets}

\medskip
\noindent

\textbf{Proof of \cref{thm:strong_R_3_nets}.} We first introduce some notation that will be used throughout the proof.
Fix a finite point set $V \subseteq \R^2$. For any set $X \subseteq \R^2$, write
\begin{equation*}
    \mu\paren{X} \coloneqq \frac{\card{X \cap V}}{\card V}.
\end{equation*}
We use the quartile grid and cell labels from \cref{fig:grid}, and let $p$ denote the intersection of its two middle lines.
We number the rows from north to south and the columns from west to east.
We call a rectangle \emph{dangerous} if its mass is greater than $1 / 2$, and \emph{safe} otherwise.
Every rectangle that does not contain $p$ is contained in two rows or two columns and is therefore safe.

For a point $z$, define the regions
\begin{align*}
    \mathsf W(z) &\coloneqq \setst{q \in \R^2}{x\paren{q} \leq x(z)}, &
    \mathsf E(z) &\coloneqq \setst{q \in \R^2}{x\paren{q} \geq x(z)}, \\
    \mathsf N(z) &\coloneqq \setst{q \in \R^2}{y\paren{q} \geq y(z)}, &
    \mathsf S(z) &\coloneqq \setst{q \in \R^2}{y\paren{q} \leq y(z)}
\end{align*}
where $x$ and $y$ denote the coordinate maps.
We denote their respective interiors by $\Wof{z}$, $\Eof{z}$, $\Nof{z}$, and $\Sof{z}$. 
We say that a rectangle $R \in \calR$ is \emph{east of $z$} if it is contained in $\Eof{z}$, and similarly for the other three directions.

We use two types of selected points. The first type consists of \emph{extreme points}. For a nonempty subset $X \subseteq V$, the symbols
\begin{equation*}
    \West\paren{X}, \qquad \East\paren{X}, \qquad \North\paren{X}, \qquad \South\paren{X}
\end{equation*}
denote, respectively, a westernmost, easternmost, northernmost, and southernmost point of $X$. Thus, $X \cap \Wof{\West\paren{X}} = \varnothing$.

The second type consists of \emph{split points}, which control the mass in two directions simultaneously. We define split points using the following generalization of \citet[Lem.~3]{AshokAG14}.

\begin{lemma}
    \label{lem:split-point}
    Let $X \subseteq V$ be nonempty, and let $a, b \geq 0$ satisfy $a + b \leq \mu\paren{X}$. 
    There is a point $z \in X$ such that
    \begin{equation*}
        \mu\paren{X \cap \Eof{z}} \leq \mu\paren{X} - a
        \qquad\text{and}\qquad
        \mu\paren{X \cap \Sof{z}} \leq \mu\paren{X} - b.
    \end{equation*}
    We call such a point a \emph{southeast $\paren{a, b}$-split point} of $X$ and denote one such point by $p_X^{\mathrm{SE}}\paren{a, b}$. The other three orientations are defined analogously.
\end{lemma}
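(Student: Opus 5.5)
The plan is a two-stage selection: first restrict to the points of $X$ that satisfy the east condition, then pick among them a point that satisfies the south condition. Both conditions are monotone in one coordinate, and this is what makes the argument work.

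\emph{Step 1.} Define
\begin{equation*}
    Y \coloneqq \setst{z \in X}{\mu\paren{X \cap \Eof{z}} \leq \mu\paren{X} - a}.
\end{equation*}
The quantity $\mu\paren{X \cap \Eof{z}}$ is nonincreasing in $x(z)$. Hence $Y$ is exactly the set of points of $X$ whose $x$-coordinate is at least some threshold, so $Y$ is an \emph{east-closed} subset of $X$. The set $Y$ is nonempty, because $\East\paren{X}$ has no point of $X$ strictly east of it, so its east mass is $0 \leq \mu\paren{X} - a$.

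\emph{Step 2.} I show that $\mu\paren{Y} \geq b$. If $X \setminus Y$ is empty, then $\mu\paren{Y} = \mu\paren{X} \geq a + b \geq b$. Otherwise, let $w \coloneqq \East\paren{X \setminus Y}$. Since $Y$ is east-closed, $X \setminus Y \subseteq \mathsf W(w)$. Also $X \cap \mathsf W(w)$ is the complement in $X$ of $X \cap \Eof{w}$. Because $w \notin Y$, we have $\mu\paren{X \cap \Eof{w}} > \mu\paren{X} - a$, and therefore
\begin{equation*}
    \mu\paren{X \setminus Y} \leq \mu\paren{X \cap \mathsf W(w)} < a.
\end{equation*}
Consequently $\mu\paren{Y} > \mu\paren{X} - a \geq b$. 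This step is the only place where the hypothesis $a + b \leq \mu\paren{X}$ is used. The main subtlety is handling the open versus closed half-planes correctly when points share coordinates. Defining $Y$ through the condition itself, rather than through a coordinate threshold, avoids any tie-breaking.

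\emph{Step 3.} I choose the point for the south condition. If $b = 0$, take $z \coloneqq \North\paren{Y}$. Otherwise, among the points $z \in Y$ with $\mu\paren{Y \cap \mathsf N(z)} \geq b$, take $z$ to be one with the smallest $y$-coordinate. This set is nonempty, since it contains $\South\paren{Y}$, for which $\mu\paren{Y \cap \mathsf N(z)} = \mu\paren{Y} \geq b$ by Step 2.

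In either case $z \in Y \subseteq X$, so $z$ satisfies the east condition by the definition of $Y$. For the south condition, $\mu\paren{X \cap \mathsf N(z)} \geq \mu\paren{Y \cap \mathsf N(z)} \geq b$. Since $X \cap \Sof{z}$ is the complement in $X$ of $X \cap \mathsf N(z)$, this gives $\mu\paren{X \cap \Sof{z}} \leq \mu\paren{X} - b$. Thus $z$ is a southeast $\paren{a, b}$-split point.

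The other three orientations follow by reflecting the plane across the coordinate axes.
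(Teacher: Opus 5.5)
Your proof is correct. It lands on the same kind of point as the paper's proof, namely a southernmost point within a block of easternmost points of $X$, but it calibrates that block from the other side.

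The paper sizes the block by $b$. With $m = |X|$, $n = |V|$ and $k = \max(1, \lceil bn \rceil)$, it intersects the $k$ easternmost and the $m-k+1$ southernmost points of $X$ by pigeonhole. This immediately gives north mass at least $k/n \geq b$. It then recovers the east condition by counting, using $k \leq bn + 1$ and $a + b \leq \mu(X)$.

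You instead take $Y$ to be the largest east-closed block satisfying the east condition. You then spend the hypothesis on showing $\mu(Y) > \mu(X) - a \geq b$. Your route never uses that every point has mass $1/n$. There are no ceilings and no $\max(1,\cdot)$ correction, and ties are absorbed by defining $Y$ through the condition itself. As a result, the argument works verbatim for arbitrary nonnegative weights on $V$. The paper's version, in turn, is a direct extension of the counting lemma of Ashok et al.\ and settles existence with a single pigeonhole count.

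Your Step~3 can be simplified. Since $Y \subseteq \mathsf N(\South(Y))$, the choice $z = \South(Y)$ already gives $\mu(X \cap \mathsf N(z)) \geq \mu(Y) \geq b$ for every $b \geq 0$. So neither the case split on $b$ nor the minimization is needed; your minimization returns a point at the height of $\South(Y)$ anyway.
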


\begin{proof}
    Let $m \coloneqq \card X$, $n \coloneqq \card V$, and set $k \coloneqq \max\paren{1, \ceiling{bn}}$.
    The assumption $b \leq \mu\paren{X} = m / n$, together with the fact that $X$ is nonempty, ensures that $1 \leq k \leq m$.

    Consider the $k$ easternmost points of $X$ and the $m - k + 1$ southernmost points of $X$, breaking ties arbitrarily. The sum of their cardinalities is $m + 1$, whereas $X$ contains only $m$ points. They must therefore intersect. Let $z$ be a point in their intersection.

    Because $z$ is one of the $m - k + 1$ southernmost points of $X$, at most $m - k$ points of $X$ lie strictly south of $z$. 
    Thus, at least $k$ points of $X$ lie weakly north of $z$, and hence
    \begin{equation*}
        \mu\paren{X \cap \mathsf N(z)}
        \geq \frac{k}{n}
        \geq b.
    \end{equation*}
    Similarly, because $z$ is one of the $k$ easternmost points of $X$, at most $k - 1$ points of $X$ lie strictly east of $z$. 
    Therefore, at least $m - k + 1$ points of $X$ lie weakly west of $z$. Since $k \leq bn + 1$, we obtain
    \begin{equation*}
        \mu\paren{X \cap \mathsf W(z)}
        \geq \frac{m - k + 1}{n}
        \geq \frac{m - bn}{n}
        = \mu\paren{X} - b
        \geq a.
    \end{equation*}
    Since $\mathsf W(z)$ and $\Eof{z}$ partition the plane, and likewise $\mathsf N(z)$ and $\Sof{z}$, the desired inequalities follow.
\end{proof}

Thus, a rectangle avoiding the point $z$ given by \cref{lem:split-point} contains at most $\mu\paren{X} - a$ mass from $X$ if it lies east of $z$, and at most $\mu\paren{X} - b$ if it lies south of $z$. When $a = b = \mu\paren{X} / 2$, we call the point \emph{balanced} and omit $\paren{a, b}$. In the illustrations throughout the proof, circular markers denote extreme points, whereas square markers denote split points given by \cref{lem:split-point}. To simplify the notation, we use $A$ and $\mu\paren{A}$ interchangeably (likewise for all other grid cells). We use the following identity repeatedly throughout the proof. It states that the sum of the masses of the center cells equals the sum of the masses of the corner cells:
\begin{equation}
    \label{eqn:corner-center}
    A + D + M + P = F + G + J + K.
\end{equation}
Indeed, the union of the two central rows and the union of the two exterior columns each have mass $1 / 2$; subtracting the mass $E + I + H + L$ common to both yields \eqref{eqn:corner-center}.
We first cover three simple cases.

\begin{lemma}[All four center cells are empty]
    \label{lem:empty-center}
    If $F = G = J = K = 0$, then there is a strong rectangular $1 / 2$-net of size at most three.
\end{lemma}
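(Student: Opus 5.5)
We first use \eqref{eqn:corner-center} to pin down where the mass can lie. Since $F = G = J = K = 0$, we get $A + D + M + P = 0$, so all four corner cells are empty as well. All mass therefore lies in the eight non-corner boundary cells, which we group into four \emph{arms}: the north arm $B \cup C$, the south arm $N \cup O$, the west arm $E \cup I$, and the east arm $H \cup L$. Each arm is the full nonempty part of its outer row or column, so each has mass exactly $1/4$. Every dangerous rectangle contains $p$ and has mass greater than $1/2$. Since no arm contributes more than $1/4$, such a rectangle must meet at least three of the four arms.

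The key structural observation is a containment argument. Suppose a rectangle $R \ni p$ meets both the west and east arms. Then its $x$-range reaches into column $1$ and column $4$, so it covers columns $2$ and $3$ entirely, which is the whole $x$-extent of the north and south arms. If $R$ also meets the north arm, its $y$-range contains $y(p)$ and the $y$-coordinate of some north-arm point. Hence it contains $\South\paren{B \cup C}$, because that point's $y$-coordinate lies between these two values. So $\South\paren{B\cup C}$ hits every rectangle meeting the arms $\{N, W, E\}$, and symmetrically $\North\paren{N \cup O}$ hits every rectangle meeting $\{S, W, E\}$. The same argument with the axes swapped shows the following. $\East\paren{E \cup I}$ hits every rectangle meeting $\{N, S, W\}$, and $\West\paren{H \cup L}$ hits every rectangle meeting $\{N, S, E\}$. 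Since each dangerous rectangle meets some triple of arms, these four extreme points already form a strong $1/2$-net.

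The remaining task, which I expect to be the main obstacle, is to merge two of these four points into one. My plan is to keep $\South\paren{B\cup C}$ and $\East\paren{E\cup I}$, which handle the triples $\{N,W,E\}$ and $\{N,S,W\}$. I then need a single third point for the two triples $\{S,W,E\}$ and $\{N,S,E\}$, both of which contain the south and east arms.

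A rectangle of either remaining type misses one arm entirely. It must therefore take total mass greater than $1/2$ from the three arms it meets, each at most $1/4$. So it must take more than $1/4$ combined from the south and east arms. I would first try a split point from \cref{lem:split-point} on $X \coloneqq (N \cup O) \cup (H \cup L)$, which has mass $1/2$. Concretely, take a northwest split point with parameters chosen so that any rectangle avoiding it, and lying in one of the four directions from it, takes at most $1/4$ from $X$; this contradicts the requirement above.

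If a single split point does not handle both orientations at once, the fallback is a short case distinction. Compare whether $\North\paren{N\cup O}$ lies west of $\West\paren{H\cup L}$, or the reverse. Then check, using the mass bound above, that one of the two extreme points already hits both remaining families. The first two steps are routine; the real work is verifying this merging step.
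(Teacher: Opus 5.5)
\textbf{The merging step fails as planned.} Your first two steps are correct. By \eqref{eqn:corner-center} the corners are empty, every dangerous rectangle meets at least three arms, and your containment argument shows that the four extreme points $n \coloneqq \South(B\cup C)$, $s \coloneqq \North(N\cup O)$, $w \coloneqq \East(E\cup I)$, $e \coloneqq \West(H\cup L)$ form a strong $1/2$-net. The gap is the merging step. As planned, it is false, not merely unverified: with $n$ and $w$ kept, neither $s$ nor $e$ need work as the third point.

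Here is a configuration. Give each of the eight arm cells mass $1/8$. Put $n\in B$ west of all other points of $B$, and $s\in N$ west of all other points of $N$. Let $w$ be the topmost point of $E$ and $e$ the topmost point of $H$, with all remaining points of $E\cup H$ below both $w$ and $e$.
\begin{itemize}
\item A large rectangle whose western side lies just east of $n$ and $s$ contains $p$ and avoids $n,w,s$. It contains every point except those of $E\cup I$ and $n,s$.
\item A large rectangle whose northern side lies just below $w$ and $e$ contains $p$ and avoids $n,w,e$. It contains every point except those of $B\cup C$ and $w,e$.
\end{itemize}
Both have mass $3/4-2/|V|>1/2$, so neither $\{n,w,s\}$ nor $\{n,w,e\}$ is a net.

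The split-point idea does not rescue this. \Cref{lem:split-point} controls only two of the four directions, so a northwest split point of $X=N\cup O\cup H\cup L$ helps only if it lies southeast of $p$. With distinct coordinates, the balanced northwest split points of $X$ are exactly $s$ and $e$, which here lie in $N$ and $H$. Your proposed comparison is also vacuous: $s$ lies in columns $2$--$3$ and $e$ in column $4$, so $s$ is always west of $e$.

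\textbf{What is missing: choose points by the cells they fall in.} Your containment argument actually shows more than you used. If $n\in B$, then $n$ lies in every rectangle containing $p$ that meets the north and west arms; if $n\in C$, it lies in every such rectangle meeting the north and east arms. Likewise, each of $s,w,e$ covers the corner of the grid adjacent to its cell. A rectangle meeting three arms meets both arms of the northwest or the southeast corner, and both arms of the northeast or the southwest corner. Hence any two selected points covering opposite corners already form a net.

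This is essentially the paper's proof. Assume $n\in B$. If $s\in O$, the pair $\{n,s\}$ suffices. If $s\in N$, every dangerous rectangle avoiding $n$ and $s$ lies east of both and so misses $E\cup I$; adding $e$, whether it lies in $H$ or in $L$, finishes the argument. In your bad configuration, $\{s,e\}$ alone is a net.
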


\begin{proof}
    By \eqref{eqn:corner-center}, $A = D = M = P = 0$. Consequently, each of the following four unions of cells has mass $1 / 4$:
    \begin{equation*}
        N_0 \coloneqq B \cup C,
        \qquad E_0 \coloneqq H \cup L,
        \qquad S_0 \coloneqq N \cup O,
        \qquad W_0 \coloneqq E \cup I.
    \end{equation*}
    Take $q_1 = \South\paren{N_0}$ and $q_2 = \North\paren{S_0}$. Any rectangle $R \in \calR$ that is south of $q_1$ and north of $q_2$ is safe because it misses $N_0$ and $S_0$, whose total mass is $1 / 2$. Suppose, without loss of generality, that $q_1 \in B$. Then,
    \begin{itemize}
        \item if $q_2 \in O$, $Q = \set{q_1, q_2}$ is a strong rectangular $1 / 2$-net. Indeed, a dangerous rectangle $R \in \calR$ would need to be south or east of $q_1$ and north or west of $q_2$. The only possibility is then that $R$ is south of $q_1$ and north of $q_2$, which is safe.
        \item if $q_2 \in N$, any dangerous rectangle avoiding $q_1$ and $q_2$ must lie east of both points. Taking $q_3 = \West\paren{E_0}$ then ensures that $Q = \set{q_1, q_2, q_3}$ is a strong rectangular $1 / 2$-net.
    \end{itemize}
    This concludes the proof.
\end{proof}

\begin{lemma}[All four center cells are nonempty]
    \label{lem:full-center}
    If $F, G, J, K > 0$, then there is a strong rectangular $1 / 2$-net of size at most three.
\end{lemma}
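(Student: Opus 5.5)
The plan is to use one point from each of three center cells and exploit that every dangerous rectangle contains $p$. By the $8$ symmetries of the grid, I may relabel so that the omitted center cell is $K$. I will fix which cell to omit at the end: probably the center cell whose diagonally opposite corner is heaviest, or the one that makes the final mass inequality hold. I then pick $q_F \in F$, $q_G \in G$ and $q_J \in J$.

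Since $q_F$ is northwest of $p$, a dangerous rectangle avoiding it lies in $\Sof{q_F}$ or $\Eof{q_F}$. Likewise, avoiding $q_G$ forces it into $\Sof{q_G}$ or $\Wof{q_G}$, and avoiding $q_J$ forces it into $\Nof{q_J}$ or $\Eof{q_J}$. Two of the eight combinations are automatically safe, whatever points we choose:
\begin{itemize}
\item east of $q_F$ and west of $q_G$ confines the rectangle to columns $2$ and $3$, which have mass $1/2$;
\item south of $q_F$ and north of $q_J$ confines it to rows $2$ and $3$, which also have mass $1/2$.
\end{itemize}
The surviving combinations are: east of $q_F$, south of $q_G$, and either north or east of $q_J$; or south of $q_F$, east of $q_J$, and either south or west of $q_G$. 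In each of these the rectangle contains $p$ and extends into the southeast, through $K$, $L$, $O$, $P$. The whole task is to choose the three points so that these regions have mass at most $1/2$.

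For the choices, I would use the split points of \cref{lem:split-point} rather than plain extreme points, because we need simultaneous control in two directions. Concretely, take $q_G = p_G^{\mathrm{SW}}(a_G,b_G)$, so that $\mu(G\cap\Sof{q_G})$ and $\mu(G\cap\Wof{q_G})$ are bounded. Take $q_J = p_J^{\mathrm{NE}}(a_J,b_J)$ similarly, and take $q_F$ either $\South(F)$ or a southeast split point. Each surviving combination is then an explicit union of full cells plus fractional pieces of $F$, $G$, $J$. For example, south of $q_G$ and east of $q_J$ misses row $1$ and column $1$, the part of $G$ north of $q_G$, and the part of $J$ west of $q_J$. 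Its mass is therefore at most
\begin{equation*}
\tfrac12 + A - \mu\bigl(G\cap\mathsf N(q_G)\bigr) - \mu\bigl(J\cap\mathsf W(q_J)\bigr).
\end{equation*}
The remaining combinations give similar expressions involving $B$, $C$, $E$, $I$, or the corners.

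Each of these bounds is a linear inequality in the cell masses and the split parameters. The parameters are to be chosen so that all of them follow from three facts: every row and column has mass $1/4$, the identity \eqref{eqn:corner-center} holds, and $F,G,J,K>0$. I expect this step to be the main obstacle. A single fixed choice will not work for every mass distribution, since the bound above requires $G+J$ to dominate $A$ in the appropriate fractional sense. So I anticipate a further case split: on which corner is heaviest, which decides the omitted cell, and on comparisons such as $A \le G+J$ versus $A > G+J$. In the second case I would switch $q_F$ to the split point that uses $F$'s own mass to cut the bad region.

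Nonemptiness of all four center cells is exactly what guarantees that each required split point exists inside a center cell. It is also what makes this case distinct from the others. Because the feasible mass distributions form a polytope, I would verify each subcase by checking the finitely many linear inequalities, as the authors' MIP check suggests.
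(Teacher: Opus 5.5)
Your skeleton matches the paper's. Omitting $K$ and taking $q_G = \South\paren{G}$, $q_J = \East\paren{J}$, $q_F = p_F^{\mathrm{SE}}$ balanced is exactly the fourth of the paper's four rotated configurations. However, the proposal stops before the step that proves the lemma, and the route you sketch for that step would not work.

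First, your list of surviving positions is wrong. A rectangle east of $q_F$, south of $q_G$ and north of $q_J$ lies in rows $2$--$3$, since it is south of a point of $G$ and north of a point of $J$. Likewise, a rectangle south of $q_F$, west of $q_G$ and east of $q_J$ lies in columns $2$--$3$. So the only position that survives is south of $q_G$ and east of $q_J$, whether it passes south or east of $q_F$. There are no further constraints involving $B$, $C$, $E$, $I$. With the choices above, this rectangle has mass at most $1/2 + A - G - J - F/2$. Hence the configuration is safe whenever $A \le G + J + F/2$.

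Second, when $A > G + J + F/2$, re-choosing $q_F$ cannot improve this bound. With $q_F = p_F^{\mathrm{SE}}\paren{a, b}$, the two sub-positions require $a \ge A - G - J$ and $b \ge A - G - J$. Since $a + b \le F$, this is the same condition, and $G$ and $J$ are already excluded entirely.

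The missing idea is to change which cell is omitted and prove that some choice always works. Omitting a center cell produces a condition on its diagonally opposite corner:
\[
A \le G + J + F/2, \qquad D \le F + K + G/2, \qquad M \le F + K + J/2, \qquad P \le G + J + K/2,
\]
for omitting $K$, $J$, $G$, $F$, respectively. If all four failed, summing them would give $A + D + M + P > \frac{5}{2}\paren{F + G + J + K}$. This contradicts \eqref{eqn:corner-center}, because the center has positive mass.

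This also shows your selection heuristic is backwards. Omitting the cell whose opposite corner is heaviest puts the heaviest corner into the constraint. Choosing the lightest corner is not enough either. With $F$, $G$, $J$ tiny and $K$ large, the four corners can be nearly equal with $A$ the lightest, and still $A > G + J + F/2$. The right rule is to use whichever of the four weighted inequalities holds. No subcase split on $A \le G + J$ and no computer verification is needed.
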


\begin{proof}
    We use one of the following four rotationally symmetric choices:
    \begin{align*}
        q_1 &= \West\paren{G}, & q_2 &= \North\paren{J}, & q_3 &= p_K^{\mathrm{NW}}
        &&\text{if } P \leq G + J + K / 2, \\
        q_1 &= \East\paren{F}, & q_2 &= \North\paren{K}, & q_3 &= p_J^{\mathrm{NE}}
        &&\text{if } M \leq F + K + J / 2, \\
        q_1 &= \South\paren{F}, & q_2 &= \West\paren{K}, & q_3 &= p_G^{\mathrm{SW}}
        &&\text{if } D \leq F + K + G / 2, \\
        q_1 &= \South\paren{G}, & q_2 &= \East\paren{J}, & q_3 &= p_F^{\mathrm{SE}}
        &&\text{if } A \leq G + J + F / 2.
    \end{align*}

    \Cref{fig:full-center} displays, for each choice, the only possible position of a rectangle avoiding the selected points that is not contained in two rows or two columns.

    \begin{figure}[H]
        \centering
        \resizebox{\linewidth}{!}{\input{figures/strong-net-full-center}}
        \caption{The four possible configurations when no center cell is empty.}
        \label{fig:full-center}
    \end{figure}

    Consider the first choice. Every rectangle avoiding the selected points, except one in the position shown in \cref{fig:full-center}(a), is contained in two rows or two columns. In the remaining case, the rectangle $R$ is west of $q_1 = \West\paren{G}$ and north of $q_2 = \North\paren{J}$. It is therefore contained in the first three rows and first three columns, a region of mass $1 / 2 + P$. It contains no point of $G$ or $J$, and the balanced point $q_3$ restricts it to at most $K / 2$ mass from $K$. Consequently,
    \begin{equation*}
        \mu\paren{R} \leq 1 / 2 + P - G - J - K / 2 \leq 1 / 2.
    \end{equation*}
    The other three choices are rotations of this argument. Furthermore, at least one of the four displayed conditions holds. Otherwise, summing their negations and using \eqref{eqn:corner-center} gives
    \begin{equation*}
        A + D + M + P > \frac{5}{2}\paren{F + G + J + K} = \frac{5}{2}\paren{A + D + M + P},
    \end{equation*}
    a contradiction because the center is nonempty.
\end{proof}

\begin{lemma}[Two opposite center cells are nonempty]
    \label{lem:opposite-center}
    If exactly two opposite center cells are nonempty, then there is a strong rectangular $1 / 2$-net of size at most three.
\end{lemma}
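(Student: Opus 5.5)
By the rotational and reflectional symmetry of the quartile grid, I assume that the two nonempty center cells are $F$ and $K$, so that $G = J = 0$. The other diagonal pair is handled by a reflection. I expect that two points suffice: a split point $q_1 \in F$ and a split point $q_2 \in K$, given by \cref{lem:split-point}. The only real work is choosing the split parameters so that both surviving positions of a dangerous rectangle are forced to be safe.

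\textbf{Step 1: positions of a rectangle avoiding both points.} Every dangerous rectangle $R$ contains $p$. The point $q_1 \in F$ lies northwest of $p$, so a dangerous $R$ avoiding $q_1$ lies in $\Sof{q_1}$ or in $\Eof{q_1}$. The point $q_2 \in K$ lies southeast of $p$, so such an $R$ avoiding $q_2$ lies in $\Nof{q_2}$ or in $\Wof{q_2}$. This gives four combinations.
\begin{itemize}
    \item \emph{South of $q_1$ and north of $q_2$.} Then $R$ lies strictly between the two middle-row horizontal lines through the points of rows $2$ and $3$. Hence it is contained in rows $2$ and $3$ and is safe.
    \item \emph{East of $q_1$ and west of $q_2$.} Symmetrically, $R$ is contained in columns $2$ and $3$ and is safe.
    \item \emph{South of $q_1$ and west of $q_2$.} Then $R$ lies in rows $2$--$4$ and columns $1$--$3$. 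Using that every row and every column has mass $1/4$, this region has mass
    \begin{equation*}
        \tfrac12 - (H + L) + (M + N + O) = \tfrac12 + D.
    \end{equation*}
    In addition, $R$ misses $F \cap \mathsf N(q_1)$ and $K \cap \mathsf E(q_2)$.
    \item \emph{East of $q_1$ and north of $q_2$.} Symmetrically, $R$ lies in a region of mass $\tfrac12 + M$ and misses $F \cap \mathsf W(q_1)$ and $K \cap \mathsf S(q_2)$.
\end{itemize}

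\textbf{Step 2: the parameters.} Take $q_1 = p_F^{\mathrm{SE}}(\beta, \alpha)$ and $q_2 = p_K^{\mathrm{NW}}(\gamma, \delta)$, with the NW orientation read so that
\begin{equation*}
    \mu\paren{F \cap \mathsf W(q_1)} \geq \beta, \quad \mu\paren{F \cap \mathsf N(q_1)} \geq \alpha, \quad \mu\paren{K \cap \mathsf E(q_2)} \geq \gamma, \quad \mu\paren{K \cap \mathsf S(q_2)} \geq \delta.
\end{equation*}
These guarantees follow from \cref{lem:split-point}, since $\mathsf W$ and $\Eof{\cdot}$ (respectively $\mathsf N$ and $\Sof{\cdot}$) partition the plane. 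The lemma requires
\begin{equation*}
    \alpha + \beta \leq F, \qquad \gamma + \delta \leq K,
\end{equation*}
and the two remaining cases are safe provided
\begin{equation*}
    \alpha + \gamma \geq D, \qquad \beta + \delta \geq M.
\end{equation*}

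\textbf{Step 3: feasibility.} This is a transportation problem with supplies $F, K$ and demands $D, M$ on a complete bipartite graph. It is feasible exactly when $D + M \leq F + K$. That holds by \eqref{eqn:corner-center} with $G = J = 0$, which gives $A + D + M + P = F + K$. Explicitly, I would set $\alpha = \min(D, F)$ and $\gamma = D - \alpha$, then $\beta = \min(M, F - \alpha)$ and $\delta = M - \beta$. The check $\gamma + \delta \leq K$ then splits into two short cases according to which minimum is attained.

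\textbf{Main obstacle.} The step needing the most care is the bookkeeping of boundaries. The rectangle is only strictly south or east of the selected points, while \cref{lem:split-point} bounds the masses of the open regions $\Eof{z}$ and $\Sof{z}$. I therefore need to confirm that the cells of $F$ and $K$ removed above are exactly the closed regions $\mathsf N(q_1)$, $\mathsf W(q_1)$, $\mathsf E(q_2)$, $\mathsf S(q_2)$ that the lemma controls. I also need the mass computation for rows $2$--$4$ and columns $1$--$3$ to remain correct when grid lines pass through input points, with points on grid lines assigned consistently as in the construction of the quartile grid.
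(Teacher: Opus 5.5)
Your proof is correct, and it takes a different, slightly stronger route than the paper's.

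The paper fixes both split points to be balanced, $q_1 = p_F^{\mathrm{SE}}$ and $q_2 = p_K^{\mathrm{NW}}$. This gives the corner bounds $1/2 + M - (F+K)/2$ and $1/2 + D - (F+K)/2$. The identity $D + M \leq F + K$ then only guarantees that one of these is at most $1/2$. So when $2M > F + K$, the paper adds a third point, the balanced southwest split point of $Z = C \cup D \cup H$. It closes that case with the extra identity $Z + F + K - 2M = A + D + P + I + N \geq 0$.

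You instead unbalance the two split points. The parts of $F$ and $K$ that each corner rectangle must miss are then allocated to whichever corner needs them. This is a transportation problem with supplies $F, K$ and demands $D, M$, and it is feasible because $D + M = F + K - A - P$. Your explicit choice $\alpha = \min(D, F)$, $\gamma = D - \alpha$, $\beta = \min(M, F - \alpha)$, $\delta = M - \beta$ works: all parameters are nonnegative, $\alpha + \beta \leq F$, and $\gamma + \delta \leq K$ in both cases. Your regions of mass $1/2 + D$ and $1/2 + M$ are the right ones. So are the excluded pieces $F \cap \mathsf N(q_1)$, $K \cap \mathsf E(q_2)$ and $F \cap \mathsf W(q_1)$, $K \cap \mathsf S(q_2)$.

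Your argument therefore yields a two-point net in this case. It needs no case split on which corner is heavy and no auxiliary set $Z$. The paper's route buys uniformity with its pattern of balanced points followed by a repair point, at the cost of a point that is not needed here. The paper does use unbalanced split points later, in the adjacent-center and one-center cases, in the same spirit as yours.

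Your boundary concern is resolved as in the paper. The proof of \cref{thm:strong_R_3_nets} reduces to exact quartile grids in which no input point lies on a grid line, so the strict and weak containments you use hold as stated.
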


\begin{proof}
    By symmetry, assume $G = J = 0$ and $F, K > 0$. Choose
    \begin{equation*}
        q_1 = p_F^{\mathrm{SE}},
        \qquad
        q_2 = p_K^{\mathrm{NW}},
    \end{equation*}
    both balanced. A rectangle avoiding $q_1$ and $q_2$ must lie east or south of $q_1$ and west or north of $q_2$. A rectangle east of $q_1$ and west of $q_2$ lies in the middle two columns, while one south of $q_1$ and north of $q_2$ lies in the middle two rows. Thus, both have mass at most $1 / 2$.

    The other two possibilities are northeast and southwest corner rectangles. Their masses satisfy
    \begin{equation}
        \label{eqn:opposite-two-corners}
        \mu\paren{R_{\mathrm{NE}}}
        \leq 1 / 2 + M - \frac{F + K}{2},
        \qquad
        \mu\paren{R_{\mathrm{SW}}}
        \leq 1 / 2 + D - \frac{F + K}{2}.
    \end{equation}
    These two upper bounds cannot both exceed $1 / 2$, since that would imply $D + M > F + K$, whereas \eqref{eqn:corner-center} gives
    \begin{equation*}
        F + K = A + D + M + P.
    \end{equation*}

    If both bounds in \eqref{eqn:opposite-two-corners} are at most $1 / 2$, then $q_1$ and $q_2$ already form the required net. Otherwise, without loss of generality, assume that the bound for $R_{\mathrm{NE}}$ exceeds $1 / 2$, and let $Z \coloneqq C \cup D \cup H$.
    The row and column sums give $C + D + H = I + M + N$.
    Together with \eqref{eqn:corner-center}, this yields
    \begin{equation}
        \label{eqn:opposite-Z}
        Z + F + K - 2M = A + D + P + I + N \geq 0.
    \end{equation}
    Suppose that some northeast rectangle avoiding $q_1, q_2$ has mass strictly larger than $1 / 2$. Then the first bound in \eqref{eqn:opposite-two-corners} must itself be strictly larger than $1 / 2$, so $2M > F + K$. Equation~\eqref{eqn:opposite-Z} then implies $Z > 0$.
    Choose
    \begin{equation*}
        q_3 = p_Z^{\mathrm{SW}},
    \end{equation*}
    a balanced southwest split point of $Z$. Because $Z$ lies northeast of $p$, a rectangle containing $p$ and avoiding $q_3$ is west or south of $q_3$, and hence contains at most $Z / 2$ mass from $Z$. Every northeast rectangle avoiding all three points therefore satisfies
    \begin{equation*}
        \mu\paren{R}
        \leq 1 / 2 + M - \frac{F + K + Z}{2}
        \leq 1 / 2
    \end{equation*}
    by \eqref{eqn:opposite-Z}. \Cref{fig:opposite-center} shows the two ways in which such a rectangle can avoid $q_3$.

\end{proof}

\begin{figure}[H]
    \centering
    \input{figures/strong-net-opposite-center}
    \caption{The only potentially dangerous opposite-center rectangle, shown in the northeast orientation. The third point $q_3$ is the balanced southwest split point of $C \cup D \cup H$.}
    \label{fig:opposite-center}
\end{figure}

\medskip
\noindent
\textbf{Interlude: A common configuration.} From now on, assume, without loss of generality, that
\begin{equation}
    \label{eqn:normalized-center}
    K = 0 \text{ and }F > 0.
\end{equation}
The row and column sums then give the following refined identities:
\begin{align*}
    A + B + D &= G + O, & A + C + D &= F + J + N, \\
    A + E + M &= J + L, & A + I + M &= F + G + H.
\end{align*}

We first distinguish whether
\begin{equation*}
    A > G + J + F / 2
    \qquad\text{or}\qquad
    A \leq G + J + F / 2.
\end{equation*}
We begin with the first case, which does not depend on whether $G$ or $J$ is empty.

\begin{lemma}[Large northwest corner]
    \label{lem:large-A}
    Under \eqref{eqn:normalized-center}, if
    \begin{equation*}
        A > G + J + F / 2,
    \end{equation*}
    then there is a strong rectangular $1 / 2$-net of size at most three.
\end{lemma}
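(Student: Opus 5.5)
The plan is to exploit the large corner $A$ directly. Note first that $A > G + J + F/2 \geq 0$ gives $A > 0$, and that every dangerous rectangle contains $p$. I would choose the first point $q_1$ northwest of $p$ and use it to force every dangerous rectangle avoiding $q_1$ to lie either east of $q_1$ or south of $q_1$. The natural choice is a southeast split point (\cref{lem:split-point}) of the northwest quadrant block $X \coloneqq A \cup B \cup E \cup F$, with parameters chosen using $A$. In the simplest version I would take $q_1 = p_{A}^{\mathrm{SE}}(a,b)$ with $a,b$ to be tuned, and fall back to $p_{A\cup F}^{\mathrm{SE}}$ if needed. This $q_1$ lies in the first two rows and the first two columns. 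So a rectangle east of $q_1$ misses column $1$ except possibly the portion of $A \cup E$ east of $q_1$, and a rectangle south of $q_1$ misses row $1$ except the corresponding portion. The point is that the mass of $A$ counts against both half-planes at once, which is exactly the slack the hypothesis $A > G + J + F/2$ provides.

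The remaining two points treat the two possible positions symmetrically. For rectangles east of $q_1$, I would pick $q_2$ in the southern half so that the rectangle must also lie north of $q_2$ or east of it. The natural candidate is $q_2 = \North\paren{N \cup O \cup P}$ or a balanced northeast split point of a suitable block such as $J \cup N$, in the spirit of the opposite-center case. For rectangles south of $q_1$, I would take the rotated analogue $q_3$ in the eastern half, e.g.\ $q_3 = \West\paren{D \cup H \cup L}$ or a northwest split point of $G \cup H$. Each of the resulting sub-positions (east of $q_1$ and north of $q_2$; east of $q_1$ and east of $q_2$; and so on) is then bounded by subtracting the masses it provably misses from the total.

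For the bounds themselves I would use \eqref{eqn:corner-center} together with the refined identities available under \eqref{eqn:normalized-center}, namely $A + B + D = G + O$, $A + E + M = J + L$, $A + C + D = F + J + N$ and $A + I + M = F + G + H$. These let me rewrite the mass missed by each sub-rectangle in terms of $A$, $F$, $G$, $J$. The target in each sub-case is an inequality of the form $\mu(R) \leq 1/2 + (G + J + F/2) - A$, or something dominated by it, which the hypothesis makes strictly below $1/2$.

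The main obstacle is choosing the split parameters for $q_1$, and the blocks for $q_2$ and $q_3$, so that every sub-position simultaneously loses at least $1/4$ beyond the missed outer row or column. The $a$-side and $b$-side of the split compete for the same mass of $A$, and $G$ or $J$ may be empty. My first attempt would be the balanced split of $A$, checked against each sub-position. If some position fails, I would move the split mass toward the side that failed and enlarge the split block from $A$ to $A \cup F$. I would also verify the finitely many sign patterns of $G$ and $J$ by the same case enumeration the paper uses elsewhere.
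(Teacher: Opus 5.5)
Your proposal never commits to a construction, and the one it leads with cannot work. A point $q_1$ in the corner cell $A$, whether $p_A^{\mathrm{SE}}(a,b)$ or any other point of $A$, does not push a rectangle that avoids it out of the first column or the first row, and such a rectangle can keep all of $F$.

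For a concrete failure, take $A=F=L=O=1/4$ with all other cells empty and $m\ge 4$ points per nonempty cell. This satisfies $K=0$, $F>0$ and $A=1/4>1/8=G+J+F/2$. Place $A$ on an increasing diagonal, and make the westernmost point of $L$ also its southernmost. Your $q_2=\North(N\cup O\cup P)$ and $q_3=\West(D\cup H\cup L)$ become $\North(O)$ and $\West(L)$, and the alternative blocks $J\cup N$ and $G\cup H$ are empty. Every $q_1\in A$ has at least two points of $A$ strictly east of it or at least two strictly south of it. So the rectangle strictly east (respectively, south) of $q_1$ and strictly north of $\West(L)$ avoids all three points and has mass at least $2/n+F+L-1/n>1/2$. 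Tuning $(a,b)$ cannot fix this.

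Enlarging the block to $A\cup F$ does not rescue the extreme-point choices either. Take $A=C=F=H=I=L=N=O=1/8$ with other cells empty, and put $F$ on an increasing diagonal of at least six points. Let $\West(H\cup L)$ be a point of $H$ above all of $F$, with the rest of $H$ below $F$. Let $\North(N\cup O)$ be a point of $N$ west of all of $F$, with the rest of $N$ east of $F$. Then every $q_1\in A\cup F$ leaves a rectangle strictly south of $\West(H\cup L)$ that contains $H\cup L\cup N\cup O$ minus two points, together with at least three points of $F$.

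The missing idea is to put the first point in the center cell and the other two in the outer blocks as split points. The paper takes $q_1=p_F^{\mathrm{SE}}$ balanced. A rectangle containing $p$ and avoiding $q_1$ then misses an entire outer column or row, and also at least half of $F$. Since $A$ lies in both the first row and the first column, a rectangle missing column $1$ takes only $B+C+D=1/4-A$ from row $1$, and symmetrically $E+I+M=1/4-A$. That is the real sense in which $A$ counts twice, and it is where your target $1/2-A+G+J+F/2$ comes from.

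The paper then takes $q_2=p_X^{\mathrm{SW}}$ and $q_3=p_Y^{\mathrm{NE}}$ balanced, with $X=H\cup L$ and $Y=N\cup O$. These exist because the refined identities give $L=A+E+M-J>G+F/2$ and $O=A+B+D-G>J+F/2$. A rectangle passing west or south of $q_2$ (respectively north or east of $q_3$) keeps at most half of $X$ (of $Y$). An extreme point such as $\West(X)$ controls only one of these two sides.

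Using $X=1/4-D-P$ and $Y=1/4-M-P$, four of the five remaining positions have mass at most $1/2-A+G+J+F/2$ minus one of $(D+M)/2$, $D+P$, $M+P$. The fifth is at most $1/4+F/2+G+J-(D+M)/2$, which is at most $1/2$ because $F+2G+2J<2A\le 1/2$.
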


\begin{proof}
    The refined identities give
    \begin{equation*}
        O = A + B + D - G > J + F / 2,
        \qquad
        L = A + E + M - J > G + F / 2.
    \end{equation*}
    Thus, the sets $X \coloneqq H \cup L$ and $Y \coloneqq N \cup O$ are nonempty. Choose the three balanced split points
    \begin{equation*}
        q_1 = p_F^{\mathrm{SE}},
        \qquad
        q_2 = p_X^{\mathrm{SW}},
        \qquad
        q_3 = p_Y^{\mathrm{NE}}.
    \end{equation*}
    Every rectangle containing $p$ and avoiding these points is either contained in two rows or two columns, or contained in one of the five rectangles $R_1, \dots, R_5$ shown in \cref{fig:outward}.

    \begin{figure}[H]
        \centering
        \resizebox{\linewidth}{!}{\input{figures/strong-net-outward}}
        \caption{The five remaining positions of a rectangle containing $p$ and avoiding the three selected points.}
        \label{fig:outward}
    \end{figure}

    Summing the masses of the cells that each rectangle can intersect gives
    \begin{align*}
        \mu\paren{R_1} &\leq B + C + D + F / 2 + G + X / 2 + J + Y / 2 + P, \\
        \mu\paren{R_2} &\leq B + C + D + F / 2 + G + H + J + L, \\
        \mu\paren{R_3} &\leq F / 2 + G + X / 2 + J + Y / 2 + P, \\
        \mu\paren{R_4} &\leq E + I + M + F / 2 + G + J + N + O, \\
        \mu\paren{R_5} &\leq E + I + M + F / 2 + G + X / 2 + J + Y / 2 + P.
    \end{align*}
    Moreover,
    \begin{equation*}
        X = 1 / 4 - D - P,
        \qquad
        Y = 1 / 4 - M - P,
        \qquad
        B + C + D = E + I + M = 1 / 4 - A.
    \end{equation*}
    Substituting these identities and using \eqref{eqn:corner-center} gives
    \begin{align*}
        \mu\paren{R_1}, \ \mu\paren{R_5}
        &\leq 1 / 2 - A + G + J + F / 2 - \frac{D + M}{2}, \\
        \mu\paren{R_2}
        &\leq 1 / 2 - A + G + J + F / 2 - D - P, \\
        \mu\paren{R_4}
        &\leq 1 / 2 - A + G + J + F / 2 - M - P, \\
        \mu\paren{R_3}
        &\leq 1 / 4 + F / 2 + G + J - \frac{D + M}{2}.
    \end{align*}
    The assumption of the lemma makes the first three displayed bounds at most $1 / 2$. Hence $R_1, R_2, R_4, R_5$ are safe.

    For $R_3$, since $A \leq 1 / 4$ and the assumption of the lemma holds,
    \begin{equation*}
        F + 2G + 2J < 2A \leq 1 / 2.
    \end{equation*}
    Thus, the last bound is also at most $1 / 2$.
    Consequently, every rectangle avoiding the three selected points has mass at most $1 / 2$.
\end{proof}

\medskip
\noindent
\textbf{The remaining cases.} We now assume
\begin{equation}
    \label{eqn:small-A}
    K = 0,
    \qquad F > 0,
    \qquad A \leq G + J + F / 2.
\end{equation}
The choice of points depends only on which of $G$ and $J$ are nonempty.

\begin{lemma}[Three nonempty center cells]
    \label{lem:three-center}
    If $G, J > 0$ and \eqref{eqn:small-A} holds, then there is a strong rectangular $1 / 2$-net of size at most three.
\end{lemma}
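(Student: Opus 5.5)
The plan is to reuse the fourth configuration from the proof of \cref{lem:full-center}. That configuration only selects points from $F$, $G$ and $J$, and its validity condition is exactly the last inequality in \eqref{eqn:small-A}. Concretely, I choose
\begin{equation*}
    q_1 = \South\paren{G}, \qquad q_2 = \East\paren{J}, \qquad q_3 = p_F^{\mathrm{SE}},
\end{equation*}
where $q_3$ is the balanced southeast split point of $F$. All three are well defined because $F, G, J > 0$. The emptiness of $K$ is not needed to define the points; it only makes the bookkeeping easier.

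\textbf{Step 1: localize the rectangles.} Any dangerous rectangle contains $p$, so I classify a rectangle $R \ni p$ by how it avoids each point.
\begin{itemize}
    \item Since $q_1$ lies northeast of $p$, $R$ is west of $q_1$ or south of $q_1$.
    \item Since $q_2$ lies southwest of $p$, $R$ is east of $q_2$ or north of $q_2$.
    \item Since $q_3$ lies northwest of $p$, $R$ is east of $q_3$ or south of $q_3$.
\end{itemize}
I then go through the combinations, mirroring \cref{fig:full-center}(d).
\begin{itemize}
    \item West of $q_1$ and east of $q_2$: this traps $R$ between two center-cell points in the $x$-direction. I expect it to place $R$ in the two middle columns, using that $q_1$ lies in the third column and $q_2$ in the second, so $R$ is safe.
    \item South of $q_1$ and north of $q_2$: symmetrically, $R$ lies in the two middle rows and is safe.
    \item North of $q_2$ and west of $q_1$: $R$ must still avoid $q_3$. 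Being east of $q_3$ or south of $q_3$ restricts $R$ so that it misses the first column or the first row. Combined with the other constraints, this should place $R$ in two rows or two columns.
    \item East of $q_2$ and south of $q_1$: this is the remaining combination. Together with avoiding $q_3$, it leaves exactly the configuration of \cref{fig:full-center}(d).
\end{itemize}

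\textbf{Step 2: bound the remaining rectangle.} In the leftover case, $R$ lies east of $\East\paren{J}$ and south of $\South\paren{G}$. Hence $R$ sits in the last three rows and last three columns, a region of mass $1/2 + A$. It contains no point of $J$, by extremality of $q_2$, and no point of $G$, by extremality of $q_1$. The split point $q_3$ caps its intersection with $F$ at $F/2$. Since $K = 0$ contributes nothing either way,
\begin{equation*}
    \mu\paren{R} \leq 1/2 + A - G - J - F/2 \leq 1/2
\end{equation*}
by \eqref{eqn:small-A}.

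\textbf{Main obstacle.} The difficulty is the bookkeeping in Step 1, namely certifying that no mixed combination escapes into a region of mass exceeding $1/2$. I would handle this by tabulating the $2^3$ combinations and using, in each, only the facts that $q_1 \in G$, $q_2 \in J$ and $q_3 \in F$ bound $R$ by a middle grid line. Since these facts are the same ones the paper already relied on for \cref{lem:full-center}, the case check should go through verbatim.
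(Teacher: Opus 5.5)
Your proposal is correct and is essentially the paper's proof. The paper selects the same three points: the balanced split point $p_F^{\mathrm{SE}}$, $\South\paren{G}$, and $\East\paren{J}$. It notes that every avoiding rectangle other than the one south of $\South\paren{G}$ and east of $\East\paren{J}$ either misses $p$ or lies in the two middle rows or the two middle columns. It then bounds that remaining rectangle by $1/2 + A - G - J - F/2 \leq 1/2$ using \eqref{eqn:small-A}. Your four-way split on how $R$ avoids $\South\paren{G}$ and $\East\paren{J}$ correctly fills in the step the paper states in one sentence.
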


\begin{proof}
    Choose
    \begin{equation*}
        q_1 = p_F^{\mathrm{SE}},
        \qquad
        q_2 = \South\paren{G},
        \qquad
        q_3 = \East\paren{J},
    \end{equation*}
    where $q_1$ is balanced. Every rectangle avoiding the selected points, except one in the position shown in \cref{fig:three-center}, is safe because it either does not contain $p$ or is contained in the two central rows or the two central columns.

    In the remaining case, the rectangle is south of $q_2$ and east of $q_3$. It is contained in the region of mass $1 / 2 + A$ obtained by removing the first row and first column. It contains no point of $G$ or $J$, and it contains at most $F / 2$ mass from $F$. Hence, by \eqref{eqn:small-A},
    \begin{equation*}
        \mu\paren{R} \leq 1 / 2 + A - G - J - F / 2 \leq 1 / 2. \qedhere
    \end{equation*}
\end{proof}

\begin{figure}[H]
    \centering
    \input{figures/strong-net-three-center}
    \caption{The only remaining position of a rectangle avoiding the selected points when $F, G, J$ are nonempty and $K = 0$. Here $q_1$ is the balanced split point in $F$, $q_2 = \South\paren{G}$, and $q_3 = \East\paren{J}$.}
    \label{fig:three-center}
\end{figure}

\begin{lemma}[Two adjacent nonempty center cells]
    \label{lem:adjacent-center}
    If $G > 0$, $J = 0$, and \eqref{eqn:small-A} holds, then there is a strong rectangular $1 / 2$-net of size at most three.
\end{lemma}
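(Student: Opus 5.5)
I would follow the template of \cref{lem:three-center}, using two of the same selected points and choosing the third according to a further case split. Under \eqref{eqn:small-A} together with $J = 0$, the refined identities become
\begin{equation*}
    A + B + D = G + O, \qquad A + C + D = F + N, \qquad A + E + M = L, \qquad A + I + M = F + G + H,
\end{equation*}
and \eqref{eqn:corner-center} reads $A + D + M + P = F + G$. I would start with $q_1 = p_F^{\mathrm{SE}}$ (balanced) and $q_2 = \South\paren{G}$. Recall that every dangerous rectangle contains $p$. So a dangerous rectangle avoiding $q_1$ must lie east or south of $q_1$, since otherwise it sits in the first two rows or columns. Likewise, one avoiding $q_2$ must lie west, east, or south of $q_2$, because $q_2$ lies northeast of $p$.

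Next I would enumerate the combinations.
\begin{itemize}
    \item East of $q_1$ and west of $q_2$: the rectangle lies in the two middle columns, so it is safe.
    \item East of both, or south of $q_1$ and east of $q_2$: it lies in the last two columns, so it is safe.
    \item South of $q_2$ and east of $q_1$: it misses the first row, the first column and all of $G$, and keeps at most $F/2$ of $F$. Exactly as in \cref{lem:three-center}, its mass is at most $1/2 + A - G - F/2 \leq 1/2$ by \eqref{eqn:small-A}.
    \item South of $q_1$ and south of $q_2$: it misses the first row, all of $G$ and at least half of $F$. I would bound it by $1/2 + $ (first-column contribution) $- G - F/2$, with the column term controlled through $A \leq G + F/2$. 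This needs checking, but it mirrors the previous item.
\end{itemize}

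The genuinely new case is a rectangle $R$ south of $q_1$ and west of $q_2$. Such an $R$ lies in columns one to three and misses row one. It may still contain points of $G$ lying west of $\South\paren{G}$, and it keeps at most $F/2$ of $F$. This gives the crude bound $\mu\paren{R} \leq 1/2 + D - F/2$, which fails when $D$ is large. If $D \leq F/2$, the two points $q_1, q_2$ already suffice.

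Otherwise I would add a third point inside the southwest region, as in \cref{lem:large-A} and \cref{lem:opposite-center}. The identity $A + C + D = F + N$ shows that $N \geq D - F$, and similarly $A + I + M = F + G + H$ relates the western cells to $G$ and $H$. I would therefore take $Y \coloneqq I \cup M \cup N$, or $N$ alone if that suffices, and set $q_3 = p_Y^{\mathrm{NE}}$ (balanced). A rectangle containing $p$ and avoiding $q_3$ must then lie north or east of $q_3$ and so keeps at most $Y/2$ of $Y$. This improves the bound to $1/2 + D - F/2 - Y/2$, which I would show is at most $1/2$ using the identities above and $A + D + M + P = F + G$.

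If that bookkeeping fails, my fallback is to switch the roles instead: replace $q_2$ by a split point $p_G^{\mathrm{SW}}\paren{a,b}$ with weights tuned so that the southwest rectangle also loses mass from $G$, and let $q_3$ handle the rectangle south of $q_2$. I expect this last case, the rectangle south of $q_1$ and west of $q_2$ when $D > F/2$, to be the main obstacle. It is the only place where the choice of the third point and the inequalities among the cell masses must be matched carefully, possibly with a further subcase split on $D$ versus $G + F/2$, analogous to the four-way choice in \cref{lem:full-center}.
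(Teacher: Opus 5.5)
Your proposal has a genuine gap in the fourth item, the rectangle south of both $q_1$ and $q_2$. That case, not the southwest rectangle, is the crux of the lemma. Such a rectangle is only forced to miss the first row, all of $G$, and half of $F$. Nothing keeps it out of the first or fourth column, so the only available bound is $3/4 - G - F/2$. Since $F + G \le 1/4$, this is at least $1/2 + F/2 > 1/2$. There is no first-column term to control, and $A \le G + F/2$ does not help here.

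Concretely, take $A = D = J = K = 0$, $B = C = I = L = 1/8$, $E = H = N = O = 1/8 - \varepsilon$ and $F = G = M = P = \varepsilon$ with small $\varepsilon > 0$. Every row and column then has mass $1/4$, and all hypotheses hold. Place the points of $F \cup G$ above those of $E \cup H$ in the second row. The horizontal strip just below $F \cup G$ avoids $q_1, q_2$ and has mass $3/4 - 2\varepsilon$, although $D = 0 \le F/2$. So your claim that two points suffice when $D \le F/2$ is false.

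Always adding your $q_3 = p_Y^{\mathrm{NE}}$ with $Y = I \cup M \cup N$ does not help either. The bound only drops to $3/4 - G - F/2 - Y/2 = 5/8 - 3\varepsilon/2$. Suppose the points of $I$ lie west of $E \cup M$, and in the bottom row $O \cup P$ lie above $N$, which lies above $M$. Then the only admissible split points are the easternmost point of $I$ and the northernmost point of $N$. For each of them there is a rectangle below $F \cup G$, avoiding all three points, of mass $5/8 - 2\varepsilon$.

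The case you flag as the main obstacle is in fact fine with your $q_3$. The identities $D + H + P = I$ and $A + C + D = F + N$ give $D \le I$ and $D \le F + N$, hence $D \le F/2 + Y/2$.

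The missing idea is to choose the third point so that it controls the rectangle south of both $q_1$ and $q_2$. The paper takes $q_3 = \North(N \cup O)$, which after a reflection may be assumed to lie in $N$. A rectangle south of $q_1$ and $q_2$ is then in one of two positions:
\begin{itemize}
\item North of $q_3$: it misses all of $N \cup O$ as well as $G$. The column identity $C + G + O = 1/4$ then bounds it by $3/4 - \beta - G - N - O \le 1/2$, provided $\beta \ge C - N$.
\item East of $q_3$: it misses the first column and has mass at most $1/2 + A - \beta - G$.
\end{itemize}
The rectangle east of $q_1$ and south of $q_2$ additionally needs $\alpha + G \ge A$. These constraints force an unbalanced split $q_1 = p_F^{\mathrm{SE}}(\alpha, \beta)$ with $\alpha = \max\{A - G, 0\}$ and $\beta = F - \alpha$.

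The degenerate case $N \cup O = \varnothing$, where $F + G = 1/4$ and $A = D = E = H = 0$, needs a different triple. The paper uses balanced split points of $F$ and $G$ together with a one-point $3/4$-net for $I \cup L \cup M \cup P$. Your fallback of re-splitting $G$ is too vague to substitute for this.
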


\begin{proof}
    Suppose that
    \begin{equation*}
        Y \coloneqq N \cup O
    \end{equation*}
    is nonempty. Without loss of generality, we may assume that $\North\paren{Y} \in N$. \Cref{lem:large-A} still covers the case $A > G + F / 2$, so assume
    \begin{equation*}
        A \leq G + F / 2.
    \end{equation*}
    The useful identities are
    \begin{equation*}
        A + C + D = F + N,
        \qquad
        A + B + D = G + O,
        \qquad
        A + E + M = L.
    \end{equation*}

    Set
    \begin{equation*}
        \alpha \coloneqq \max\set{A - G, 0},
        \qquad
        \beta \coloneqq F - \alpha = \min\set{F, F + G - A}.
    \end{equation*}
    Then $\alpha \leq F / 2 \leq \beta$, and
    \begin{equation}
        \label{eqn:adjacent-alpha-beta}
        \alpha + G \geq A,
        \qquad
        \beta + G \geq A,
        \qquad
        \beta \geq C - N.
    \end{equation}
    The last inequality follows from $C - N = F - A - D$. If $A \leq G$, then $\beta = F$. If $A > G$, then $\beta - \paren{C - N} = G + D$.

    Choose
    \begin{equation*}
        q_1 = p_F^{\mathrm{SE}}\paren{\alpha, \beta},
        \qquad
        q_2 = \South\paren{G},
        \qquad
        q_3 = \North\paren{Y}.
    \end{equation*}
    Any rectangle containing $p$ and avoiding the three selected points has one of seven possible positions. In three of them, it is contained in the middle two columns. The other four are shown in \cref{fig:adjacent-center}. Their masses satisfy
    \begin{align}
        \mu\paren{R_1} &\leq 1 / 2 + A - \alpha - G,
        &&\text{($q_1$ east, $q_2$ south),}\notag\\
        \mu\paren{R_2} &\leq 1 / 2 + A - \beta - G,
        &&\text{($q_1$ south, $q_2$ south, $q_3$ east),}\notag\\
        \mu\paren{R_3} &\leq 1 / 2 + D - \beta - N - O,
        &&\text{($q_1$ south, $q_2$ west, $q_3$ north),} \label{eqn:adj-R3}\\
        \mu\paren{R_4} &\leq 3 / 4 - \beta - G - N - O,
        &&\text{($q_1$ south, $q_2$ south, $q_3$ north).} \label{eqn:adj-R4}
    \end{align}
    The first two bounds are at most $1 / 2$ by \eqref{eqn:adjacent-alpha-beta}. For $R_3$,
    \begin{equation*}
        N + O - D = 1 / 4 + A - F - G \geq 0,
    \end{equation*}
    because $F + G \leq 1 / 4$, and hence the bound for $R_3$ in \eqref{eqn:adj-R3} is at most $1 / 2$. For $R_4$, the third inequality in \eqref{eqn:adjacent-alpha-beta} and the third-column equation give
    \begin{equation*}
        \beta + G + N + O
        \geq (C - N) + G + N + O
        = C + G + O
        = 1 / 4.
    \end{equation*}
    Thus, the bound for $R_4$ in \eqref{eqn:adj-R4} is also at most $1 / 2$.

    \begin{figure}[H]
        \centering
        \resizebox{\linewidth}{!}{\input{figures/strong-net-adjacent-center}}
        \caption{The four remaining rectangle positions in the adjacent-center case. Here $q_1$ is the split point in $F$, $q_2 = \South\paren{G}$, and $q_3 = \North\paren{Y}$. The points are deliberately offset so that no two selected points share an $x$- or $y$-coordinate.}
        \label{fig:adjacent-center}
    \end{figure}

    It remains to consider $Y = 0$. The second and third columns and the first two rows give
    \begin{equation*}
        B + F = C + G = 1 / 4,
        \qquad
        A + D = F + G - 1 / 4,
        \qquad
        E + H = 1 / 4 - F - G.
    \end{equation*}
    Both expressions on the right are nonnegative, so
    \begin{equation*}
        F + G = 1 / 4,
        \qquad A = D = E = H = 0.
    \end{equation*}
    All remaining mass in the southern half is contained in $Z \coloneqq I \cup L \cup M \cup P$, where $Z = 1 / 2$.
    Choose
    \begin{equation*}
        q_1 = p_F^{\mathrm{SE}},
        \qquad
        q_2 = p_G^{\mathrm{SW}},
    \end{equation*}
    both balanced. Since $\varsigma_1^{\calR} = 3 / 4$ \cite[Thm.~1]{AshokAG14}, choose $q_3 \in Z$ such that every rectangle avoiding $q_3$ contains at most $3Z / 4 = 3 / 8$ mass from $Z$. Any rectangle avoiding $q_1, q_2, q_3$ that does not contain $p$ is safe. If it is east of $q_1$ and west of $q_2$, it lies in the middle two columns and is also safe. In every remaining case, it lies outside the first row, contains at most $\paren{F + G} / 2 = 1 / 8$ mass from $F \cup G$, and contains at most $3 / 8$ mass from $Z$. Thus, its total mass is at most $1 / 2$.
\end{proof}

\medskip
\noindent
\textbf{One nonempty center cell.} Assume from now on that
\begin{equation}
    \label{eqn:one-center-assumptions}
    G = J = K = 0,
    \qquad F > 0,
    \qquad A \leq F / 2.
\end{equation}
The identities reduce to
\begin{align}
    &A + D + M + P = F, \label{eqn:one-corners}\\
    &H + L = 1 / 4 - D - P = 1 / 4 - F + A + M, \label{eqn:X-mass}\\
    &N + O = 1 / 4 - M - P = 1 / 4 - F + A + D. \label{eqn:Y-mass}
\end{align}
Let
\begin{equation*}
    \tau \coloneqq \max\set{A, 1 / 4 - F}.
\end{equation*}
We have the following lemma.
\begin{lemma}
    \label{lem:se-anchor}
    Let $q_3 \in V$ lie southeast of $p$. Suppose there are sets $Z_E, Z_S \subseteq V$, each disjoint from the first row, the first column, and $F$, such that
    \begin{equation*}
        Z_E \subseteq \mathsf E\paren{q_3},
        \qquad
        Z_S \subseteq \mathsf S\paren{q_3},
        \qquad
        Z_E, Z_S \geq \tau
    \end{equation*}
    where we identify a set and its total mass. Then $\set{\East\paren{F}, \South\paren{F}, q_3}$ is a strong rectangular $1 / 2$-net.
\end{lemma}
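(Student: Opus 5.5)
The plan is to classify every rectangle that avoids the three selected points by where it lies relative to each of them, and bound its mass using $\tau$. Write $q_1 = \East\paren{F}$ and $q_2 = \South\paren{F}$. Every rectangle not containing $p$ is safe, so it suffices to consider a rectangle $R$ that contains $p$ and avoids $q_1, q_2, q_3$.

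First I use the position of each point relative to $p$. Since $F$ lies northwest of $p$, a rectangle containing $p$ cannot lie strictly west or strictly north of $q_1$ or $q_2$. So $R$ is east or south of $q_1$, and east or south of $q_2$. Since $q_3$ lies southeast of $p$, $R$ is west or north of $q_3$.

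Next I record what each relation removes from $R$.
\begin{itemize}
\item $R$ east of $q_1$: then $R$ lies strictly east of every point of $F$, since $q_1$ is easternmost in $F$. It also lies east of the first column. So $R$ misses the first column and $F$.
\item $R$ south of $q_2$: symmetrically, $R$ misses the first row and $F$.
\item $R$ south of $q_1$: $R$ misses the first row, because $q_1$ lies in the second row.
\item $R$ east of $q_2$: $R$ misses the first column, because $q_2$ lies in the second column.
\item $R$ west of $q_3$: $R$ lies in $\Wof{q_3}$, which is disjoint from $\mathsf E\paren{q_3} \supseteq Z_E$, so $R$ misses $Z_E$.
\item $R$ north of $q_3$: similarly, $R$ misses $Z_S$.
\end{itemize}
In either case for $q_3$, $R$ therefore misses a set $Z \in \set{Z_E, Z_S}$ of mass at least $\tau$.

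Now I split into cases.
\begin{itemize}
\item \emph{$R$ east of $q_1$ or south of $q_2$.} Then $R$ misses a full column or a full row (mass $1/4$), plus $F$, plus $Z$. These three sets are pairwise disjoint by the hypotheses on $Z_E$ and $Z_S$. Hence
\[
\mu\paren{R} \leq \tfrac34 - F - \tau \leq \tfrac34 - F - \paren{\tfrac14 - F} = \tfrac12 .
\]
\item \emph{$R$ south of $q_1$ and east of $q_2$.} Then $R$ misses the union of the first row and first column, which has mass $1/2 - A$. It also misses $Z$, which is disjoint from that union. Using $\tau \geq A$,
\[
\mu\paren{R} \leq \tfrac12 + A - \tau \leq \tfrac12 .
\]
\end{itemize}
Both terms in $\tau = \max\set{A, 1/4 - F}$ are used, one in each case. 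Neither case needs the assumption $A \leq F/2$, which is presumably only used later to construct $q_3$, $Z_E$, and $Z_S$.

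The main obstacle is the boundary bookkeeping. I will need to check the following:
\begin{itemize}
\item the strict/weak inequalities: "east of $q_1$" means inside $\Eof{q_1}$, which excludes all of $F$ because $q_1$ is easternmost, while $Z_E \subseteq \mathsf E\paren{q_3}$ is closed and $R$ lies in the open complement;
\item that $q_1$ and $q_2$ are points of $V$, which makes the set a strong net;
\item degenerate cases where $p$ shares a coordinate with a point of $F$, which the quartile-grid convention of the proof of \cref{thm:strong_R_3_nets} should handle.
\end{itemize}
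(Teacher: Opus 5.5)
Your proof is correct and follows the paper's argument essentially step for step. Both restrict to rectangles containing $p$, then observe that $R$ misses $Z_E$ or $Z_S$ depending on its position relative to $q_3$. Both then use the same two-case split: if $R$ is east of $q_1$ or south of $q_2$, the bound is $3/4 - F - \tau \le 1/2$; otherwise $R$ is south of $q_1$ and east of $q_2$, and the bound is $1/2 + A - \tau \le 1/2$.
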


\begin{proof}
    Set $q_1 = \East\paren{F}$ and $q_2 = \South\paren{F}$, possibly with $q_1 = q_2$. By the choice of extreme points, $F \subseteq \mathsf W\paren{q_1}$ and $F \subseteq \mathsf N\paren{q_2}$ (see \cref{fig:se-anchor}).

    Let $R$ be a rectangle avoiding $q_1, q_2, q_3$. If $R$ does not contain $p$, then $R$ is safe. We may therefore assume that $p \in R$. Since $q_1, q_2$ lie northwest of $p$, the rectangle $R$ is east or south of each of them. Since $q_3$ lies southeast of $p$, the rectangle $R$ is west or north of $q_3$. In the former case $R \cap Z_E = \varnothing$, and in the latter $R \cap Z_S = \varnothing$. Thus, in either case, there is a set $Z \in \set{Z_E, Z_S}$ such that
    \begin{equation*}
        R \cap Z = \varnothing, \qquad Z \geq \tau.
    \end{equation*}

    Suppose first that $R$ is either east of $q_1$ or south of $q_2$. In the first case, $R$ is outside the first column and misses all of $F$. In the second case, $R$ is outside the first row and again misses all of $F$. In either case, $R$ lies outside a row or column of mass $1 / 4$ and also misses the disjoint sets $F$ and $Z$. Thus,
    \begin{equation*}
        \mu\paren{R}
        \leq 3 / 4 - F - Z
        \leq 3 / 4 - F - \tau
        \leq 3 / 4 - F - \paren*{1 / 4 - F}
        = 1 / 2.
    \end{equation*}

    It remains to consider the case in which $R$ is neither east of $q_1$ nor south of $q_2$. Since $R$ avoids both points, it must be south of $q_1$ and east of $q_2$. It is therefore outside both the first row and the first column, whose complement has mass $1 / 2 + A$. Since $R$ misses $Z$,
    \begin{equation*}
        \mu\paren{R}
        \leq 1 / 2 + A - Z
        \leq 1 / 2 + A - \tau
        \leq 1 / 2.
    \end{equation*}
    Thus, every rectangle avoiding $q_1, q_2, q_3$ is safe.
\end{proof}

\begin{figure}[H]
    \centering
    \input{figures/strong-net-se-anchor}
    \caption{Illustration of \cref{lem:se-anchor}. In panel~(a), the gray portions of $F$ contain no point of $F$ by the extremality of $q_1$ and $q_2$. Panel~(b) is schematic because the sets $Z_E$ and $Z_S$ need not be rectangular. Their placement indicates only that $Z_E \subseteq \mathsf E\paren{q_3}$ and $Z_S \subseteq \mathsf S\paren{q_3}$. The proof also uses that they are disjoint from the first row, the first column, and $F$, and that each has mass at least $\tau$.}
    \label{fig:se-anchor}
\end{figure}

\begin{lemma}[One nonempty center cell]
    \label{lem:one-center}
    If \eqref{eqn:one-center-assumptions} holds, then there is a strong rectangular $1 / 2$-net of size at most three.
\end{lemma}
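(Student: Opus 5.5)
The plan is to reduce as much as possible to \cref{lem:se-anchor}. When that fails, I fall back on the three balanced split points used in \cref{lem:large-A}, re-evaluated under \eqref{eqn:one-center-assumptions}. Since $K = 0$, the southeast quadrant of the quartile grid carries only the mass of $T \coloneqq L \cup O \cup P$. This set is disjoint from the first row, the first column and $F$. So the natural candidate for the anchor of \cref{lem:se-anchor} is
\begin{equation*}
    q_3 \coloneqq p_T^{\mathrm{NW}}\paren{\tau, \tau},
\end{equation*}
with $Z_E \coloneqq T \cap \mathsf E\paren{q_3}$ and $Z_S \coloneqq T \cap \mathsf S\paren{q_3}$. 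By \cref{lem:split-point}, this split point exists and gives $Z_E, Z_S \geq \tau$ whenever $T \geq 2\tau$. In that case $\set{\East\paren{F}, \South\paren{F}, q_3}$ is the required net.

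First I record the available mass. From \eqref{eqn:X-mass}, \eqref{eqn:Y-mass} and $F \leq 1 / 4$, both $X \coloneqq H \cup L$ and $Y \coloneqq N \cup O$ have mass at least $1 / 4 - F + A \geq \tau$. From \eqref{eqn:one-corners}, $P = F - A - D - M$. Second, in the regime where $T$ is too small for the split point above, I choose
\begin{equation*}
    q_1 = p_F^{\mathrm{SE}}, \qquad q_2 = p_X^{\mathrm{SW}}, \qquad q_3 = p_Y^{\mathrm{NE}},
\end{equation*}
as in \cref{lem:large-A}, now with $G = J = 0$. That lemma's bounds for $R_1, \dots, R_5$ reduce, using \eqref{eqn:one-corners}, to the following conditions:
\begin{itemize}
    \item $R_1$ and $R_5$ are safe if $P \leq A$, because $F / 2 - A - \paren{D + M} / 2 = \paren{P - A} / 2$;
    \item $R_2$ is safe if $M \leq A + D + P$;
    \item $R_4$ is safe if $D \leq A + M + P$;
    \item $R_3$ is always safe, since $F \leq 1 / 4$.
\end{itemize}
So this configuration handles small $P$ when $D$ and $M$ are balanced. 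When $D$ and $M$ are unbalanced, I would replace the balanced split point of $X$ or $Y$ by an unbalanced one, in the spirit of $p_F^{\mathrm{SE}}\paren{\alpha, \beta}$ in \cref{lem:adjacent-center}. Alternatively, I would replace it by an extreme point, for example $\North\paren{Y}$ or $\West\paren{X}$, and trade a corner rectangle against the excess of $M$ over $D$ (or of $D$ over $M$).

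The step I expect to be the main obstacle is covering the gap between the two regimes, namely $A < P$ while $T < 2\tau$. Here neither configuration works as stated. The anchor lemma needs mass at least $\tau$ both east and south of a single southeast point. The mass of $H$ and $N$ is not automatically east or south of a point chosen inside $T$. To handle this, I would first try taking $q_3$ to be a split point of the larger set $T \cup H \cup N$, which is still disjoint from the first row, the first column and $F$. I would then check by cases whether the chosen point lies in $O$, $L$ or $P$ that the parts of $H$ (for $Z_E$) or of $N$ (for $Z_S$) lying on the correct side of $q_3$ make up the shortfall. If that fails, I would use a hybrid net consisting of $\East\paren{F}$ or $\South\paren{F}$, one of $p_X^{\mathrm{SW}}$ or $p_Y^{\mathrm{NE}}$, and an extreme point of $P$. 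For this hybrid, I would verify the corner-rectangle bounds using the inequality $D + M < F - A - \paren{2\tau - L - O}$, which holds in this regime.
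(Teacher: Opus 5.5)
There is a genuine gap. You prove the lemma only in two sub-cases and leave the rest as strategies you would try. What you do prove is correct. The $(\tau,\tau)$ northwest split point of $T = L \cup O \cup P$ is a valid anchor for \cref{lem:se-anchor} whenever $T \geq 2\tau$. This regime is in fact larger than you note. The second row and second column, together with \eqref{eqn:X-mass} and \eqref{eqn:Y-mass}, give $L = A + M + E$ and $O = A + D + B$. Hence $T = A + F + B + E \geq 3A$, which already covers every case with $A \geq 1/4 - F$. Your specialization of the \cref{lem:large-A} bounds to the conditions $P \leq A$, $M \leq A + D + P$, $D \leq A + M + P$ is also correct.

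The case $T < 2\tau$ with $P > A$ or $|D - M| > A + P$ is nonempty, however, and neither of your sketched fixes handles it. Take $C = I = 1/4$, $H = N = 1/5$, $F = P = 1/20$, and all other cells empty, so that $\tau = 1/5$, $T = 1/20$ and $P > A = 0$. Let $H$ and $N$ be decreasing chains (each point strictly northwest of the next). Place $H$ south of all of $F$ and $N$ east of all of $F$. Place $P$ east of all of $H$ and south of all of $N$.
\begin{itemize}
\item Every input point southeast of $p$ lies in $P$ and has only points of $P$ weakly east of it. So \cref{lem:se-anchor} cannot be applied with any anchor. In particular, a $(\tau,\tau)$ split point of $T \cup H \cup N$ must lie in $H$ or $N$, which your case check omits.
\item Your hybrid also fails. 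Take any $q_1 \in F$, $q_3 \in P$ and $q_2 = p_Y^{\mathrm{NE}}$. The rectangle east of $q_1$ and north of both $q_2$ and $q_3$ contains all of $C$ and $H$ and, because $N$ is a decreasing chain, at least $N/2 - 1/|V|$ of $N$. That is at least $11/20 - 1/|V| > 1/2$.
\item With $p_X^{\mathrm{SW}}$ instead, the transposed rectangle (south of $q_1$, west of the other two points) contains $I$, $N$ and about half of $H$, so it fails the same way.
\end{itemize}

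The missing idea is to anchor on the extreme points $\West(X)$ and $\North(Y)$, with $X = H \cup L$ and $Y = N \cup O$. These are nonempty in your remaining case, since $\tau > 0$ there. Using them instead of split points of $T$, $X$ or $Y$ lets the mass of $H$ and $N$, which your $T$-anchor discards, count toward $Z_E$ and $Z_S$.
\begin{itemize}
\item If $\West(X) \in L$, it is itself an anchor with $Z_E = X$ and $Z_S = Y \cup P$. Both have mass at least $\tau$ by \eqref{eqn:X-mass} and \eqref{eqn:one-corners}. The case $\North(Y) \in O$ is symmetric.
\item Otherwise $\West(X) \in H$ and $\North(Y) \in N$. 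Let $\rho = P \cap \mathsf E(\West(X))$ and $\eta = P \cap \mathsf S(\North(Y))$.
\item If $\mu(\rho) + \mu(\eta) < P - A$, some point of $P$ lies strictly west of $\West(X)$ and strictly north of $\North(Y)$. It anchors \cref{lem:se-anchor} with $Z_E = X$ and $Z_S = Y$.
\item Otherwise, take the net $\{p_F^{\mathrm{SE}}(\alpha,\beta), \West(X), \North(Y)\}$ with $\alpha = \max\{A, D + P - \mu(\rho)\}$ and $\beta = \max\{A, M + P - \mu(\eta)\}$. The condition $\alpha + \beta \leq F$ follows from $A \leq F/2$, \eqref{eqn:one-corners} and $\mu(\rho) + \mu(\eta) \geq P - A$.
\end{itemize}
In this last net, $\West(X)$ removes all of $X \cup \rho$ from rectangles west of it, and $\North(Y)$ removes all of $Y \cup \eta$ from rectangles north of it. The unbalanced split of $F$ then absorbs the remaining corner mass. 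Balanced split points of $X$ and $Y$ remove only half, which is why you were left with the excess $(P - A)/2$ and with the unbalanced $D$, $M$ cases. In the instance above, this construction yields the net consisting of any point of $F$ together with $\West(H)$ and $\North(N)$.
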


\begin{proof}
    Write
    \begin{equation*}
        X \coloneqq H \cup L,
        \qquad
        Y \coloneqq N \cup O.
    \end{equation*}

    We first consider the cases in which $X = 0$ or $Y = 0$. Suppose that $X = 0$. The fourth-column equation gives $D + P = 1 / 4$, and \eqref{eqn:one-corners} then gives $F = 1 / 4 + A + M$. Since $F \leq 1 / 4$, it follows that
    \begin{equation*}
        F = 1 / 4,
        \qquad
        A = M = 0.
    \end{equation*}
    The second-column equation now gives $N = 0$. Hence either $O > 0$, or $O = 0$ and the fourth-row equation gives $P = 1 / 4$. In either case, there is a point $q_3$ southeast of $p$. Here $\tau = 0$, so \cref{lem:se-anchor} applies with $Z_E = Z_S = \varnothing$. The case $Y = 0$ is symmetric.

    Assume now that $X, Y > 0$. We first consider the case $\West\paren{X} \in L$. Take
    \begin{equation*}
        q_3 = \West\paren{X},
        \qquad
        Z_E = X,
        \qquad
        Z_S = Y \cup P.
    \end{equation*}
    Then $q_3$ lies southeast of $p$, all of $X$ is east of $q_3$, and all of $Y \cup P$ is south of $q_3$. Moreover,
    \begin{equation*}
        X = \paren*{1 / 4 - F} + A + M \geq \tau
    \end{equation*}
    and, using \eqref{eqn:one-corners},
    \begin{equation*}
        Y + P = 1 / 4 - M = \paren*{1 / 4 - F} + A + D + P \geq \tau.
    \end{equation*}
    Thus \cref{lem:se-anchor} applies (see \cref{fig:one-center-easy} for an illustration).

    \begin{figure}[H]
        \centering
        \input{figures/strong-net-one-center-easy}
        \caption{The first case. If $\West\paren{X} \in L$, then $q_3 = \West\paren{X}$ has all of $X$ to its east and all of $Y \cup P$ to its south. The other case is obtained by reflection.}
        \label{fig:one-center-easy}
    \end{figure}

    Suppose next that $\North\paren{Y} \in O$. The symmetric choice is
    \begin{equation*}
        q_3 = \North\paren{Y},
        \qquad
        Z_S = Y,
        \qquad
        Z_E = X \cup P.
    \end{equation*}
    Indeed,
    \begin{equation*}
        Y = \paren*{1 / 4 - F} + A + D \geq \tau,
        \qquad
        X + P = 1 / 4 - D = \paren*{1 / 4 - F} + A + M + P \geq \tau,
    \end{equation*}
    so \cref{lem:se-anchor} again applies.

    We may therefore assume that $\West\paren{X} \in H$ and $\North\paren{Y} \in N$. Let
    \begin{equation*}
        \rho \coloneqq P \cap \mathsf E\paren{\West\paren{X}},
        \qquad
        \eta \coloneqq P \cap \mathsf S\paren{\North\paren{Y}}.
    \end{equation*}
    A rectangle west of $\West\paren{X}$ misses both $X$ and $\rho$, while a rectangle north of $\North\paren{Y}$ misses both $Y$ and $\eta$. First suppose that
    \begin{equation}
        \label{eqn:rich-tails}
        \mu\paren{\rho} + \mu\paren{\eta} \geq P - A.
    \end{equation}
    Set
    \begin{equation*}
        \alpha \coloneqq \max\set{A, D + P - \mu\paren{\rho}},
        \qquad
        \beta \coloneqq \max\set{A, M + P - \mu\paren{\eta}}.
    \end{equation*}
    Then $\alpha + \beta \leq F$. Indeed, according to which term attains each maximum, the difference $F - \paren{\alpha + \beta}$ is one of
    \begin{equation*}
        F - 2A,
        \qquad M + \mu\paren{\rho},
        \qquad D + \mu\paren{\eta},
        \qquad A - P + \mu\paren{\rho} + \mu\paren{\eta}.
    \end{equation*}
    All four quantities are nonnegative by \eqref{eqn:one-center-assumptions} and \eqref{eqn:rich-tails}. Choose
    \begin{equation*}
        q_1 = p_F^{\mathrm{SE}}\paren{\alpha, \beta},
        \qquad
        q_2 = \West\paren{X},
        \qquad
        q_3 = \North\paren{Y}.
    \end{equation*}

    Let $R$ avoid $q_1, q_2, q_3$. If $p \notin R$, then $R$ is safe, so assume that $p \in R$. If $R$ is east of $q_1$, then it is either west or south of $q_2$. These are the rectangles $R_1$ and $R_2$ in \cref{fig:one-center-rich}. If $R$ is south of $q_1$, then it is either north or east of $q_3$. These are the rectangles $R_3$ and $R_4$. Using \eqref{eqn:X-mass} and \eqref{eqn:Y-mass}, we obtain
    \begin{align*}
        \mu\paren{R_1}
        &\leq 3 / 4 - X - \mu\paren{\rho} - \alpha
        = 1 / 2 - \alpha + D + P - \mu\paren{\rho}, \\
        \mu\paren{R_2}
        &\leq 1 / 2 + A - \alpha, \\
        \mu\paren{R_3}
        &\leq 3 / 4 - Y - \mu\paren{\eta} - \beta
        = 1 / 2 - \beta + M + P - \mu\paren{\eta}, \\
        \mu\paren{R_4}
        &\leq 1 / 2 + A - \beta.
    \end{align*}
    The definitions of $\alpha$ and $\beta$ make all four right-hand sides at most $1 / 2$.

    \begin{figure}[H]
        \centering
        \resizebox{\linewidth}{!}{\input{figures/strong-net-one-center-rich}}
        \caption{The four remaining rectangle positions in the one-center case. Here $q_1 = p_F^{\mathrm{SE}}\paren{\alpha, \beta}$, $q_2 = \West\paren{X}$, and $q_3 = \North\paren{Y}$. Within $P$, the diagonally hatched region is $\rho$, and the dotted region is $\eta$.}
        \label{fig:one-center-rich}
    \end{figure}

    Finally, suppose that
    \begin{equation*}
        \mu\paren{\rho} + \mu\paren{\eta} < P - A.
    \end{equation*}
    The definitions of $\rho$ and $\eta$ imply
    \begin{equation*}
        \mu\paren*{P \setminus \paren{\mathsf E\paren{\West\paren{X}} \cup \mathsf S\paren{\North\paren{Y}}}}
        \geq P - \mu\paren{\rho} - \mu\paren{\eta}
        > A.
    \end{equation*}
    This set is nonempty, so choose $q_3$ from it. Then $q_3$ lies southeast of $p$ and
    \begin{equation*}
        x\paren{q_3} < x\paren{\West\paren{X}}, \qquad y\paren{q_3} > y\paren{\North\paren{Y}}.
    \end{equation*}
    Thus $X \subseteq \mathsf E\paren{q_3}$ and $Y \subseteq \mathsf S\paren{q_3}$. Moreover, \eqref{eqn:X-mass} and \eqref{eqn:Y-mass} give
    \begin{equation*}
        X = \paren*{1 / 4 - F} + A + M \geq \tau,
        \qquad
        Y = \paren*{1 / 4 - F} + A + D \geq \tau.
    \end{equation*}
    \Cref{lem:se-anchor} therefore applies with $Z_E = X$ and $Z_S = Y$.
\end{proof}

\begin{figure}[H]
    \centering
    \input{figures/strong-net-one-center-poor}
    \caption{The last case of \cref{lem:one-center}. The highlighted part of $P$ lies west of $\West\paren{X}$ and north of $\North\paren{Y}$ and has mass greater than $A$. Choosing $q_3$ there places all of $X$ east of $q_3$ and all of $Y$ south of $q_3$.}
    \label{fig:one-center-poor}
\end{figure}

\medskip
\noindent
\begin{proof}[\textbf{Proof of \cref{thm:strong_R_3_nets}}]
    The preceding case analysis shows that every exact quartile grid has a strong rectangular $1 / 2$-net of size at most three.

    Let $V \subseteq \R^2$ be an arbitrary finite point set.
    Replace every point of $V$ by four sufficiently close distinct copies, obtaining a set $\bar{V}$.
    Choose the copies so that all their coordinates are distinct and every strict coordinate separation between the original points is preserved.
    Since $\card{\bar{V}} = 4\card V$, we can divide $\bar{V}$ into an exact quartile grid.
    The preceding case analysis then gives a set $\bar{Q} \subseteq \bar{V}$ of size at most three that intersects every rectangle containing more than half of the points of $\bar{V}$.

    Let $Q \subseteq V$ consist of the original points corresponding to the points of $\bar{Q}$.
    Suppose, for a contradiction, that an axis-parallel rectangle $R$ avoids $Q$ and contains the set $X \coloneqq R \cap V$ with $\card X > \card V / 2$.
    Let $R_X$ be the smallest closed axis-parallel rectangle containing $X$.
    Since $R_X \subseteq R$, every point of $Q$ lies outside $R_X$.
    The choice of the copies therefore ensures that the smallest axis-parallel rectangle containing all four copies of every point of $X$ avoids $\bar{Q}$.
    However, this rectangle contains
    \begin{equation*}
        4\card X > 2\card V = \frac{\card{\bar{V}}}{2}
    \end{equation*}
    points of $\bar{V}$, contradicting the choice of $\bar{Q}$.
    Therefore, $Q$ is a strong rectangular $1 / 2$-net for $V$ with $\card Q \leq 3$.
\end{proof}

\medskip
\noindent
\textbf{Lower bounds for strong rectangular nets.}
All three lower-bound instances are specified by a permutation $\pi$ describing the point set $\setst{\paren{i, \pi_i}}{i \in [n]}$.
For $k = 2, 3, 4$, we use the following permutations:
\begin{align*}
    &\pi^2 \coloneqq \paren{7, 8, 9, 12, 13, 14, 1, 2, 3, 11, 10, 4, 5, 6},\\
    &\pi^3 \coloneqq \paren{12, 13, 16, 9, 7, 19, 5, 21, 22, 23, 24, 1, 2, 25, 26, 3, 4, 20, 6, 18, 8, 17, 10, 11, 14, 15}, \text{ and }\\
    &\pi^4 \coloneqq \paren{9, 10, 11, 5, 4, 14, 15, 16, 1, 2, 3, 13, 12, 6, 7, 8}.
\end{align*}
We verify the lower bounds by exhaustive enumeration.
For $k = 2, 3, 4$, every $k$-point subset of the corresponding instance is avoided by an axis-parallel rectangle containing at least $8$, $12$, and $6$ points, respectively.
The complete verification is available in the accompanying code repository.\footnote{\url{https://github.com/gabrielmorete/Reproduction-Metric-Condorcet}}
Thus,
\begin{equation*}
    \varsigma_2^{\calR} \geq \frac{4}{7}, \qquad \varsigma_3^{\calR} \geq \frac{6}{13}, \qquad \varsigma_4^{\calR} \geq \frac{3}{8}.
\end{equation*}

\begin{figure}[ht!]
    \centering
    \begin{subfigure}[t]{0.32\textwidth}
        \centering
        \input{figures/strong-net-lower-bound-k2}
        \caption{$k = 2$}
    \end{subfigure}
    \hfill
    \begin{subfigure}[t]{0.32\textwidth}
        \centering
        \input{figures/strong-net-lower-bound-k3}
        \caption{$k = 3$}
    \end{subfigure}
    \hfill
    \begin{subfigure}[t]{0.32\textwidth}
        \centering
        \input{figures/strong-net-lower-bound-k4}
        \caption{$k = 4$}
    \end{subfigure}
    \caption{Instances generating the lower bounds for strong rectangular nets.}
    \label{fig:strong-net-lower-bounds}
\end{figure}

\section{Deferred details for \nameref{sec:models}}
\label{apx:models}

\medskip
\noindent

\begin{proof}[\textbf{Proof of \cref{prp:voronoi-equivalence}}]
    First, by restricting the challenger and by considering $C = V$, we obtain $\pF\pF(k) \geq \pF\pR(k) \geq \pF\pP(k)$.
    It remains to prove $\pF\pF(k) \leq \pF\pP(k)$.
    The idea is to approximate a free-challenger instance by a sequence of instances in which the challenger is peer-restricted.
    We add increasingly many possible challenger locations as voters, while replacing each original voter by a much larger cluster of nearby voters.
    The added voter locations approximate the choices available to a free challenger, while the fraction of the electorate formed by these added voters tends to zero.
    Assume for contradiction that $\pF\pF(k) > \pF\pP(k)$.
    Then there exist a voter set $V = \set{v_1, \dots, v_n}$ and an integer $p \in \Z_{\geq 0}$ such that $\pF\pP(k) < p / n$ and, for every $S \subseteq \calX$ with $\card S \leq k$, there exists $c \in \calX$ satisfying
    \begin{equation*}
        \card{\setst{v \in V}{c \succ_v S}} \geq p.
    \end{equation*}

    Let $r \coloneqq 1 + \max_{j \in [n]} d\paren{v_1, v_j}$.
    Fix a countable dense set $\bar{\calC} \coloneqq \set{c_1, c_2, \dots} \subseteq \calX$, which exists because $\calX$ is finite-dimensional.
    For each $q \in \Z_{> 0}$ and each $i \in [n]$, take a set $U_{i, q}$ of $q^2$ points in $B\paren{v_i, 1 / q}$, with these sets pairwise disjoint.
    We take $V_q \coloneqq \set{c_1, \dots, c_q} \bigcup_{i = 1}^n U_{i, q}$ as the voter set of such an instance, which can be assumed to be in general position by introducing sufficiently small perturbations.

    For each $q$, let $\widetilde{S}_q$ be an optimal coalition chosen by the free defender for $V_q$.
    We obtain a coalition $S_q \subseteq B\paren{v_1, 2r}$ by replacing every point $x \in \widetilde{S}_q \setminus B\paren{v_1, 2r}$ with $v_1$.
    To justify this replacement, let $u \in U_{i, q}$ and $x \notin B\paren{v_1, 2r}$.
    Then,
    \begin{equation*}
        d\paren{u, v_1} \leq 1 / q + \max_{j \in [n]} d\paren{v_1, v_j} \leq r,
    \end{equation*}
    whereas the triangle inequality gives $d\paren{u, x} \geq d\paren{v_1, x} - d\paren{u, v_1} > r$.
    Thus, $v_1$ is closer than $x$ to every clustered voter.
    Replacing every such $x$ by $v_1$ therefore cannot increase the number of clustered voters captured by any challenger.
    The replacement can affect only the $q$ added voters from $\bar{\calC}$.
    Since $\widetilde{S}_q$ is optimal and $\card{V_q} = nq^2 + q$, it follows that
    \begin{equation*}
        \sup_{c \in V_q} \card{\setst{v \in V_q}{c \succ_v S_q}}
        \leq \pF\pP(k)\paren{nq^2 + q} + q.
    \end{equation*}
    We may assume without loss of generality that $\card{S_q} = k$, since adding points from $B\paren{v_1, 2r}$ to the defending coalition cannot increase the number of voters captured by any challenger.
    Write $S_q = \curly{s_1^q, \dots, s_k^q}$.
    By compactness of $B\paren{v_1, 2r}$, the Bolzano--Weierstrass theorem yields a subsequence along which $s_i^q \to s_i$ for every $i \in [k]$.
    Denote the resulting coalition by $S \coloneqq \curly{s_1, \dots, s_k}$.

    By the choice of $V$, there is a challenger $c \in \calX$ that captures at least $p$ voters against $S$.
    Choose $p$ of these voters.
    Since their preferences are strict and the distance function is continuous, the set of points in $\calX$ that all $p$ voters prefer to every member of $S$ is open and contains $c$.
    By density, this set contains some $\bar{c} \in \bar{\calC}$.
    For all sufficiently large $q$, the challenger may choose $\bar{c} \in V_q$.
    By the convergence of $S_q$ to $S$ and the definition of the clusters, continuity implies that $\bar{c}$ captures all $pq^2$ voters in the corresponding clusters.
    Applying the preceding bound to $\bar{c}$ gives
    \begin{equation*}
        pq^2 \leq \pF\pP(k)\paren{nq^2 + q} + q.
    \end{equation*}

    Dividing by $q^2$ and letting $q$ tend to infinity gives $p / n \leq \pF\pP(k)$, a contradiction.
\end{proof}

\section{Deferred details for \nameref{sec:applications}}
\label{apx:applications}

\medskip
\noindent
\textbf{Proof of \cref{lem:giveaway} and a counterexample to its higher-dimensional analogue.} We first prove the lemma.

\begin{proof}[Proof of \cref{lem:giveaway}]
    After a translation, we may assume that $a = 0$.
    Fix a distance $d$ induced by a norm, and let $B = B_d(0, 1)$ be its unit ball.
    For a vector $x \in \R^2$, define $\R x \coloneqq \setst{\alpha x}{\alpha \in \R}$.

    A line supports $B$ if it intersects $B$ and $B$ is entirely contained in one of the two closed half-planes determined by the line.
    Choose a supporting line of $B$ parallel to $b$, and let $u \in \partial B$ be a point where it meets $B$ (see \cref{fig:giveaway_proof}).
    Since $B$ is centrally symmetric, $-u + \R b$ also supports $B$.
    Moreover, $u$ and $b$ are linearly independent.
    Otherwise, $u + \R b$ would contain the origin, which lies in the interior of $B$, contradicting that the line supports $B$.

    \begin{figure}[ht!]
        \centering
        \input{figures/giveaway}
        \caption{Illustration of the proof of \cref{lem:giveaway} for $b = \paren{-1, 2}$ under the $\ell_1$, $\ell_4$, and $\ell_\infty$ norms.
        The shaded region is $\Dom(0, b)$.
        The dotted line is the boundary of $\conv\paren{\Dom(0, b)}$.
        The dashed lines are the supporting lines $u + \R b$ and $-u + \R b$ of $B$.
        The solid line is $b + \R u$.}
        \label{fig:giveaway_proof}
    \end{figure}

    We show that
    \begin{equation}
        \label{eqn:giveaway_proof}
        \Dom(0, b) \subseteq \setst{\alpha u + \beta b \in \R^2}{\beta < 1, \alpha, \beta \in \R}.
    \end{equation}
    For each $\alpha \in \R$, the point $y \coloneqq \alpha u$ is a closest point to the origin on the line $y + \R b$.
    For $\alpha = 0$, the claim is trivial.
    If $\alpha > 0$, then $y + \R b$ supports the ball $\alpha B = B(0, \alpha)$, so every point on this line has norm at least $\alpha = d\paren{y, 0}$.
    If $\alpha < 0$, the same conclusion follows from the supporting line through $-u$.

    Consider the closed half-plane bounded by $b + \R u$ that does not contain the origin.
    Every point in this half-plane can be written as $x = y + \beta b$ for some $\beta \geq 1$ and $y \in \R u$.
    Since $d$ is induced by a norm, $d\paren{y + \gamma b, 0}$ is a convex function of $\gamma$.
    By the preceding paragraph, this function is minimized at $\gamma = 0$ and is therefore nondecreasing for $\gamma \geq 0$.
    Consequently,
    \begin{equation*}
        d\paren{x, b} = d\paren{y + \paren{\beta - 1}b, 0} \leq d\paren{y + \beta b, 0} = d\paren{x, 0}.
    \end{equation*}
    Thus, every point in this half-plane is at least as close to $b$ as it is to $0$ and therefore does not belong to $\Dom(0, b)$.
    This proves \eqref{eqn:giveaway_proof}.
    The half-plane on the right-hand side of \eqref{eqn:giveaway_proof} is convex, contains $\conv\paren{\Dom(0, b)}$, and does not contain $b$.
    Therefore, $b \notin \conv\paren{\Dom(0, b)}$.

    For the $\ell_1$ norm, we can make the choice of $u$ more specific.
    The vertices of $B$ are $\paren{1, 0}$, $\paren{0, 1}$, $\paren{-1, 0}$, and $\paren{0, -1}$, which lie on the coordinate axes (see \cref{fig:giveaway_proof} for an illustration).
    Every supporting line of $B$ contains a vertex, so we may take $u$ to be one of them.
    Hence, $\R u$ is a coordinate axis and $b + \R u$ is axis-parallel.
    By \eqref{eqn:giveaway_proof}, $\Dom_{\ell_1}(0, b)$ is contained in the open half-plane bounded by this line.
    This completes the proof.
\end{proof}

We next show that the analogue of \cref{lem:giveaway} in $\R^3$ fails for every $\ell_p$ norm with $p \neq 2$.
Let $a \coloneqq \paren{0, 0, 0}$ and $b \coloneqq \paren{1, 1, 1}$.

For $1 \leq p < 2$, consider the three cyclic permutations of
\begin{equation*}
    x_t \coloneqq \paren{1 + 2t, 1 - t, 1 - t}.
\end{equation*}
Their average is $b$.
Moreover,
\begin{align*}
    \lim_{t \to \infty}
    \frac{\norm{x_t - a}_p^p - \norm{x_t - b}_p^p}{t^{p - 1}}
    &= \lim_{t \to \infty}
    \frac{\paren{1 + 2t}^p + 2\paren{t - 1}^p - \paren{2^p + 2}t^p}{t^{p - 1}} \\
    &= \lim_{t \to \infty}
    t\paren*{\paren{2 + 1 / t}^p + 2\paren{1 - 1 / t}^p - \paren{2^p + 2}}, \\
    \intertext{taking $s = 1 / t$, we obtain}
    &= \lim_{s \to 0^+}
    \frac{\paren{2 + s}^p + 2\paren{1 - s}^p - \paren{2^p + 2}}{s} \\
    &= \lim_{s \to 0^+}
    \paren*{p\paren{2 + s}^{p - 1} - 2p\paren{1 - s}^{p - 1}} \\
    &= p\paren{2^{p - 1} - 2} < 0.
\end{align*}
Thus, for sufficiently large $t$, all three points belong to $\Dom(a, b)$ and $b \in \conv\paren{\Dom(a, b)}$.

For $2 < p < \infty$, the same argument applies to the cyclic permutations of $y_t \coloneqq \paren{1 - 2t, 1 + t, 1 + t}$.
When $p = \infty$, this construction works for $t > 1$, since $\norm{y_t - a}_\infty = \max\set{2t - 1, t + 1} < 2t = \norm{y_t - b}_\infty$.

\medskip
\noindent
\textbf{Nearest-candidate rounding does not extend to other $\bm{\ell_p}$ norms.}
The instance in \cref{fig:rounding_counterexample} shows that, already on the plane, the nearest-candidate rounding argument used in the proof of \cref{lem:euclidean_projection} fails for every $\ell_p$ norm with $p \neq 2$.
Since $q$ is the midpoint of $v_1$ and $v_2$, every convex set containing more than half of the voters contains $q$, and hence $q$ is a weak $1 / 2$-net for $V$.
For each of $q$, $v_1$, and $v_2$, the squared Euclidean distances to $a$ and $b$ are equal.
At $q$, the two squared coordinate differences to $a$ are more unequal than those to $b$, whereas at each voter the two squared coordinate differences to $b$ are more unequal than those to $a$.
Strict concavity of the function $x^{p / 2}$ therefore makes $q$ closer to $a$ and both voters closer to $b$ for $1 \leq p < 2$, while strict convexity reverses all comparisons for $2 < p < \infty$.
For $p = \infty$, the same follows by comparing the largest coordinate differences.
Thus, for every $p \neq 2$, rounding $q$ to its closest candidate allows the other candidate to capture both voters.

\begin{figure}[h!]
    \centering
    \input{figures/rounding-counterexample}
    \caption{Example showing that the rounding argument does not extend to other $\ell_p$ norms.
    The three panels use $\ell_{2 - \delta}$, $\ell_2$, and $\ell_{2 + \delta}$ with $\delta = 1 / 2$.
    The voters are $v_1 = \paren{2, 1}$ and $v_2 = \paren{6, 9}$, their midpoint is $q = \paren{4, 5}$, and the candidates are $a = \paren{9, 5}$ and $b = \paren{1, 9}$.
    The shaded region is $\Dom(a, b)$.}
    \label{fig:rounding_counterexample}
\end{figure}

\end{document}

%% file: figures/strong-net-figure-styles.tex
\definecolor{gridblue}{HTML}{DCEAF7}
\definecolor{gridgold}{HTML}{F7E8B5}
\definecolor{pointblue}{HTML}{0072B2}
\definecolor{pointred}{HTML}{D55E00}
\definecolor{pointgreen}{HTML}{009E73}

\tikzset{
    bluepoint/.style = {circle, fill = pointblue, draw = white, line width = 0.35pt, inner sep = 2.15pt},
    redpoint/.style = {circle, fill = pointred, draw = white, line width = 0.35pt, inner sep = 2.15pt},
    greenpoint/.style = {circle, fill = pointgreen, draw = white, line width = 0.35pt, inner sep = 2.15pt},
    bluesplit/.style = {rectangle, rounded corners = 0.25pt, fill = pointblue, draw = white, line width = 0.35pt, inner sep = 2.15pt},
    redsplit/.style = {rectangle, rounded corners = 0.25pt, fill = pointred, draw = white, line width = 0.35pt, inner sep = 2.15pt},
    greensplit/.style = {rectangle, rounded corners = 0.25pt, fill = pointgreen, draw = white, line width = 0.35pt, inner sep = 2.15pt},
    bluehatch/.style = {postaction = {pattern = {Lines[angle = 45, distance = 5pt, line width = 0.18pt]}, pattern color = black!32}},
    redhatch/.style = {postaction = {pattern = {Lines[angle = -45, distance = 5pt, line width = 0.18pt]}, pattern color = black!32}},
    greenhatch/.style = {postaction = {pattern = {Dots[distance = 5pt, radius = 0.32pt]}, pattern color = black!38}},
    avoidarrow/.style = {-{Stealth[length = 1.7mm]}, line width = 0.72pt},
    dangerrect/.style = {draw = pointred, fill = gridgold, fill opacity = 0.40, draw opacity = 1, line width = 0.72pt},
    emptycell/.style = {fill = black!12, draw = none},
    focuszone/.style = {fill = pointgreen!12, draw = pointgreen!48, line width = 0.45pt, greenhatch},
    guide/.style = {densely dashed, line width = 0.55pt}
}

\newcommand{\drawquartilegrid}{%
    \draw[step = 1, draw = black!58, line width = 0.34pt] (0, 0) grid (4, 4);
    \draw[draw = black!58, line width = 0.9pt] (2, 0) -- (2, 4);
    \draw[draw = black!58, line width = 0.9pt] (0, 2) -- (4, 2);
    \foreach \name/\x/\y in {
        A/0.08/3.92, B/1.08/3.92, C/2.08/3.92, D/3.08/3.92,
        E/0.08/2.92, F/1.08/2.92, G/2.08/2.92, H/3.08/2.92,
        I/0.08/1.92, J/1.08/1.92, K/2.08/1.92, L/3.08/1.92,
        M/0.08/0.92, N/1.08/0.92, O/2.08/0.92, P/3.08/0.92}
        \node[text = black, font = \tiny, anchor = north west] at (\x, \y) {$\name$};
}

%% file: figures/quartile-grid.tex
\begin{tikzpicture}[
    x = 0.83cm,
    y = 0.83cm,
    every node/.style = {font = \small}
]
    \begin{scope}
        \fill[black!9] (1, 1) rectangle (3, 3);
        \draw[step = 1, draw = black!58, line width = 0.34pt] (0, 0) grid (4, 4);
        \draw[draw = black!58, line width = 0.9pt] (2, 0) -- (2, 4);
        \draw[draw = black!58, line width = 0.9pt] (0, 2) -- (4, 2);
        \foreach \name/\x/\y in {
            A/0.08/3.92, B/1.08/3.92, C/2.08/3.92, D/3.08/3.92,
            E/0.08/2.92, F/1.08/2.92, G/2.08/2.92, H/3.08/2.92,
            I/0.08/1.92, J/1.08/1.92, K/2.08/1.92, L/3.08/1.92,
            M/0.08/0.92, N/1.08/0.92, O/2.08/0.92, P/3.08/0.92
        } {
            \node[text = black, font = \tiny, anchor = north west] at (\x, \y) {$\name$};
        }
        \node[circle, fill = black, inner sep = 1.25pt, label = {[label distance = -1.5pt, font = \scriptsize]below left:$p$}] at (2, 2) {};
    \end{scope}

    \begin{scope}[xshift = 5.5cm]
        \fill[black!3] (1.65, 0) rectangle (4, 4);
        \fill[black!3] (0, 0) rectangle (4, 2.35);
        \begin{scope}
            \clip (1.65, 0) rectangle (4, 4);
            \foreach \offset in {-4, -3.65, ..., 2.65} {
                \draw[draw = black!28, line width = 0.25pt] (0, \offset) -- (4, {4 + \offset});
            }
        \end{scope}
        \begin{scope}
            \clip (0, 0) rectangle (4, 2.35);
            \foreach \offset in {0, 0.35, ..., 6.65} {
                \draw[draw = black!28, line width = 0.25pt] (0, \offset) -- (4, {\offset - 4});
            }
        \end{scope}
        \draw[draw = black!38, line width = 0.3pt] (1.65, 0) rectangle (4, 4);
        \draw[draw = black!38, line width = 0.3pt] (0, 0) rectangle (4, 2.35);
        \draw[step = 1, draw = black!58, line width = 0.34pt] (0, 0) grid (4, 4);
        \draw[draw = black!58, line width = 0.9pt] (2, 0) -- (2, 4);
        \draw[draw = black!58, line width = 0.9pt] (0, 2) -- (4, 2);
        \foreach \name/\x/\y in {
            A/0.08/3.92, B/1.08/3.92, C/2.08/3.92, D/3.08/3.92,
            E/0.08/2.92, F/1.08/2.92, G/2.08/2.92, H/3.08/2.92,
            I/0.08/1.92, J/1.08/1.92, K/2.08/1.92, L/3.08/1.92,
            M/0.08/0.92, N/1.08/0.92, O/2.08/0.92, P/3.08/0.92
        } {
            \node[text = black, font = \tiny, anchor = north west] at (\x, \y) {$\name$};
        }
        \node[circle, fill = black, inner sep = 1.25pt, label = {[label distance = -1.5pt, font = \scriptsize]below left:$p$}] at (2, 2) {};
        \node[circle, fill = black, inner sep = 1.25pt] (q) at (1.65, 2.35) {};
        \node[anchor = west, fill = white, inner sep = 0.6pt, font = \scriptsize] at (1.73, 2.35) {$q$};

        \draw[draw = black!65, line width = 0.45pt] (4.08, 2.35) -- (4.20, 2.35) -- (4.20, 4) -- (4.08, 4);
        \node[right, font = \scriptsize] at (4.20, 3.175) {$\leq 1 / 2$};
        \draw[draw = black!65, line width = 0.45pt] (0, -0.08) -- (0, -0.20) -- (1.65, -0.20) -- (1.65, -0.08);
        \node[below, font = \scriptsize] at (0.825, -0.20) {$\leq 1 / 2$};
    \end{scope}
\end{tikzpicture}

%% file: figures/hierarchy.tex
\begin{tikzpicture}[
    >=Latex,
    single/.style={align=center, inner sep=3pt, font=\small},
    order/.style={->, line width=0.8pt, shorten <=3pt, shorten >=3pt},
    geometric/.style={->, dashed, line width=0.8pt, shorten <=3pt, shorten >=3pt},
    note/.style={align=left, inner sep=0pt, font=\scriptsize}
]
    \coordinate (wn-old) at (-5.00, 6.00);
    \node[single] (wn) at (-2.50,6.00) {\emph{Weak $\varepsilon$-nets}};
    \node[single] (srn) at (2.50,6.00) {\emph{Strong rectangular} \emph{$\varepsilon$-nets}};
    \node[single] (rr) at (-2.50,4.00) {\emph{General} \emph{metric} \emph{elections}\\$\pR\pR$};
    \node[single] (sp) at (2.50,4.00) {\emph{Strong peer selection}\\$\pP\pF, \pP\pR$};
    \node[single] (pp) at (0,2.00) {\emph{Peer selection}\\$\pP\pP$};
    \node[single] (vg) at (0,0) {\emph{Voronoi games}\\$\pF\pF, \pF\pR, \pF\pP$};

    \draw[geometric] (wn.south) -- (rr.north);
    \node[note, anchor=west] at (-2.00,5.25) {$\paren{\R^r, \ell_2}$};
    \node[note, anchor=west] at (-2.00,4.92) {\cref{lem:euclidean_projection}};

    \draw[geometric] (srn.south) -- (sp.north);
    \node[note, anchor=west] at (3.00,5.28) {$\paren{\R^2, \ell_1}$\\$\paren{\R^2, \ell_\infty}$};
    \node[note, anchor=west] at (3.00,4.72) {\cref{lem:strong_nets_to_peer}};

    \coordinate (wn-vg-turn) at (wn-old |- wn.west);
    \coordinate (wn-vg-corner) at (wn-old |- vg.west);
    \draw[geometric] (wn.west) -- (wn-vg-turn) -- (wn-vg-corner) -- (vg.west);
    \node[note, anchor=east, xshift=-4pt] at ($(wn-vg-turn)!0.5!(wn-vg-corner)+(0,0.22)$) {$\paren{\R^2, d}$};
    \node[note, anchor=east, xshift=-4pt] at ($(wn-vg-turn)!0.5!(wn-vg-corner)+(0,-0.22)$) {\cref{prop:weak_nets_to_voronoi}};

    \draw[order] (rr.south) -- (pp.north west);
    \node[note, anchor=south east] at (0,2.95) {Equation~\eqref{eqn:hierarchy_condorcet}};

    \draw[order] (sp.south) -- (pp.north east);
    \node[note, anchor=south west] at (2.25,2.95) {Equation~\eqref{eqn:hierarchy_strong}};

    \draw[order] (pp) -- (vg);
    \node[note, anchor=west] at (0.20,1.00) {Equation~\eqref{eqn:hierarchy_strong}};
\end{tikzpicture}

%% file: figures/manhattan-one-candidate.tex
\begin{tikzpicture}[
    x = 0.72cm,
    y = 0.72cm,
    axis/.style = {->, draw = black!45, line width = 0.35pt},
    candidate orbit/.style = {draw = black!48, line width = 0.55pt},
    voter orbit/.style = {densely dotted, draw = black!65, line width = 0.7pt},
    voter/.style = {circle, fill = black, inner sep = 1.2pt},
    candidate/.style = {rectangle, fill = black, inner sep = 1.5pt},
    every node/.style = {font = \scriptsize}
]
    \draw[axis] (-5, 0) -- (5.15, 0);
    \draw[axis] (0, -5) -- (0, 5.15);

    \draw[candidate orbit] (2, 1) -- (-1, 2) -- (-2, -1) -- (1, -2) -- cycle;
    \draw[voter orbit] (3, 4) -- (-4, 3) -- (-3, -4) -- (4, -3) -- cycle;
    \node[candidate] at (2, 1) {};
    \node[below right] at (2, 1) {$c_1$};
    \node[candidate] at (-1, 2) {};
    \node[above left] at (-1, 2) {$c_2$};
    \node[candidate] at (-2, -1) {};
    \node[above left] at (-2, -1) {$c_3$};
    \node[candidate] at (1, -2) {};
    \node[below right] at (1, -2) {$c_4$};

    \node[voter] at (3, 4) {};
    \node[above right] at (3, 4) {$v_1$};
    \node[voter] at (-4, 3) {};
    \node[above left] at (-4, 3) {$v_2$};
    \node[voter] at (-3, -4) {};
    \node[below left] at (-3, -4) {$v_3$};
    \node[voter] at (4, -3) {};
    \node[below right] at (4, -3) {$v_4$};
\end{tikzpicture}

%% file: figures/undecided-voter-example.tex
\begin{tikzpicture}[
    x = 1.5cm,
    y = 1.5cm,
    axis/.style = {->, draw = black!40, line width = 0.35pt},
    ballone/.style = {densely dotted, draw = black, line width = 0.8pt},
    balltwo/.style = {densely dashed, draw = black!70, line width = 0.75pt},
    ballinf/.style = {draw = black, line width = 0.75pt},
    trajectory/.style = {-{Latex[length = 1.6mm]}, draw = black!55, line width = 0.55pt},
    voter/.style = {circle, fill = black, inner sep = 1.2pt},
    candidate/.style = {rectangle, fill = black, inner sep = 1.5pt},
    every node/.style = {font = \scriptsize}
]
    \draw[axis] (-2.25, 0) -- (2.3, 0);
    \draw[axis] (0, -2.25) -- (0, 2.1);

    \draw[ballinf] (-1, -1) rectangle (1, 1);
    \draw[balltwo] (0, 0) circle[radius = 1];
    \draw[ballone] (1, 0) -- (0, 1) -- (-1, 0) -- (0, -1) -- cycle;

    \draw[trajectory] (-0.5, -0.5) -- (-1, -1);

    \node[candidate, label = right:$a$] at (1, 0) {};
    \node[candidate, label = above:$b$] at (0, 1) {};
    \node[candidate] at (-1, -1) {};
    \node[below left] at (-1, -1) {$c$};

    \node[voter] at (2, 0.25) {};
    \node[right] at (2, 0.25) {$v_1$};
    \node[voter] at (-2, 1.75) {};
    \node[above right] at (-2, 1.75) {$v_2$};
    \node[voter] at (0.25, -2) {};
    \node[right] at (0.25, -2) {$v_3$};
\end{tikzpicture}

%% file: figures/strong-net-full-center.tex
\begin{tikzpicture}[
  x=.66cm,y=.66cm,font=\scriptsize,
  bluepoint/.append style={inner sep=1.75pt},
  redpoint/.append style={inner sep=1.75pt},
  greensplit/.append style={inner sep=1.75pt},
  avoidarrow/.append style={line width=.58pt},
  every label/.append style={font=\tiny,inner sep=.2pt}]
  \begin{scope}
    \fill[pointblue!13,bluehatch] (2.34,2) rectangle (3,3);
    \fill[emptycell] (2,2) rectangle (2.34,3);
    \fill[pointred!13,redhatch] (1,1) rectangle (2,1.47);
    \fill[emptycell] (1,1.47) rectangle (2,2);
    \fill[pointgreen!10,greenhatch] (2,1) rectangle (3,2);
    \path[dangerrect] (.03,1.58) rectangle (2.23,3.97);
    \drawquartilegrid
    \node[bluepoint,label=above right:$q_1$] (g1) at (2.34,2.63) {};
    \node[redpoint,label=below left:$q_2$] (j1) at (1.63,1.47) {};
    \node[greensplit,label=below right:$q_3$] (k1) at (2.58,1.42) {};
    \draw[avoidarrow,pointblue] (g1)--+(-.43,0);
    \draw[avoidarrow,pointred] (j1)--+(0,.43);
    \draw[avoidarrow,pointgreen] (k1)--+(-.32,.32);
    \node at (2,-.43) {(a) corner $P$};
  \end{scope}
  \begin{scope}[xshift=3.72cm]
    \fill[pointblue!13,bluehatch] (1,2) rectangle (1.66,3);
    \fill[emptycell] (1.66,2) rectangle (2,3);
    \fill[pointred!13,redhatch] (2,1) rectangle (3,1.43);
    \fill[emptycell] (2,1.43) rectangle (3,2);
    \fill[pointgreen!10,greenhatch] (1,1) rectangle (2,2);
    \path[dangerrect] (1.77,1.55) rectangle (3.97,3.97);
    \drawquartilegrid
    \node[bluepoint,label=above left:$q_1$] (f2) at (1.66,2.70) {};
    \node[redpoint,label=below right:$q_2$] (k2) at (2.58,1.43) {};
    \node[greensplit,label=below left:$q_3$] (j2) at (1.42,1.42) {};
    \draw[avoidarrow,pointblue] (f2)--+(.43,0);
    \draw[avoidarrow,pointred] (k2)--+(0,.43);
    \draw[avoidarrow,pointgreen] (j2)--+(.32,.32);
    \node at (2,-.43) {(b) corner $M$};
  \end{scope}
  \begin{scope}[xshift=7.44cm]
    \fill[pointblue!13,bluehatch] (1,2.51) rectangle (2,3);
    \fill[emptycell] (1,2) rectangle (2,2.51);
    \fill[pointred!13,redhatch] (2.34,1) rectangle (3,2);
    \fill[emptycell] (2,1) rectangle (2.34,2);
    \fill[pointgreen!10,greenhatch] (2,2) rectangle (3,3);
    \path[dangerrect] (.03,.03) rectangle (2.23,2.40);
    \drawquartilegrid
    \node[bluepoint,label=above left:$q_1$] (f3) at (1.48,2.51) {};
    \node[redpoint,label=below right:$q_2$] (k3) at (2.34,1.36) {};
    \node[greensplit,label=above right:$q_3$] (g3) at (2.58,2.58) {};
    \draw[avoidarrow,pointblue] (f3)--+(0,-.43);
    \draw[avoidarrow,pointred] (k3)--+(-.43,0);
    \draw[avoidarrow,pointgreen] (g3)--+(-.32,-.32);
    \node at (2,-.43) {(c) corner $D$};
  \end{scope}
  \begin{scope}[xshift=11.16cm]
    \fill[pointblue!13,bluehatch] (2,2.51) rectangle (3,3);
    \fill[emptycell] (2,2) rectangle (3,2.51);
    \fill[pointred!13,redhatch] (1,1) rectangle (1.66,2);
    \fill[emptycell] (1.66,1) rectangle (2,2);
    \fill[pointgreen!10,greenhatch] (1,2) rectangle (2,3);
    \path[dangerrect] (1.77,.03) rectangle (3.97,2.40);
    \drawquartilegrid
    \node[bluepoint,label=above right:$q_1$] (g4) at (2.58,2.51) {};
    \node[redpoint,label=below left:$q_2$] (j4) at (1.66,1.36) {};
    \node[greensplit,label=above left:$q_3$] (f4) at (1.42,2.58) {};
    \draw[avoidarrow,pointblue] (g4)--+(0,-.43);
    \draw[avoidarrow,pointred] (j4)--+(.43,0);
    \draw[avoidarrow,pointgreen] (f4)--+(.32,-.32);
    \node at (2,-.43) {(d) corner $A$};
  \end{scope}
\end{tikzpicture}

%% file: figures/strong-net-opposite-center.tex
\begin{tikzpicture}[
  x=.792cm,y=.792cm,font=\scriptsize,
  bluesplit/.append style={inner sep=1.9pt},
  redsplit/.append style={inner sep=1.9pt},
  greensplit/.append style={inner sep=1.9pt},
  avoidarrow/.append style={line width=.62pt},
  every label/.append style={font=\tiny,inner sep=.2pt}]
  \begin{scope}
    \path[emptycell] (2,2) rectangle (3,3);
    \path[emptycell] (1,1) rectangle (2,2);
    \fill[pointblue!10,bluehatch] (1,2) rectangle (2,3);
    \fill[pointred!10,redhatch] (2,1) rectangle (3,2);
    \path[focuszone] (2,3) rectangle (4,4);
    \path[focuszone] (3,2) rectangle (4,3);
    \path[dangerrect] (1.39,1.39) rectangle (3.19,3.97);
    \drawquartilegrid
    \node[bluesplit,label=above left:$q_1$] (fo1) at (1.28,2.66) {};
    \node[redsplit,label=below right:$q_2$] (ko1) at (2.72,1.28) {};
    \node[greensplit,label=above right:$q_3$] (qo1) at (3.30,3.02) {};
    \draw[avoidarrow,pointblue] (fo1)--+(.34,-.34);
    \draw[avoidarrow,pointred] (ko1)--+(-.34,.34);
    \draw[avoidarrow,pointgreen] (qo1)--+(-.34,-.34);
    \node at (2,-.43) {(a) west of $q_3$};
  \end{scope}
  \begin{scope}[xshift=5.0cm]
    \path[emptycell] (2,2) rectangle (3,3);
    \path[emptycell] (1,1) rectangle (2,2);
    \fill[pointblue!10,bluehatch] (1,2) rectangle (2,3);
    \fill[pointred!10,redhatch] (2,1) rectangle (3,2);
    \path[focuszone] (2,3) rectangle (4,4);
    \path[focuszone] (3,2) rectangle (4,3);
    \path[dangerrect] (1.39,1.39) rectangle (3.97,2.91);
    \drawquartilegrid
    \node[bluesplit,label=above left:$q_1$] (fo2) at (1.28,2.66) {};
    \node[redsplit,label=below right:$q_2$] (ko2) at (2.72,1.28) {};
    \node[greensplit,label=above right:$q_3$] (qo2) at (3.30,3.02) {};
    \draw[avoidarrow,pointblue] (fo2)--+(.34,-.34);
    \draw[avoidarrow,pointred] (ko2)--+(-.34,.34);
    \draw[avoidarrow,pointgreen] (qo2)--+(-.34,-.34);
    \node at (2,-.43) {(b) south of $q_3$};
  \end{scope}
\end{tikzpicture}

%% file: figures/strong-net-outward.tex
\begin{tikzpicture}[
  x=.55cm,y=.55cm,font=\scriptsize,
  bluesplit/.append style={inner sep=1.45pt},
  redsplit/.append style={inner sep=1.45pt},
  greensplit/.append style={inner sep=1.45pt},
  avoidarrow/.append style={line width=.53pt},
  every label/.append style={font=\tiny,inner sep=.15pt}]
  \begin{scope}
    \path[emptycell] (2,1) rectangle (3,2);
    \fill[pointblue!10,bluehatch] (1,2) rectangle (2,3);
    \fill[pointred!10,redhatch] (3,1) rectangle (4,3);
    \fill[pointgreen!10,greenhatch] (1,0) rectangle (3,1);
    \path[dangerrect] (1.45,.49) rectangle (3.44,3.97);
    \drawquartilegrid
    \node[bluesplit,label=above left:$q_1$] (oa1) at (1.34,2.67) {};
    \node[redsplit,label=right:$q_2$] (ox1) at (3.55,2.43) {};
    \node[greensplit,label=below:$q_3$] (oy1) at (2.66,.38) {};
    \draw[avoidarrow,pointblue] (oa1)--+(.32,-.32);
    \draw[avoidarrow,pointred] (ox1)--+(-.32,-.32);
    \draw[avoidarrow,pointgreen] (oy1)--+(.32,.32);
    \node at (2,-.47) {$R_1$};
  \end{scope}
  \begin{scope}[xshift=2.75cm]
    \path[emptycell] (2,1) rectangle (3,2);
    \fill[pointblue!10,bluehatch] (1,2) rectangle (2,3);
    \fill[pointred!10,redhatch] (3,1) rectangle (4,3);
    \fill[pointgreen!10,greenhatch] (1,0) rectangle (3,1);
    \path[dangerrect] (1.45,1.41) rectangle (3.97,3.97);
    \drawquartilegrid
    \node[bluesplit,label=above left:$q_1$] (oa2) at (1.34,2.67) {};
    \node[redsplit,label=right:$q_2$] (ox2) at (3.48,1.30) {};
    \node[greensplit,label=below:$q_3$] (oy2) at (1.67,.36) {};
    \draw[avoidarrow,pointblue] (oa2)--+(.32,-.32);
    \draw[avoidarrow,pointred] (ox2)--+(-.32,-.32);
    \draw[avoidarrow,pointgreen] (oy2)--+(.32,.32);
    \node at (2,-.47) {$R_2$};
  \end{scope}
  \begin{scope}[xshift=5.50cm]
    \path[emptycell] (2,1) rectangle (3,2);
    \fill[pointblue!10,bluehatch] (1,2) rectangle (2,3);
    \fill[pointred!10,redhatch] (3,1) rectangle (4,3);
    \fill[pointgreen!10,greenhatch] (1,0) rectangle (3,1);
    \path[dangerrect] (1.81,.03) rectangle (3.97,2.56);
    \drawquartilegrid
    \node[bluesplit,label=above left:$q_1$] (oa3) at (1.34,2.67) {};
    \node[redsplit,label=right:$q_2$] (ox3) at (3.52,2.74) {};
    \node[greensplit,label=below:$q_3$] (oy3) at (1.70,.43) {};
    \draw[avoidarrow,pointblue] (oa3)--+(.32,-.32);
    \draw[avoidarrow,pointred] (ox3)--+(-.32,-.32);
    \draw[avoidarrow,pointgreen] (oy3)--+(.32,.32);
    \node at (2,-.47) {$R_3$};
  \end{scope}
  \begin{scope}[xshift=8.25cm]
    \path[emptycell] (2,1) rectangle (3,2);
    \fill[pointblue!10,bluehatch] (1,2) rectangle (2,3);
    \fill[pointred!10,redhatch] (3,1) rectangle (4,3);
    \fill[pointgreen!10,greenhatch] (1,0) rectangle (3,1);
    \path[dangerrect] (.03,.03) rectangle (2.57,2.56);
    \drawquartilegrid
    \node[bluesplit,label=above left:$q_1$] (oa4) at (1.34,2.67) {};
    \node[redsplit,label=right:$q_2$] (ox4) at (3.47,2.42) {};
    \node[greensplit,label=below:$q_3$] (oy4) at (2.68,.37) {};
    \draw[avoidarrow,pointblue] (oa4)--+(.32,-.32);
    \draw[avoidarrow,pointred] (ox4)--+(-.32,-.32);
    \draw[avoidarrow,pointgreen] (oy4)--+(.32,.32);
    \node at (2,-.47) {$R_4$};
  \end{scope}
  \begin{scope}[xshift=11.0cm]
    \path[emptycell] (2,1) rectangle (3,2);
    \fill[pointblue!10,bluehatch] (1,2) rectangle (2,3);
    \fill[pointred!10,redhatch] (3,1) rectangle (4,3);
    \fill[pointgreen!10,greenhatch] (1,0) rectangle (3,1);
    \path[dangerrect] (.03,.48) rectangle (3.97,2.56);
    \drawquartilegrid
    \node[bluesplit,label=above left:$q_1$] (oa5) at (1.34,2.67) {};
    \node[redsplit,label=right:$q_2$] (ox5) at (3.50,2.73) {};
    \node[greensplit,label=below:$q_3$] (oy5) at (2.68,.37) {};
    \draw[avoidarrow,pointblue] (oa5)--+(.32,-.32);
    \draw[avoidarrow,pointred] (ox5)--+(-.32,-.32);
    \draw[avoidarrow,pointgreen] (oy5)--+(.32,.32);
    \node at (2,-.47) {$R_5$};
  \end{scope}
\end{tikzpicture}

%% file: figures/strong-net-three-center.tex
\begin{tikzpicture}[x=.792cm,y=.792cm,font=\scriptsize,
  bluesplit/.append style={inner sep=1.9pt},
  redpoint/.append style={inner sep=1.9pt},
  greenpoint/.append style={inner sep=1.9pt},
  avoidarrow/.append style={line width=.62pt},
  every label/.append style={font=\tiny,inner sep=.2pt}]
  \path[emptycell] (2,1) rectangle (3,2);
  \fill[pointblue!10,bluehatch] (1,2) rectangle (2,3);
  \fill[pointred!13,redhatch] (2,2.45) rectangle (3,3);
  \fill[emptycell] (2,2) rectangle (3,2.45);
  \fill[pointgreen!13,greenhatch] (1,1) rectangle (1.74,2);
  \fill[emptycell] (1.74,1) rectangle (2,2);
  \path[dangerrect] (1.85,.03) rectangle (3.97,2.34);
  \drawquartilegrid
  \node[bluesplit,label=above left:$q_1$] (tcF) at (1.30,2.67) {};
  \node[redpoint,label=above right:$q_2$] (tcG) at (2.62,2.45) {};
  \node[greenpoint,label=below left:$q_3$] (tcJ) at (1.74,1.38) {};
  \draw[avoidarrow,pointblue] (tcF)--+(.31,-.31);
  \draw[avoidarrow,pointred] (tcG)--+(0,-.44);
  \draw[avoidarrow,pointgreen] (tcJ)--+(.44,0);
  \node at (2,-.42) {$R$};
\end{tikzpicture}

%% file: figures/strong-net-adjacent-center.tex
\begin{tikzpicture}[
  x=.66cm,y=.66cm,font=\scriptsize,
  bluesplit/.append style={inner sep=1.6pt},
  redpoint/.append style={inner sep=1.6pt},
  greenpoint/.append style={inner sep=1.6pt},
  avoidarrow/.append style={line width=.56pt},
  every label/.append style={font=\tiny,inner sep=.15pt}]
  \begin{scope}
    \path[emptycell] (1,1) rectangle (3,2);
    \fill[pointblue!10,bluehatch] (1,2) rectangle (2,3);
    \fill[pointred!13,redhatch] (2,2.43) rectangle (3,3);
    \fill[emptycell] (2,2) rectangle (3,2.43);
    \fill[pointgreen!13,greenhatch] (1,0) rectangle (3,.54);
    \fill[emptycell] (1,.54) rectangle (3,1);
    \path[dangerrect] (1.44,.03) rectangle (3.97,2.32);
    \drawquartilegrid
    \node[bluesplit,label=above left:$q_1$] (ac1f) at (1.33,2.66) {};
    \node[redpoint,label=above right:$q_2$] (ac1g) at (2.65,2.43) {};
    \node[greenpoint,label=below:$q_3$] (ac1y) at (1.18,.54) {};
    \draw[avoidarrow,pointblue] (ac1f)--+(.32,-.32);
    \draw[avoidarrow,pointred] (ac1g)--+(0,-.42);
    \draw[avoidarrow,pointgreen] (ac1y)--+(0,.42);
    \node at (2,-.44) {$R_1$};
  \end{scope}
  \begin{scope}[xshift=3.72cm]
    \path[emptycell] (1,1) rectangle (3,2);
    \fill[pointblue!10,bluehatch] (1,2) rectangle (2,3);
    \fill[pointred!13,redhatch] (2,2.43) rectangle (3,3);
    \fill[emptycell] (2,2) rectangle (3,2.43);
    \fill[pointgreen!13,greenhatch] (1,0) rectangle (3,.54);
    \fill[emptycell] (1,.54) rectangle (3,1);
    \path[dangerrect] (1.83,.03) rectangle (3.97,2.32);
    \drawquartilegrid
    \node[bluesplit,label=above left:$q_1$] (ac2f) at (1.33,2.66) {};
    \node[redpoint,label=above right:$q_2$] (ac2g) at (2.65,2.43) {};
    \node[greenpoint,label=below:$q_3$] (ac2y) at (1.72,.54) {};
    \draw[avoidarrow,pointblue] (ac2f)--+(.32,-.32);
    \draw[avoidarrow,pointred] (ac2g)--+(0,-.42);
    \draw[avoidarrow,pointgreen] (ac2y)--+(0,.42);
    \node at (2,-.44) {$R_2$};
  \end{scope}
  \begin{scope}[xshift=7.44cm]
    \path[emptycell] (1,1) rectangle (3,2);
    \fill[pointblue!10,bluehatch] (1,2) rectangle (2,3);
    \fill[pointred!13,redhatch] (2,2.43) rectangle (3,3);
    \fill[emptycell] (2,2) rectangle (3,2.43);
    \fill[pointgreen!13,greenhatch] (1,0) rectangle (3,.54);
    \fill[emptycell] (1,.54) rectangle (3,1);
    \path[dangerrect] (.03,.65) rectangle (2.54,2.55);
    \drawquartilegrid
    \node[bluesplit,label=above left:$q_1$] (ac3f) at (1.33,2.66) {};
    \node[redpoint,label=above right:$q_2$] (ac3g) at (2.65,2.43) {};
    \node[greenpoint,label=below:$q_3$] (ac3y) at (1.72,.54) {};
    \draw[avoidarrow,pointblue] (ac3f)--+(.32,-.32);
    \draw[avoidarrow,pointred] (ac3g)--+(0,-.42);
    \draw[avoidarrow,pointgreen] (ac3y)--+(0,.42);
    \node at (2,-.44) {$R_3$};
  \end{scope}
  \begin{scope}[xshift=11.16cm]
    \path[emptycell] (1,1) rectangle (3,2);
    \fill[pointblue!10,bluehatch] (1,2) rectangle (2,3);
    \fill[pointred!13,redhatch] (2,2.43) rectangle (3,3);
    \fill[emptycell] (2,2) rectangle (3,2.43);
    \fill[pointgreen!13,greenhatch] (1,0) rectangle (3,.54);
    \fill[emptycell] (1,.54) rectangle (3,1);
    \path[dangerrect] (.03,.65) rectangle (3.97,2.32);
    \drawquartilegrid
    \node[bluesplit,label=above left:$q_1$] (ac4f) at (1.33,2.66) {};
    \node[redpoint,label=above right:$q_2$] (ac4g) at (2.65,2.43) {};
    \node[greenpoint,label=below:$q_3$] (ac4y) at (1.72,.54) {};
    \draw[avoidarrow,pointblue] (ac4f)--+(.32,-.32);
    \draw[avoidarrow,pointred] (ac4g)--+(0,-.42);
    \draw[avoidarrow,pointgreen] (ac4y)--+(0,.42);
    \node at (2,-.44) {$R_4$};
  \end{scope}
\end{tikzpicture}

%% file: figures/strong-net-se-anchor.tex
\begin{tikzpicture}[
  x=.792cm,y=.792cm,font=\scriptsize,
  bluepoint/.append style={inner sep=1.9pt},
  redpoint/.append style={inner sep=1.9pt},
  greenpoint/.append style={inner sep=1.9pt},
  every label/.append style={font=\tiny,inner sep=.2pt}]
  \begin{scope}
    \path[emptycell] (1,1) rectangle (3,2);
    \path[emptycell] (2,2) rectangle (3,3);
    \fill[pointblue!9,bluehatch] (1,2) rectangle (2,3);
    \fill[emptycell] (1.78,2) rectangle (2,3);
    \fill[emptycell] (1,2) rectangle (2,2.20);
    \drawquartilegrid
    \draw[guide,pointblue] (1.78,2)--(1.78,3);
    \draw[guide,pointred] (1,2.20)--(2,2.20);
    \node[bluepoint,label=above left:$q_1$]
      at (1.78,2.72) {};
    \node[redpoint,label=above right:$q_2$]
      at (1.30,2.20) {};
    \node[align=center] at (2,-.48)
      {(a) $q_1=\East(F)$ and $q_2=\South(F)$};
  \end{scope}
  \begin{scope}[xshift=5.25cm]
    \path[emptycell] (1,1) rectangle (3,2);
    \path[emptycell] (2,2) rectangle (3,3);
    \fill[pointblue!12,bluehatch] (3.12,1) rectangle (4,2.88);
    \fill[pointred!12,redhatch] (1.08,0) rectangle (4,1.26);
    \drawquartilegrid
    \node[greenpoint,label=above left:$q_3$] at (3.05,1.38) {};
    \node[text=pointblue!75!black,font=\tiny] at (3.57,2.43) {$Z_E$};
    \node[text=pointred!78!black,font=\tiny] at (2.30,.47) {$Z_S$};
    \node at (2,-.42) {(b) the southeast point};
  \end{scope}
\end{tikzpicture}

%% file: figures/strong-net-one-center-easy.tex
\begin{tikzpicture}[
  x=.792cm,y=.792cm,font=\scriptsize,
  greenpoint/.append style={inner sep=1.9pt},
  every label/.append style={font=\tiny,inner sep=.2pt}]
  \path[emptycell] (1,1) rectangle (3,2);
  \path[emptycell] (2,2) rectangle (3,3);
  \fill[pointblue!13,bluehatch] (3.18,1) rectangle (4,3);
  \fill[emptycell] (3,1) rectangle (3.18,3);
  \fill[pointred!11,redhatch] (1,0) rectangle (4,1);
  \drawquartilegrid
  \draw[guide,pointgreen] (3.18,1)--(3.18,3);
  \node[greenpoint,label=below left:$q_3$]
    at (3.18,1.42) {};
  \node[text=pointblue!75!black,font=\tiny] at (3.7,2.35) {$Z_E$};
  \node[text=pointred!78!black,font=\tiny] at (2.6,.33) {$Z_S$};
\end{tikzpicture}

%% file: figures/strong-net-one-center-rich.tex
\begin{tikzpicture}[
  x=.66cm,y=.66cm,font=\scriptsize,
  bluesplit/.append style={inner sep=1.6pt},
  redpoint/.append style={inner sep=1.6pt},
  greenpoint/.append style={inner sep=1.6pt},
  avoidarrow/.append style={line width=.56pt},
  every label/.append style={font=\tiny,inner sep=.15pt}]
  \begin{scope}
    \path[emptycell] (1,1) rectangle (3,2);
    \path[emptycell] (2,2) rectangle (3,3);
    \fill[pointblue!10,bluehatch] (1,2) rectangle (2,3);
    \fill[pointred!12,redhatch] (3.32,1) rectangle (4,3);
    \fill[emptycell] (3,1) rectangle (3.32,3);
    \fill[pointgreen!12,greenhatch] (1,0) rectangle (3,.52);
    \fill[emptycell] (1,.52) rectangle (3,1);
    \fill[pointred!20,opacity=.65,redhatch] (3.32,0) rectangle (4,1);
    \fill[pointgreen!20,opacity=.65,greenhatch] (3,0) rectangle (4,.52);
    \path[dangerrect] (1.44,.03) rectangle (3.21,3.97);
    \drawquartilegrid
    \draw[guide,pointred] (3.32,0)--(3.32,1);
    \draw[guide,pointgreen] (3,.52)--(4,.52);
    \node[bluesplit,label=above left:$q_1$] (or1f) at (1.33,2.67) {};
    \node[redpoint,label=right:$q_2$] (or1u) at (3.32,2.84) {};
    \node[greenpoint,label=below:$q_3$] (or1v) at (1.18,.52) {};
    \draw[avoidarrow,pointblue] (or1f)--+(.38,0);
    \draw[avoidarrow,pointred] (or1u)--+(-.42,0);
    \draw[avoidarrow,pointgreen] (or1v)--+(0,.42);
    \node[font=\scriptsize] at (3.68,.76) {$\rho$};
    \node[font=\scriptsize] at (3.14,.24) {$\eta$};
    \node at (2,-.44) {$R_1$};
  \end{scope}
  \begin{scope}[xshift=3.72cm]
    \path[emptycell] (1,1) rectangle (3,2);
    \path[emptycell] (2,2) rectangle (3,3);
    \fill[pointblue!10,bluehatch] (1,2) rectangle (2,3);
    \fill[pointred!12,redhatch] (3.32,1) rectangle (4,3);
    \fill[emptycell] (3,1) rectangle (3.32,3);
    \fill[pointgreen!12,greenhatch] (1,0) rectangle (3,.52);
    \fill[emptycell] (1,.52) rectangle (3,1);
    \fill[pointred!20,opacity=.65,redhatch] (3.32,0) rectangle (4,1);
    \fill[pointgreen!20,opacity=.65,greenhatch] (3,0) rectangle (4,.52);
    \path[dangerrect] (1.44,.03) rectangle (3.97,2.73);
    \drawquartilegrid
    \draw[guide,pointred] (3.32,0)--(3.32,1);
    \draw[guide,pointgreen] (3,.52)--(4,.52);
    \node[bluesplit,label=above left:$q_1$] (or2f) at (1.33,2.67) {};
    \node[redpoint,label=right:$q_2$] (or2u) at (3.32,2.84) {};
    \node[greenpoint,label=below:$q_3$] (or2v) at (1.18,.52) {};
    \draw[avoidarrow,pointblue] (or2f)--+(.38,0);
    \draw[avoidarrow,pointred] (or2u)--+(0,-.42);
    \draw[avoidarrow,pointgreen] (or2v)--+(0,.42);
    \node at (2,-.44) {$R_2$};
  \end{scope}
  \begin{scope}[xshift=7.44cm]
    \path[emptycell] (1,1) rectangle (3,2);
    \path[emptycell] (2,2) rectangle (3,3);
    \fill[pointblue!10,bluehatch] (1,2) rectangle (2,3);
    \fill[pointred!12,redhatch] (3.32,1) rectangle (4,3);
    \fill[emptycell] (3,1) rectangle (3.32,3);
    \fill[pointgreen!12,greenhatch] (1,0) rectangle (3,.52);
    \fill[emptycell] (1,.52) rectangle (3,1);
    \fill[pointred!20,opacity=.65,redhatch] (3.32,0) rectangle (4,1);
    \fill[pointgreen!20,opacity=.65,greenhatch] (3,0) rectangle (4,.52);
    \path[dangerrect] (.03,.63) rectangle (3.97,2.56);
    \drawquartilegrid
    \draw[guide,pointred] (3.32,0)--(3.32,1);
    \draw[guide,pointgreen] (3,.52)--(4,.52);
    \node[bluesplit,label=above left:$q_1$] (or3f) at (1.33,2.67) {};
    \node[redpoint,label=right:$q_2$] (or3u) at (3.32,2.84) {};
    \node[greenpoint,label=below:$q_3$] (or3v) at (1.18,.52) {};
    \draw[avoidarrow,pointblue] (or3f)--+(0,-.42);
    \draw[avoidarrow,pointred] (or3u)--+(-.42,0);
    \draw[avoidarrow,pointgreen] (or3v)--+(0,.42);
    \node at (2,-.44) {$R_3$};
  \end{scope}
  \begin{scope}[xshift=11.16cm]
    \path[emptycell] (1,1) rectangle (3,2);
    \path[emptycell] (2,2) rectangle (3,3);
    \fill[pointblue!10,bluehatch] (1,2) rectangle (2,3);
    \fill[pointred!12,redhatch] (3.32,1) rectangle (4,3);
    \fill[emptycell] (3,1) rectangle (3.32,3);
    \fill[pointgreen!12,greenhatch] (1,0) rectangle (3,.52);
    \fill[emptycell] (1,.52) rectangle (3,1);
    \fill[pointred!20,opacity=.65,redhatch] (3.32,0) rectangle (4,1);
    \fill[pointgreen!20,opacity=.65,greenhatch] (3,0) rectangle (4,.52);
    \path[dangerrect] (1.29,.03) rectangle (3.97,2.56);
    \drawquartilegrid
    \draw[guide,pointred] (3.32,0)--(3.32,1);
    \draw[guide,pointgreen] (3,.52)--(4,.52);
    \node[bluesplit,label=above left:$q_1$] (or4f) at (1.33,2.67) {};
    \node[redpoint,label=right:$q_2$] (or4u) at (3.32,2.84) {};
    \node[greenpoint,label=below:$q_3$] (or4v) at (1.18,.52) {};
    \draw[avoidarrow,pointblue] (or4f)--+(0,-.42);
    \draw[avoidarrow,pointred] (or4u)--+(-.42,0);
    \draw[avoidarrow,pointgreen] (or4v)--+(.42,0);
    \node at (2,-.44) {$R_4$};
  \end{scope}
\end{tikzpicture}

%% file: figures/strong-net-one-center-poor.tex
\begin{tikzpicture}[
  x=.792cm,y=.792cm,font=\scriptsize,
  bluepoint/.append style={inner sep=1.85pt},
  redpoint/.append style={inner sep=1.85pt},
  greenpoint/.append style={inner sep=1.85pt},
  every label/.append style={font=\tiny,inner sep=.2pt}]
  \path[emptycell] (1,1) rectangle (3,2);
  \path[emptycell] (2,2) rectangle (3,3);
  \fill[pointred!13,redhatch] (3.30,1) rectangle (4,3);
  \fill[emptycell] (3,1) rectangle (3.30,3);
  \fill[pointblue!12,bluehatch] (1,0) rectangle (3,.55);
  \fill[emptycell] (1,.55) rectangle (3,1);
  \path[focuszone] (3,.55) rectangle (3.30,1);
  \drawquartilegrid
  \draw[guide,pointred] (3.30,0)--(3.30,3);
  \draw[guide,pointblue] (1,.55)--(4,.55);
  \node[greenpoint,label=left:$q_3$] at (3.12,.82) {};
  \node[redpoint,label=above left:] at (3.30,2.45) {};
  \node[bluepoint,label=above:] at (1.43,.55) {};
  \node[text=pointred!78!black,font=\tiny] at (3.64,2.25) {$X$};
  \node[text=pointblue!75!black,font=\tiny] at (2.38,.28) {$Y$};
\end{tikzpicture}

%% file: figures/strong-net-lower-bound-k2.tex
\begin{tikzpicture}[
    x = 0.35cm,
    y = 0.35cm,
    frame/.style = {draw = black!45, line width = 0.4pt},
    diagonal/.style = {densely dashed, draw = black!30, line width = 0.4pt},
    point/.style = {circle, fill = black, inner sep = 1.1pt}
]
    \draw[frame] (0.5, 0.5) rectangle (14.5, 14.5);
    \draw[diagonal] (0.5, 0.5) -- (14.5, 14.5);

    \foreach \i/\j in {1/7, 2/8, 3/9, 4/12, 5/13, 6/14, 7/1, 8/2, 9/3, 10/11, 11/10, 12/4, 13/5, 14/6} {
        \node[point] at (\i, \j) {};
    }
\end{tikzpicture}

%% file: figures/strong-net-lower-bound-k3.tex
\begin{tikzpicture}[
    x = 0.19cm,
    y = 0.19cm,
    frame/.style = {draw = black!45, line width = 0.4pt},
    diagonal/.style = {densely dashed, draw = black!30, line width = 0.4pt},
    point/.style = {circle, fill = black, inner sep = 1.1pt}
]
    \draw[frame] (0.5, 0.5) rectangle (26.5, 26.5);
    \draw[diagonal] (0.5, 0.5) -- (26.5, 26.5);

    \foreach \i/\j in {1/12, 2/13, 3/16, 4/9, 5/7, 6/19, 7/5, 8/21, 9/22, 10/23, 11/24, 12/1, 13/2, 14/25, 15/26, 16/3, 17/4, 18/20, 19/6, 20/18, 21/8, 22/17, 23/10, 24/11, 25/14, 26/15} {
        \node[point] at (\i, \j) {};
    }
\end{tikzpicture}

%% file: figures/strong-net-lower-bound-k4.tex
\begin{tikzpicture}[
    x = 0.31cm,
    y = 0.31cm,
    frame/.style = {draw = black!45, line width = 0.4pt},
    diagonal/.style = {densely dashed, draw = black!30, line width = 0.4pt},
    point/.style = {circle, fill = black, inner sep = 1.1pt}
]
    \draw[frame] (0.5, 0.5) rectangle (16.5, 16.5);
    \draw[diagonal] (0.5, 0.5) -- (16.5, 16.5);

    \foreach \i/\j in {1/9, 2/10, 3/11, 4/5, 5/4, 6/14, 7/15, 8/16, 9/1, 10/2, 11/3, 12/13, 13/12, 14/6, 15/7, 16/8} {
        \node[point] at (\i, \j) {};
    }
\end{tikzpicture}

%% file: figures/giveaway.tex
\begin{tikzpicture}[
    x = 1cm,
    y = 1cm,
    axis/.style = {->, draw = black!45, line width = 0.35pt},
    bisector/.style = {draw = black!60, line width = 0.55pt},
    hull/.style = {densely dotted, draw = black, line width = 0.85pt},
    support/.style = {densely dashed, draw = black!60, line width = 0.65pt},
    separator/.style = {draw = black, line width = 1pt},
    vector/.style = {-{Latex[length = 2mm]}, draw = black, line width = 0.8pt},
    point/.style = {circle, fill = black, inner sep = 1.2pt},
    every node/.style = {font = \scriptsize}
]
    \begin{scope}
        \begin{scope}
            \clip (-2.6, -1.35) rectangle (1.6, 3);
            \fill[black!9]
                (-2.6, -1.35) --
                (1.6, -1.35) --
                (1.6, 1.5) --
                (0, 1.5) --
                (-1, 0.5) --
                (-2.6, 0.5) -- cycle;
            \draw[bisector]
                (-2.6, 0.5) --
                (-1, 0.5) --
                (0, 1.5) --
                (1.6, 1.5);
            \draw[hull] (-2.6, 1.5) -- (1.6, 1.5);
            \draw[support] (-2.5, 3) -- (-0.325, -1.35);
            \draw[support] (-0.5, 3) -- (1.6, -1.2);
            \draw[separator] (-2.6, 2) -- (1.6, 2);
        \end{scope}

        \draw[axis] (-2.55, 0) -- (1.55, 0);
        \draw[axis] (0, -1.3) -- (0, 2.95);

        \draw[line width = 0.85pt]
            (0, 1) --
            (1, 0) --
            (0, -1) --
            (-1, 0) -- cycle;
        \node at (0.52, -0.52) {$B$};

        \draw[vector] (0, 0) -- (1, 0) node[right] {$u$};
        \draw[vector] (0, 0) -- (-1, 2);
        \node[point, label = {[label distance = 1pt]left:$-u$}] at (-1, 0) {};
        \node[point] at (1, 0) {};
        \node[point, label = {[label distance = 2pt]above left:$b$}] at (-1, 2) {};
        \node[point, label = {[label distance = 1pt]below right:$0$}] at (0, 0) {};

        \node[font = \small] at (-0.5, -1.7) {$\ell_1$};
    \end{scope}

    \begin{scope}[xshift = 5.2cm]
        \begin{scope}
            \clip (-2.6, -1.35) rectangle (1.6, 3);
            \fill[black!9]
                (-2.6, -1.35) --
                (1.6, -1.35) --
                (1.6, 2.5023) --
                (1.4, 2.3275) --
                (1.2, 2.1496) --
                (1, 1.9680) --
                (0.8, 1.7823) --
                (0.6, 1.5937) --
                (0.4, 1.4087) --
                (0.2, 1.2444) --
                (0, 1.1231) --
                (-0.2, 1.0509) --
                (-0.4, 1.0130) --
                (-0.6, 0.9870) --
                (-0.8, 0.9491) --
                (-1, 0.8769) --
                (-1.2, 0.7556) --
                (-1.4, 0.5913) --
                (-1.6, 0.4063) --
                (-1.8, 0.2177) --
                (-2, 0.0320) --
                (-2.2, -0.1496) --
                (-2.4, -0.3275) --
                (-2.6, -0.5023) -- cycle;
            \draw[bisector]
                (-2.6, -0.5023) --
                (-2.4, -0.3275) --
                (-2.2, -0.1496) --
                (-2, 0.0320) --
                (-1.8, 0.2177) --
                (-1.6, 0.4063) --
                (-1.4, 0.5913) --
                (-1.2, 0.7556) --
                (-1, 0.8769) --
                (-0.8, 0.9491) --
                (-0.6, 0.9870) --
                (-0.4, 1.0130) --
                (-0.2, 1.0509) --
                (0, 1.1231) --
                (0.2, 1.2444) --
                (0.4, 1.4087) --
                (0.6, 1.5937) --
                (0.8, 1.7823) --
                (1, 1.9680) --
                (1.2, 2.1496) --
                (1.4, 2.3275) --
                (1.6, 2.5023);
            \draw[hull] (-2.6, -0.3536) -- (1.6, 2.9799);
            \draw[support] (-2.6, 2.6302) -- (-0.6099, -1.35);
            \draw[support] (-0.2151, 3) -- (1.6, -0.6302);
            \draw[separator] (-2.6, 0.7301) -- (0.260, 3);
        \end{scope}

        \draw[axis] (-2.55, 0) -- (1.55, 0);
        \draw[axis] (0, -1.3) -- (0, 2.95);

        \draw[line width = 0.85pt, domain = 0:360, samples = 140, smooth, variable = \t]
            plot (
                {cos(\t) / (abs(cos(\t))^4 + abs(sin(\t))^4)^0.25},
                {sin(\t) / (abs(cos(\t))^4 + abs(sin(\t))^4)^0.25}
            );
        \node at (0.72, -0.7) {$B$};

        \draw[vector] (0, 0) -- (0.9198, 0.7301) node[right] {$u$};
        \draw[vector] (0, 0) -- (-1, 2);
        \node[point, label = {[label distance = 1pt]left:$-u$}] at (-0.9198, -0.7301) {};
        \node[point] at (0.9198, 0.7301) {};
        \node[point, label = {[label distance = 2pt]above left:$b$}] at (-1, 2) {};
        \node[point, label = {[label distance = 1pt]below right:$0$}] at (0, 0) {};

        \node[font = \small] at (-0.5, -1.7) {$\ell_4$};
    \end{scope}

    \begin{scope}[xshift = 10.4cm]
        \begin{scope}
            \clip (-2.6, -1.35) rectangle (1.6, 3);
            \fill[black!9]
                (-2.6, -1.35) --
                (1.6, -1.35) --
                (1.6, 2.6) --
                (0, 1) --
                (-1, 1) --
                (-2.6, -0.6) -- cycle;
            \draw[bisector]
                (-2.6, -0.6) --
                (-1, 1) --
                (0, 1) --
                (1.6, 2.6);
            \draw[hull] (-2.6, -0.6) -- (1, 3);
            \draw[support] (-2.6, 2.2) -- (-0.825, -1.35);
            \draw[support] (0, 3) -- (1.6, -0.2);
            \draw[separator] (-2.6, 0.4) -- (0, 3);
        \end{scope}

        \draw[axis] (-2.55, 0) -- (1.55, 0);
        \draw[axis] (0, -1.3) -- (0, 2.95);

        \draw[line width = 0.85pt] (-1, -1) rectangle (1, 1);
        \node at (0.72, -0.7) {$B$};

        \draw[vector] (0, 0) -- (1, 1) node[right] {$u$};
        \draw[vector] (0, 0) -- (-1, 2);
        \node[point, label = {[label distance = 1pt]left:$-u$}] at (-1, -1) {};
        \node[point] at (1, 1) {};
        \node[point, label = {[label distance = 2pt]above left:$b$}] at (-1, 2) {};
        \node[point, label = {[label distance = 1pt]below right:$0$}] at (0, 0) {};

        \node[font = \small] at (-0.5, -1.7) {$\ell_\infty$};
    \end{scope}
\end{tikzpicture}

%% file: figures/rounding-counterexample.tex
\begin{tikzpicture}[
    x = 0.4cm,
    y = 0.4cm,
    axis/.style = {->, draw = black!45, line width = 0.35pt},
    bisector/.style = {draw = black!60, line width = 0.65pt},
    reference/.style = {densely dotted, draw = black, line width = 0.7pt},
    voter/.style = {circle, fill = black, inner sep = 1.2pt},
    candidate/.style = {rectangle, fill = black, inner sep = 1.5pt},
    netpoint/.style = {rectangle, draw = black, fill = white, line width = 0.7pt, inner sep = 0.8pt, rotate = 45},
    every node/.style = {font = \scriptsize}
]
    \begin{scope}
        \begin{scope}
            \clip (0, 0) rectangle (10, 10);
            \fill[black!9]
                (2.3073, 0) --
                (2.4113, 0.5) --
                (2.5189, 1) --
                (2.6308, 1.5) --
                (2.7475, 2) --
                (2.8702, 2.5) --
                (3, 3) --
                (3.1389, 3.5) --
                (3.2898, 4) --
                (3.4581, 4.5) --
                (3.6603, 5) --
                (3.9647, 5.5) --
                (4.2997, 6) --
                (4.6473, 6.5) --
                (5, 7) --
                (5.3527, 7.5) --
                (5.7003, 8) --
                (6.0353, 8.5) --
                (6.3397, 9) --
                (6.5419, 9.5) --
                (6.7102, 10) --
                (10, 10) --
                (10, 0) -- cycle;
            \draw[reference] (1.5, 0) -- (6.5, 10);
            \draw[bisector]
                (2.3073, 0) --
                (2.4113, 0.5) --
                (2.5189, 1) --
                (2.6308, 1.5) --
                (2.7475, 2) --
                (2.8702, 2.5) --
                (3, 3) --
                (3.1389, 3.5) --
                (3.2898, 4) --
                (3.4581, 4.5) --
                (3.6603, 5) --
                (3.9647, 5.5) --
                (4.2997, 6) --
                (4.6473, 6.5) --
                (5, 7) --
                (5.3527, 7.5) --
                (5.7003, 8) --
                (6.0353, 8.5) --
                (6.3397, 9) --
                (6.5419, 9.5) --
                (6.7102, 10);
        \end{scope}

        \draw[axis] (0, 0) -- (10.25, 0);
        \draw[axis] (0, 0) -- (0, 10.25);

        \node[voter] at (2, 1) {};
        \node[below right] at (2, 1) {$v_1$};
        \node[netpoint] at (4, 5) {};
        \node[above left] at (4, 5) {$q$};
        \node[voter] at (6, 9) {};
        \node[below right] at (6, 9) {$v_2$};
        \node[candidate] at (9, 5) {};
        \node[right] at (9, 5) {$a$};
        \node[candidate] at (1, 9) {};
        \node[above left] at (1, 9) {$b$};

        \node[font = \small] at (5, -0.8) {$\ell_{2 - \delta}$};
    \end{scope}

    \begin{scope}[xshift = 4.6cm]
        \begin{scope}
            \clip (0, 0) rectangle (10, 10);
            \fill[black!9] (1.5, 0) -- (6.5, 10) -- (10, 10) -- (10, 0) -- cycle;
            \draw[bisector] (1.5, 0) -- (6.5, 10) node[pos = 0.35, left] {$L$};
        \end{scope}

        \draw[axis] (0, 0) -- (10.25, 0);
        \draw[axis] (0, 0) -- (0, 10.25);

        \node[voter] at (2, 1) {};
        \node[below right] at (2, 1) {$v_1$};
        \node[netpoint] at (4, 5) {};
        \node[above left] at (4, 5) {$q$};
        \node[voter] at (6, 9) {};
        \node[below right] at (6, 9) {$v_2$};
        \node[candidate] at (9, 5) {};
        \node[right] at (9, 5) {$a$};
        \node[candidate] at (1, 9) {};
        \node[above left] at (1, 9) {$b$};

        \node[font = \small] at (5, -0.8) {$\ell_2$};
    \end{scope}

    \begin{scope}[xshift = 9.2cm]
        \begin{scope}
            \clip (0, 0) rectangle (10, 10);
            \fill[black!9]
                (0.8935, 0) --
                (1.2406, 0.5) --
                (1.5935, 1) --
                (1.9492, 1.5) --
                (2.3046, 2) --
                (2.6561, 2.5) --
                (3, 3) --
                (3.3318, 3.5) --
                (3.6469, 4) --
                (3.9398, 4.5) --
                (4.2039, 5) --
                (4.4329, 5.5) --
                (4.6357, 6) --
                (4.8219, 6.5) --
                (5, 7) --
                (5.1781, 7.5) --
                (5.3643, 8) --
                (5.5671, 8.5) --
                (5.7961, 9) --
                (6.0602, 9.5) --
                (6.3531, 10) --
                (10, 10) --
                (10, 0) -- cycle;
            \draw[reference] (1.5, 0) -- (6.5, 10);
            \draw[bisector]
                (0.8935, 0) --
                (1.2406, 0.5) --
                (1.5935, 1) --
                (1.9492, 1.5) --
                (2.3046, 2) --
                (2.6561, 2.5) --
                (3, 3) --
                (3.3318, 3.5) --
                (3.6469, 4) --
                (3.9398, 4.5) --
                (4.2039, 5) --
                (4.4329, 5.5) --
                (4.6357, 6) --
                (4.8219, 6.5) --
                (5, 7) --
                (5.1781, 7.5) --
                (5.3643, 8) --
                (5.5671, 8.5) --
                (5.7961, 9) --
                (6.0602, 9.5) --
                (6.3531, 10);
        \end{scope}

        \draw[axis] (0, 0) -- (10.25, 0);
        \draw[axis] (0, 0) -- (0, 10.25);

        \node[voter] at (2, 1) {};
        \node[below right] at (2, 1) {$v_1$};
        \node[netpoint] at (4, 5) {};
        \node[above left] at (4, 5) {$q$};
        \node[voter] at (6, 9) {};
        \node[below right] at (6, 9) {$v_2$};
        \node[candidate] at (9, 5) {};
        \node[right] at (9, 5) {$a$};
        \node[candidate] at (1, 9) {};
        \node[above left] at (1, 9) {$b$};

        \node[font = \small] at (5, -0.8) {$\ell_{2 + \delta}$};
    \end{scope}
\end{tikzpicture}

%% file: bibliography.bib
@book{Condorcet85,
  title = {Essai sur l'application de l'analyse {\`a} la probabilit{\'e} des decisions rendues {\`a} la pluralit{\'e} des voix.},
  author = {Condorcet, Nicolas de},
  year = {1785},
  publisher = {Imprimerie Royale}
}

@article{Hotelling29,
  issn = {00130133, 14680297},
  url = {http://www.jstor.org/stable/2224214},
  author = {Harold Hotelling},
  journal = {The Economic Journal},
  number = {153},
  pages = {41--57},
  publisher = {[Royal Economic Society, Wiley]},
  title = {Stability in Competition},
  urldate = {2025-12-15},
  volume = {39},
  year = {1929}
}

@article{Black48,
  title = {On the Rationale of Group Decision-Making},
  author = {Black, Duncan},
  journal = {J. Polit. Econ.},
  publisher = {University of Chicago Press},
  volume = {56},
  number = {1},
  pages = {23--34},
  month = feb,
  year = {1948}
}

@article{Simpson69,
  title = {On Defining Areas of Voter Choice: Professor Tullock on Stable Voting},
  author = {Simpson, Paul B.},
  journal = {The Quarterly Journal of Economics},
  volume = {83},
  number = {3},
  pages = {478--490},
  year = {1969},
  month = aug,
  doi = {10.2307/1880533},
  url = {https://doi.org/10.2307/1880533}
}

@article{HausslerW87,
  title = {Epsilon-Nets and Simplex Range Queries},
  author = {Haussler, David and Welzl, Emo},
  journal = {Discrete \& Computational Geometry},
  volume = {2},
  number = {1},
  pages = {127--151},
  year = {1987},
  doi = {10.1007/BF02187876},
  url = {https://doi.org/10.1007/BF02187876}
}

@article{Durier89,
  title = {Continuous Location Theory Under Majority Rule},
  author = {Durier, Roland},
  journal = {Mathematics of Operations Research},
  volume = {14},
  number = {2},
  pages = {258--274},
  year = {1989},
  month = may,
  doi = {10.1287/moor.14.2.258},
  url = {https://doi.org/10.1287/moor.14.2.258}
}

@article{AlonBFK92,
  title = {Point Selections and Weak $\varepsilon$-Nets for Convex Hulls},
  author = {Alon, Noga and B{\'a}r{\'a}ny, Imre and F{\"u}redi, Zolt{\'a}n and Kleitman, Daniel J.},
  journal = {Combinatorics, Probability and Computing},
  volume = {1},
  number = {3},
  pages = {189--200},
  year = {1992},
  doi = {10.1017/S0963548300000225},
  url = {https://doi.org/10.1017/S0963548300000225}
}

@article{Carrizosa96,
  title = {A Characterization of Halfspace Depth},
  author = {Carrizosa, Emilio},
  journal = {Journal of Multivariate Analysis},
  volume = {58},
  number = {1},
  pages = {21--26},
  year = {1996},
  month = jul,
  publisher = {Elsevier},
  doi = {10.1006/jmva.1996.0037},
  url = {https://doi.org/10.1006/jmva.1996.0037}
}

@article{AhnCCGO04,
  title = {Competitive Facility Location: The {Voronoi} Game},
  author = {Ahn, Hee-Kap and Cheng, Siu-Wing and Cheong, Otfried and Golin, Mordecai and van Oostrum, Ren{\'e}},
  journal = {Theoretical Computer Science},
  volume = {310},
  number = {1--3},
  pages = {457--467},
  year = {2004},
  publisher = {Elsevier},
  doi = {10.1016/j.tcs.2003.09.004},
  url = {https://doi.org/10.1016/j.tcs.2003.09.004}
}

@article{CheongHLM04,
  title = {The One-Round {Voronoi} Game},
  author = {Cheong, Otfried and Har-Peled, Sariel and Linial, Nathan and Matou{\v{s}}ek, Ji{\v{r}}{\'i}},
  journal = {Discrete \& Computational Geometry},
  volume = {31},
  number = {1},
  pages = {125--138},
  year = {2004},
  publisher = {Springer},
  doi = {10.1007/s00454-003-2951-4},
  url = {https://doi.org/10.1007/s00454-003-2951-4}
}

@article{ChawlaRRS06,
  title = {Min--Max Payoffs in a Two-Player Location Game},
  author = {Chawla, Shuchi and Rajan, Uday and Ravi, R. and Sinha, Amitabh},
  journal = {Operations Research Letters},
  volume = {34},
  number = {5},
  pages = {499--507},
  year = {2006},
  publisher = {Elsevier},
  doi = {10.1016/j.orl.2005.10.002},
  url = {https://doi.org/10.1016/j.orl.2005.10.002}
}

@incollection{AgarwalPS08,
  title = {State of the Union (of Geometric Objects)},
  author = {Agarwal, Pankaj K. and Pach, J{\'a}nos and Sharir, Micha},
  booktitle = {Surveys on Discrete and Computational Geometry: Twenty Years Later},
  editor = {Goodman, Jacob E. and Pach, J{\'a}nos and Pollack, Richard},
  series = {Contemporary Mathematics},
  volume = {453},
  pages = {9--48},
  publisher = {American Mathematical Society},
  address = {Providence, RI},
  year = {2008},
  doi = {10.1090/conm/453/08794},
  url = {https://doi.org/10.1090/conm/453/08794}
}

@article{AronovAHLRSS09,
  title = {Small Weak Epsilon-Nets},
  author = {Aronov, Boris and Aurenhammer, Franz and Hurtado, Ferran and Langerman, Stefan and Rappaport, David and Seara, Carlos and Smorodinsky, Shakhar},
  journal = {Computational Geometry},
  volume = {42},
  number = {5},
  pages = {455--462},
  year = {2009},
  month = jul,
  publisher = {Elsevier},
  doi = {10.1016/j.comgeo.2008.02.005},
  url = {https://doi.org/10.1016/j.comgeo.2008.02.005}
}

@article{MustafaR09,
  title = {An Optimal Extension of the Centerpoint Theorem},
  author = {Mustafa, Nabil H. and Ray, Saurabh},
  journal = {Computational Geometry},
  volume = {42},
  number = {6--7},
  pages = {505--510},
  year = {2009},
  month = aug,
  publisher = {Elsevier},
  doi = {10.1016/j.comgeo.2007.10.004},
  url = {https://doi.org/10.1016/j.comgeo.2007.10.004}
}

@article{AronovES10,
  title = {Small-Size $\varepsilon$-Nets for Axis-Parallel Rectangles and Boxes},
  author = {Aronov, Boris and Ezra, Esther and Sharir, Micha},
  journal = {SIAM Journal on Computing},
  volume = {39},
  number = {7},
  pages = {3248--3282},
  year = {2010},
  publisher = {SIAM},
  doi = {10.1137/090762968},
  url = {https://doi.org/10.1137/090762968}
}

@inproceedings{AlonFPT11,
  title = {Sum of Us: Strategyproof Selection from the Selectors},
  author = {Alon, Noga and Fischer, Felix A. and Procaccia, Ariel D. and Tennenholtz, Moshe},
  booktitle = {Proceedings of the 13th Conference on Theoretical Aspects of Rationality and Knowledge},
  editor = {Apt, Krzysztof R.},
  pages = {101--110},
  publisher = {ACM},
  year = {2011},
  doi = {10.1145/2000378.2000390},
  url = {https://doi.org/10.1145/2000378.2000390}
}

@inproceedings{GoelL12,
  title = {Triadic Consensus: A Randomized Algorithm for Voting in a Crowd},
  author = {Goel, Ashish and Lee, David},
  booktitle = {Internet and Network Economics},
  editor = {Goldberg, Paul W.},
  series = {Lecture Notes in Computer Science},
  volume = {7695},
  pages = {434--447},
  publisher = {Springer},
  address = {Berlin, Heidelberg},
  year = {2012},
  doi = {10.1007/978-3-642-35311-6_32},
  url = {https://doi.org/10.1007/978-3-642-35311-6_32}
}

@article{BanikBD13,
  title = {Optimal Strategies for the One-Round Discrete {Voronoi} Game on a Line},
  author = {Banik, Aritra and Bhattacharya, Bhaswar B. and Das, Sandip},
  journal = {Journal of Combinatorial Optimization},
  volume = {26},
  number = {4},
  pages = {655--669},
  year = {2013},
  month = nov,
  publisher = {Springer},
  doi = {10.1007/s10878-011-9447-6},
  url = {https://doi.org/10.1007/s10878-011-9447-6}
}

@article{HolzmanM13,
  title = {Impartial Nominations for a Prize},
  author = {Holzman, Ron and Moulin, Herv{\'e}},
  journal = {Econometrica},
  volume = {81},
  number = {1},
  pages = {173--196},
  year = {2013},
  month = jan,
  doi = {10.3982/ECTA10523},
  url = {https://doi.org/10.3982/ECTA10523}
}

@article{PachT13,
  title = {Tight Lower Bounds for the Size of Epsilon-Nets},
  author = {Pach, J{\'a}nos and Tardos, G{\'a}bor},
  journal = {Journal of the American Mathematical Society},
  volume = {26},
  number = {3},
  pages = {645--658},
  year = {2013},
  month = jul,
  publisher = {American Mathematical Society},
  doi = {10.1090/S0894-0347-2012-00759-0},
  url = {https://doi.org/10.1090/S0894-0347-2012-00759-0}
}

@article{AshokAG14,
  title = {Small Strong Epsilon Nets},
  author = {Ashok, Pradeesha and Azmi, Umair and Govindarajan, Sathish},
  journal = {Computational Geometry},
  volume = {47},
  number = {9},
  pages = {899--909},
  year = {2014},
  publisher = {Elsevier}
}

@article{Elkind15,
  title = {Condorcet Winning Sets},
  author = {Elkind, Edith and Lang, J{\'e}r{\^o}me and Saffidine, Abdallah},
  journal = {Social Choice and Welfare},
  volume = {44},
  number = {3},
  pages = {493--517},
  year = {2015},
  publisher = {Springer},
  doi = {10.1007/s00355-014-0853-4},
  url = {https://doi.org/10.1007/s00355-014-0853-4}
}

@inproceedings{AzizLMRW16,
  title = {Strategyproof Peer Selection: Mechanisms, Analyses, and Experiments},
  author = {Aziz, Haris and Lev, Omer and Mattei, Nicholas and Rosenschein, Jeffrey S. and Walsh, Toby},
  booktitle = {Proceedings of the Thirtieth AAAI Conference on Artificial Intelligence},
  pages = {390--396},
  publisher = {AAAI Press},
  year = {2016},
  doi = {10.1609/aaai.v30i1.10038},
  url = {https://doi.org/10.1609/aaai.v30i1.10038}
}

@article{BanikJCMS16,
  title = {Discrete {Voronoi} Games and $\varepsilon$-Nets, in Two and Three Dimensions},
  author = {Banik, Aritra and {De Carufel}, Jean-Lou and Maheshwari, Anil and Smid, Michiel},
  journal = {Computational Geometry},
  volume = {55},
  pages = {41--58},
  year = {2016},
  issn = {0925-7721},
  doi = {10.1016/j.comgeo.2016.02.002},
  url = {https://www.sciencedirect.com/science/article/pii/S0925772116300086}
}

@inproceedings{GkatzelisHS20,
  title = {Resolving the Optimal Metric Distortion Conjecture},
  author = {Gkatzelis, Vasilis and Halpern, Daniel and Shah, Nisarg},
  booktitle = {2020 IEEE 61st Annual Symposium on Foundations of Computer Science},
  pages = {1427--1438},
  publisher = {IEEE},
  year = {2020},
  doi = {10.1109/FOCS46700.2020.00134},
  url = {https://doi.org/10.1109/FOCS46700.2020.00134}
}

@inproceedings{JiangMW20,
  title = {Approximately Stable Committee Selection},
  author = {Jiang, Zhihao and Munagala, Kamesh and Wang, Kangning},
  booktitle = {Proceedings of the 52nd Annual ACM SIGACT Symposium on Theory of Computing},
  pages = {463--472},
  publisher = {ACM},
  year = {2020},
  doi = {10.1145/3357713.3384238},
  url = {https://doi.org/10.1145/3357713.3384238}
}

@article{Rubin22,
  title = {An Improved Bound for Weak Epsilon-Nets in the Plane},
  author = {Rubin, Natan},
  journal = {Journal of the ACM},
  volume = {69},
  number = {5},
  pages = {32:1--32:35},
  year = {2022},
  month = oct,
  publisher = {ACM},
  doi = {10.1145/3555985},
  url = {https://doi.org/10.1145/3555985}
}

@inproceedings{BergW23,
  title = {Improved Bounds for Discrete Voronoi Games},
  author = {{de Berg}, Mark and {van Wordragen}, Geert},
  editor = {Morin, Pat and Suri, Subhash},
  booktitle = {Algorithms and Data Structures},
  pages = {291--308},
  year = {2023},
  publisher = {Springer Nature Switzerland},
  address = {Cham},
  isbn = {978-3-031-38906-1}
}

@inproceedings{CharikarLRVW25,
  title = {Six Candidates Suffice to Win a Voter Majority},
  author = {Charikar, Moses and Lassota, Alexandra and Ramakrishnan, Prasanna and Vetta, Adrian and Wang, Kangning},
  booktitle = {Proceedings of the 57th Annual ACM Symposium on Theory of Computing},
  pages = {1590--1601},
  year = {2025}
}

@misc{LassotaVV26,
  title = {The Condorcet Dimension of Metric Spaces},
  author = {Lassota, Alexandra and Vetta, Adrian and {von Stengel}, Bernhard},
  year = {2026},
  eprint = {2410.09201v5},
  archivePrefix = {arXiv},
  primaryClass = {cs.GT},
  url = {https://arxiv.org/abs/2410.09201v5}
}

@inproceedings{SongNL26,
  title = {A Few Good Choices},
  author = {Song, Haoyu and Nguyen, Thanh and Lin, Young-San},
  booktitle = {Proceedings of the 2026 Annual ACM-SIAM Symposium on Discrete Algorithms (SODA)},
  pages = {4861--4874},
  doi = {10.1137/1.9781611978971.175},
  url = {https://epubs.siam.org/doi/abs/10.1137/1.9781611978971.175},
  eprint = {https://epubs.siam.org/doi/pdf/10.1137/1.9781611978971.175},
  year = {2026}
}

@misc{ZilbersteinBLM26,
  title = {Is Four Enough? Automated Reasoning Approaches and Dual Bounds for Condorcet Dimensions of Elections},
  author = {Zilberstein, Itai and Berker, Ratip Emin and Li, George and Martins, Ruben},
  year = {2026},
  eprint = {2604.19851},
  archivePrefix = {arXiv},
  primaryClass = {cs.GT},
  url = {https://arxiv.org/abs/2604.19851}
}
